\numberwithin{equation}{section}
\newcommand{\alg}[1]{\mathcal{#1}}
\newcommand{\mc}[1]{\mathcal{#1}}
\newcommand{\mb}[1]{\mathbb{#1}}
\newcommand{\mr}[1]{\mathrm{#1}}
\newcommand{\qd}[1]{\mathcal{D}(#1)}
\newcommand{\qdg}{\qd{G}}
\newcommand{\str}{\operatorname{star}}
\newcommand{\plaq}{\operatorname{plaq}}
\newcommand{\supp}{\operatorname{supp}}
\newcommand{\bonds}{\Gamma}
\newcommand{\wlim}{\operatorname{w-lim}}
\newcommand{\ket}[1]{\ensuremath{\left|#1\right\rangle}}
\newcommand{\eps}{\epsilon}
\newcommand{\ovl}[1]{\ensuremath{\mkern 1.0mu\overline{\mkern-1.0mu #1\mkern-1.0mu }\mkern 1.0mu }}
\newcommand{\inv}[1]{\ovl{#1}}
\theoremstyle{plain}
\newtheorem{theorem}{Theorem}
\numberwithin{theorem}{section}
\newtheorem{lemma}[theorem]{Lemma}
\newtheorem{proposition}[theorem]{Proposition}
\theoremstyle{definition}
\newtheorem{definition}[theorem]{Definition}
\theoremstyle{remark}
\newtheorem{remark}[theorem]{Remark}
\title{Haag duality for Kitaev's quantum double model for abelian groups}
\author[1]{Leander Fiedler\thanks{\texttt{leander.fiedler@itp.uni-hannover.de}}}
\author[1]{Pieter Naaijkens\thanks{\texttt{pieter.naaijkens@itp.uni-hannover.de}}}
\affil[1]{Institut f\"ur Theoretische Physik\\Leibniz Universit\"at Hannover, Germany}
\begin{document}
\maketitle
\begin{abstract}
	We prove Haag duality for conelike regions in the ground state representation corresponding to the translational invariant ground state of Kitaev's quantum double model for finite abelian groups. This property says that if an observable commutes with all observables localised outside the cone region, it actually is an element of the von Neumann algebra generated by the local observables inside the cone. This strengthens locality, which says that observables localised in disjoint regions commute.

As an application we consider the superselection structure of the quantum double model for abelian groups on an infinite lattice in the spirit of the Doplicher-Haag-Roberts program in algebraic quantum field theory. We find that, as is the case for the toric code model on an infinite lattice, the superselection structure is given by the category of irreducible representations of the quantum double.
\end{abstract}

\section{Introduction}
Kitaev's quantum double model for finite groups is a spin model on a 2D lattice that exhibits anyonic excitations~\cite{MR1951039}. One of its main features is that it has certain topological properties: the ground space degeneracy depends on the topology of the underlying lattice. In addition, the model has (quasi-)particle excitations with braid (anyonic) statistics. This can be exploited to perform quantum computations. In fact, for certain groups it allows even for \emph{universal} quantum computation~\cite{PhysRevA.67.022315,PhysRevA.69.032306}. The computations are made possible by the braid statistics of the anyons which are encoded in the superselection structure of the model.

It turns out that even on a topologically trivial lattice, such as a square lattice on the plane, the excitations with anyonic statistics exist. One way to recover the properties of these excitations is by doing a Doplicher-Haag-Roberts (DHR) type analysis of the superselection sectors~\cite{MR0297259,MR0334742}. This has been carried out for the toric code in ~\cite{toricendo,haagdtoric}. The toric code corresponds to the choice of $G=\mathbb{Z}_2$ in Kitaev's quantum double model. Here we extend these results to general finite abelian groups $G$. In particular, we show that algebras of observables localised in cone-like regions fulfil Haag duality in the vacuum representation. This means that in the vacuum representation observables which commute with all observables outside the cone are exactly those which can be approximated (in the weak operator topology) by operators localised inside the cone. More precisely, suppose $\Lambda$ is a cone and $\alg{A}(\Lambda)$ is the algebra of quasi-local observables localised in $\Lambda$. Then we have in the ground state representation $\pi_0$ of the quantum double model, that $\pi_0(\alg{A}(\Lambda))'' = \pi_0(\alg{A}(\Lambda^c))'$, where the prime denotes taking the commutant, and $\Lambda^c$ is the set of all sites in the complement of the cone $\Lambda$. Note that one of the inclusions readily follows from locality, the other one is non-trivial and is what the larger part of this paper is devoted to.

For the proof of Haag duality we follow the ideas introduced in~\cite{haagdtoric}. In particular, we first show that cone algebras leave certain subspaces of the vacuum Hilbert space invariant. These subspaces can be shown to be generated by the self-adjoint parts of the cone algebra and the algebra associated to the complement of the cone. Using a result by Rieffel and van Daele \cite{RiDa:1975} we can then conclude Haag duality for cone algebras. In essence the proof relies on a thorough understanding of the ground state, or rather, the full excitation spectrum of the model. This allows us to get a precise understanding of the Hilbert spaces describing all pairs of excitations in a certain region of the system. A precise understanding of how these states can be obtained by acting with local operators on the ground state vector allows us to use the aforementioned theorem by Rieffel and van Daele.

As a consequence of Haag duality and a property of the ground state we obtain the approximate\footnote{In earlier work~\cite{toricendo,haagdtoric} we called this the \emph{distal} split property. However, we feel \emph{approximate} is more appropriate, since we will only need to assume a small separation of two regions.} split property, which implies that if we have two cones that are removed from each other sufficiently far, then we can prepare normal states on these two cones independently~\cite{MR735338}. In this sense it is a form of statistical independence of the two cone regions. This is no longer true if we take a cone $\Lambda$ and its complement. In that case, the split property does not hold any more and one cannot find \emph{normal} product states on the two regions, and they are not independent in the strong sense~(c.f.~\cite{MR984150}).

Another application of Haag duality that we consider is the analysis of the superselection structure of the model for finite abelian groups. We show that in this case the superselection structure is described by conelike localised endomorphisms by explicitly constructing such endomorphisms that describe a single excitation. In this way we can show that the superselection structure (including the braiding and fusion roles) is described by finite dimensional representations of Drinfeld's quantum double $\qdg$ of the underlying group. This resembles analogue results for Kitaev's toric code model~\cite{toricendo}. We do this by constructing states describing a single charge. These are obtained by creating a pair of excitations from the ground state, and move one of these to infinity. Because of the topological properties of the quantum double model, the direction in which we do this cannot be observed. It follows that the corresponding representations satisfy a superselection criterion: they are irreducible representations that are unitarily equivalent to the ground state representation, but \emph{only} when one restricts to observables localised outside a cone. This resembles the Buchholz-Fredenhagen criterion in algebraic quantum field theory~\cite{MR660538}. Using Haag duality we can then restrict to endomorphisms of the observables, and do the DHR superselection theory~\cite{MR0297259,MR0334742}.

The paper is organised as follows. In Section~\ref{sec:qdouble} we review the geometric and algebraic setting of the quantum double model for finite groups and introduce necessary notation. We also recall the main properties of the excitation spectrum of the model. This is then used in Section~\ref{sec:ground} to show that there is a unique translational invariant ground state in our setting. Section~\ref{sec:haagd} contains the main result of this paper: the proof of Haag duality. The next two sections concern the approximate split property and an analysis of the superselection structure of abelian quantum double models. We end with an outlook on the extension to non-abelian groups of our results.

\section{The quantum double model}\label{sec:qdouble}
We start with recalling the setting of the quantum double model for finite groups. Most of the result in this section are not new, but since the notation and properties we introduce here play an important role in the main part of this paper, we recall the essentials to make the paper more self-contained. For a more detailed introduction we refer to \cite{MR1951039} and \cite{PhysRevB.78.115421}. We will mainly follow the notation of~\cite{PhysRevB.78.115421}.

\begin{figure}
    \centering
    \includegraphics[width=0.25\linewidth]{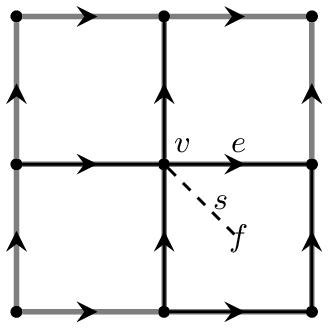}
    \caption{Geometric setting of the quantum double model. The black arrows on the edges indicate their orientation. $v$, $e$ and $f$ are a vertex, edge and face, respectively, and $s=(v,f)$ is a site.
    The star at $v$ is given by the four black edges connecting to $v$. The plaquette at $f$ is defined by the four edges surrounding $f$.}
    \label{fig:site}
\end{figure}

Consider a square lattice $\mb Z^2$ and let $G$ be any finite group.\footnote{We later specialise to \emph{abelian} groups, but for the definition of the model this is immaterial.} Vertices of the lattice are denoted by $v$. Between nearest-neighbour vertices there are oriented edges $e$. The set of all these edges (or bonds) is called $\Gamma$. For simplicity we fix the orientation of the edges as in Figure~\ref{fig:site}: the edges point either right or up.
If $e$ is an edge,we write $\ovl e$ for the edge with the opposite orientation. Faces of the lattice are denoted by $f$. Note that faces can be identified with vertices in the dual lattice, and similarly vertices in the lattice correspond to faces in the dual lattice. The edges in the dual lattice connect two faces of the lattice.
They inherit an orientation from the edges in the lattice in the following way. Given an edge $e$ in $\Gamma$ then its orientation fixes a notion for the neighbouring faces to lie to the ``left'' or to the ``right'' of $e$ in direction of the orientation.
A dual edge $\ovl e$ is then oriented in such a way that it point from the face to the right to the face to the left of the associated oriented edge in $\Gamma$.
A \emph{site} $s$ will mean a tuple $s=(v,f)$ of a vertex and a neighbouring face $f$. Finally, we refer to the four edges enclosing a face as a plaquette (notation: $\plaq(f)$), and the four edges ending or starting at a vertex $v$ as a star, notation $\str(v)$.

To each edge $e$ of the lattice we associate a Hilbert space $\mc H_e$ with a basis labelled by the elements of $G$.\footnote{One can in fact regard it as the group $C^*$-algebra $C[G]$. More generally, this can be done for any finite dimensional Hopf-$*$ algebra (c.f. the remark on p.13 of~\cite{MR1951039}).} The orthonormal basis in $\mc H_e$ is denoted by $\ket g$ with $g\in G$.
For any edge $e\in\Gamma$ denote $\mc A(\{e\}):=\mc B(\mc H_e)$ for the algebra of observables acting on this edge. Similarly for any set $\mc O\in\mc P_f(\Gamma)$, where $\mc{P}_f(\Gamma)$ is the set of all \emph{finite} subsets of $\Gamma$, the local algebras are given by $\mc A(\mc O):=\bigotimes_{e\in\mc O}\mc A(\{e\})$. We will also write $\mc A_e$ for $\mc A(\{e\})$. If $A \in \alg{A}(\mc{O})$ we say that $A$ is \emph{localised} in $\mc{O}$.

This construction gives rise to an isotonous net $\mc O\mapsto\mc A(\mc O)$ of $C^*$-algebras. The corresponding embedding $*$-isomorphisms are given by the natural embedding provided by the tensor product structure. That is, if $\mc{O}_1 \subset \mc{O}_2$ it is given by the map $\iota_{\mc{O}_1 \mc{O}_2}$ defined by $A \mapsto A \otimes I_{\mc{O}_2 \setminus \mc{O}_1}$. The inductive limit of this net is called the \emph{quasilocal algebra} and denoted by $\alg{A}$. It is the norm closure of the $*$-algebra all observables localised in \emph{finite} regions. Similarly one can define for \emph{infinite} sets $\Lambda \subset \Gamma$ the algebra $\alg{A}(\Lambda) \subset \Lambda$ as the norm closure of $\bigcup_{\mc{O} \subset \Lambda} \alg{A}(\mc{O})$, where the union is over \emph{finite} subsets.

An important part of the model's structure is most easily explained in terms of certain operators associated to triangles on the lattice. We recall the main definitions here. A \emph{direct triangle} $\tau$ can be thought of connecting a face with two neighbouring vertices which are connected by an edge. More specific, consider a site $s=(v,f)$ can be thought of as a line connecting a vertex $v$ with a face $f$.
A direct triangle $\tau$ is then a tuple $(s_1,s_2,e)$ of the sites $s_1,s_2$ and an edge $e$ or its inverse such that the tuple lists the sides of the triangle $\tau$ in clockwise order.Similarly, a \emph{dual triangle} $\tau'$ connects two neighbouring faces with a vertex over some dual edge. Again $\tau'$ is given by a tuple $(s_1,s_2,e)$ where $s_1,s_2$ are sites and $e$ is an edge or its dual edge such that the tuple lists the sides of $\tau'$ in \emph{counterclockwise} order.
Given a triangle $\tau=(s_1,s_2,e)$ denote $\partial_0\tau:=s_1$ and $\partial_1\tau:=s_2$. Two triangles are said to \emph{overlap} if and only if the corresponding edges intersect.
Any triangle inherits an orientation by the edge in the tuple. Note that it can either coincide with the orientation given by the lattice (or dual lattice) $\Gamma$, or be anti-parallel.

Given a direct triangle $\tau$ and an element $h\in G$ we can now associate an operator $T_\tau^h\in\mc A_e$ by
\begin{equation*}
    T_\tau^h\ket k =\left\{
        \begin{array}{ll}
            \delta_{h,k}\ket k, &   \textnormal{ if }\tau\textnormal{ is oriented parallel to }\Gamma\\
        \delta_{\ovl h,k}\ket k,    &   \textnormal{ else}
    \end{array}
    \right.
    ,\ k\in G.
\end{equation*}
For a dual triangle $\tau'$ we set for any element $g\in G$
\begin{equation*}
    L_{\tau'}^g\ket k =\left\{
        \begin{array}{ll}
            \ket{gk}, &   \textnormal{ if }\tau\textnormal{ is oriented parallel to }\Gamma\\
            \ket{k\ovl g},    &   \textnormal{ else}
    \end{array}
    \right.
    ,\ k\in G.
\end{equation*}
Here and in the remainder of the paper we will use the notation $\ovl g$ for the inverse group element of $g\in G$, to keep the sub and superscripts in the formula more readable.
If $\tau$ and $\tau'$ overlap the corresponding triangle operators act on the same edge $e$ and one can verify that the operators $L_{\tau'}^gT_\tau^h,\ h,g\in G$ are matrix units spanning $\mc A_e$.

A crucial role is played by operators that act along a \emph{ribbon}. A ribbon $\rho$ is given by a tuple $(\tau_1,\tau_2,\dots,\tau_n)$ of pairwise non-overlapping triangles such that $\partial_1\tau_i=\partial_0\tau_{i+1}, i=1,\dots,n-1$. We set $\partial_0\rho:=\partial_0\tau_1$ and $\partial_1\rho:=\partial_1\tau_n$.
A ribbon $\rho$ is called is called \emph{closed} if $\partial_1\rho=\partial_0\rho$. Given a ribbon $\rho$ and group elements $g,h \in G$ an associated ribbon operator $F_\rho^{g,h}$ is defined recursively as follows: let $\tau$ be a direct, $\tau'$ a dual triangle and $\epsilon$ the trivial ribbon. In this case the ribbon operators are defined as
\begin{align*}
    F_\eps^{g,h}    & := I,  &   F_\tau^{g,h}    & := T_\tau^g,   &   F_{\tau'}^{g,h} &:= \delta_{e,h}L_{\tau'}^g,
\end{align*}
where $e\in G$ is the unit element.
If $\rho$ is any ribbon, we can decompose it into two possibly smaller ribbons $\rho_1$ and $\rho_2$, and write $\rho=\rho_1\rho_2$. The ribbon operator on $\rho$ is then defined recursively in terms of the ribbon operators on the smaller ribbons by
\begin{align}
    F_\rho^{g,h}:=\sum_{k\in G}F_{\rho_1}^{g,k}F_{\rho_2}^{\ovl hgh,\ovl kh}.
	\label{eq:ribdecompose}
\end{align}
It can be checked that this is consistent and independent of the partition~\cite{PhysRevB.78.115421}. We will sometimes refer to equation~\eqref{eq:ribdecompose} as the \emph{ribbon decomposition rule}.
The algebra generated by the ribbon operators acting along a ribbon $\rho$ will be denoted by $\mc F_\rho$.

The commutation relations for ribbon operators associated to some ribbon $\rho$ are given by $F_\rho^{g,h}F_\rho^{k,l} = \delta_{h,l}F_\rho^{gk,l}$ and $[F_\rho^{g,h},F_{\rho'}^{k,l}]=0$ if $\rho\cap\sigma=\emptyset$.
The case where $\rho$ and $\sigma$ overlap at some site will be discussed later in the context of braiding (see Section~\ref{sec:rib_comm}). Finally, the adjoint is given by $(F_\rho^{h,g})^* = F_\rho^{\inv{h},g}$.

Given a site $s$ there are two distinct closed ribbons that start and end at $s$, namely the smallest closed ribbon $\beta_s$ that consists just of direct triangles and the smallest closed ribbon $\alpha_s$ consisting only of dual triangles.
For $g,h\in G$ we set $A_s^g:=F_{\alpha_s}^{g,e}$ and $B_s^h:=F_{\beta_s}^{e,\ovl h}$ and define the \emph{star} and \emph{plaquette} operators by
\begin{align}
    A_s &:= \frac1{|G|}\sum_{g\in G}A_s^g    &   B_s    &:= B_s^e.
    \label{eq:star_plaq}
\end{align}
The definition definition of $A_s$ depends only on the vertex the site is located at and $B_s$ depends only on the face at $s$. The name \emph{star operator} can be explained by noting that it acts on the edges of $\str(s)$. Similarly, the plaquette operator acts on the corresponding plaquette.

There is another convenient description of the operators $A_s^g$ and $B^h_s$. It can be obtained by choosing a basis vector in the tensor product of the Hilbert spaces on the edges, and describing the action on this basis vector. As an example the action of $A_s^g$ is visualized the following diagram~\eqref{eq:starops}:
\begin{equation}
\label{eq:starops}
\raisebox{-0.5\height}{\includegraphics{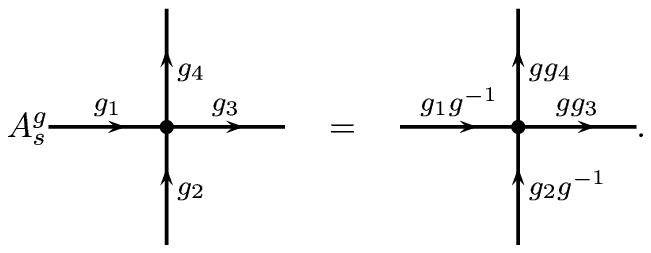}}
\end{equation}
In a similar way one can see that $B_s^h$ is a projection. In particular, choose a basis element corresponding to a choice of group elements $g_1, \dots g_4$, labelled by following the edges around the plaquette in an anticlockwise direction, starting at the vertex $v$ of the site $s$. The action of $B^h_s$ on this vector is the identity if $\sigma(g_1) \sigma(g_2) \sigma(g_3) \sigma(g_4) = h$, and zero otherwise. Here $\sigma(g_i) = g_i$ if the direction of the edge matches the anticlockwise path, and $\inv{g}_i$ otherwise. The product of the group elements is also called the \emph{flux} through the plaquette.

Let $s$ be a site. Using the definition above, it is not so difficult to work out the commutation relations for operators $A_s^g, B_s^h$ acting on the site $s$. One finds
\begin{equation}
	A^g_s A^{g'}_s = A_s^{g g'}, \quad B^h_s B^{h'}_s = \delta_{h,h'} B^h_s,\quad A^g_s B^h_s = B^{g h g^{-1}}_s A^g_s.
\label{eq:agbh}
\end{equation}
In particular this shows that for any pair of sites $s,s'$ the plaquette and star operators commute, i.e. $[A_s,B_{s'}]=0=[A_s,A_{s'}]=0=[B_s,B_{s'}]$.

\begin{remark}
	The operators $A_s^g$ and $B_s^h$ generate a finite dimenional algebra, that is in fact isomorphic to Drinfeld's \emph{quantum double}~\cite{MR934283} of the group algebra $\mathbf{C}[G]$ regarded as a Hopf algebra. We write $\qdg$ for this algebra. This explains the name ``quantum double model''. The quantum double has been very well studied, and many of the properties that we will need in this paper can be traced back to the representation theory of $\qdg$. Good introductions can be found in, for example, Ref.~\cite{MR1234107} for the quantum double and its representations in the context of $C^*$-algebras, or the textbook~\cite{MR1321145} for a more algebraic approach in the language of category theory.
\end{remark}

With this notation we can introduce the dynamics of the quantum double model. Recall that dynamics can be specified by local Hamiltonians, satisfying certain conditions that ensure that they lead to a time evolution on the entire quasi-local algebra of observables $\alg{A}$~\cite{MR1441540}. These local Hamiltonians can be defined in terms of the operators $A_s^g$ and $B_s^h$ introduced above, or rather the sum $A_s$ and the projection $B_s$. Note that these operators mutually commute, even if they both act on the same site. We sometimes write $B_f$ or $A_v$, where $f$ is a face and $v$ a vertex, instead of $s = (v,f)$. Note that this does not lead to ambiguities.

Concretely, let $\Lambda \in \mc{P}_f(\bonds)$. Then the corresponding local Hamiltonian is defined by
\[
H_{\Lambda} = - \sum_{\str(s) \subset \Lambda} A_s - \sum_{\plaq(s) \subset \Lambda} B_s.
\]
The summation is over all stars and all plaquettes (faces) whose bonds are completely contained in $\Lambda$. We will later see that the ground state is a stabilizer state, that is stabilized by each $A_s$ and $B_p$, and we can see the Hamiltonian as implementing an energy penalty for violation of the ``constraints'' that $A_s \Omega = B_s \Omega = \Omega$ for a ground state (as we will see later).

Ribbon operators are interpreted as creating excitations at the ending sites of the ribbon. This interpretation is strengthened by the commutation relations with star and plaquette operators (see \cite{PhysRevB.78.115421})
\begin{equation}\label{eqn:ABF}
    \begin{array}{r@{\ =\ }lr@{\ =\ }l}
        A_{s_0}^kF_\rho^{h,g}   &  F_\rho^{kh\ovl k,kg}A_{s_0}^k   &   A_{s_1}^kF_\rho^{h,g}   &  F_\rho^{h,g\ovl k}A_{s_1}^k\\
        B_{s_0}^kF_\rho^{h,g}   &  F_\rho^{h,g}B_{s_0}^{kh}    &   B_{s_1}^kF_\rho^{h,g}   &  F_\rho^{h,g}B_{s_1}^{\ovl g\ovl hgk}
    \end{array}
\end{equation}
for a ribbon $\rho$ with $s_i=\partial_i\rho,\ i=0,1$ and $g,h,k\in G$. On the other hand the stars and plaquettes at sites different from $s_1$ and $s_2$ commute with $\rho$. Hence if we act with a ribbon operator on the ground state, some of the constraints in the Hamiltonian will be violated.

For our purposes it will be convenient to consider a different basis for the space of ribbon operators acting on a ribbon $\rho$. Let $C=\{c_1,\dots,c_n\}$ be a conjugacy class of $G$, $r\in C$ some representative and $\pi$ an irreducible unitary representation of $Z_G(r)$, the centraliser of $r$ in $G$. Choose elements $q_1,\dots,q_n$ such that $c_i=q_ir\ovl q_i$ for $i=1,\dots,n$ and set
\begin{align}
    F_\rho^{C,\pi,i,i',j,j'}:=\sum_{z\in Z_G(r)}\ovl{\pi_{j,j'}(z)}F_\rho^{\ovl c_i,q_ir\ovl q_{i'}}
    \label{eq:charge_ribbons}
\end{align}
where $j,j'\in\{1,\dots,|\pi|\}$ label the matrix elements of $\pi$ and $i,i'\in\{1,\dots,n\}$. This relates the ribbon operators to irreducible representations of the quantum double $\qdg$ of the group~\cite{MR1128130}. It can be shown (see \cite{PhysRevB.78.115421}), that in case $\rho$ consists of both (direct and dual) types of triangles then these operators form a basis of $\mc F_\rho$, the algebra generated by the ribbon operators at $\rho$.

Note that if the group $G$ is abelian, these definitions somewhat simplify, essentially because we only have to deal with one dimensional representations. In that case, the centraliser is simply $G$, the conjugacy classes are single elements, and the irreducible representations are characters $\chi$ of $G$. We simply write $F^{\chi,c}_\rho$ in that case, that is,
\begin{equation}
	\label{eq:abribbon}
	F^{\chi,c}_\rho = \sum_{g \in G} \overline{\chi}(g) F^{\inv{c}, g}_\rho.
\end{equation}
It is not difficult to check that $F_\rho^{\chi_1,c} F_\rho^{\chi_2,d} = F_{\rho}^{\chi_1 \chi_2, cd}$, where $\chi_1 \chi_2$ is the pointwise product of $\chi_1$ and $\chi_2$.

There is another useful property that is valid for these operators for abelian $G$, but not in general: if we decompose a ribbon $\rho$ into two parts $\rho_1$ and $\rho_2$, the corresponding ribbon operator is just the product of the ribbon operators acting along the smaller ribbons.
\begin{lemma}
	\label{lem:abdecompose}
Let $\chi$ be a character and $c \in G$ for some finite abelian group $G$. Suppose that $\rho = \rho_1 \rho_2$ is a ribbon. Then $F^{\chi,c}_\rho = F^{\chi,c}_{\rho_1} F^{\chi,c}_{\rho_2}$.
\end{lemma}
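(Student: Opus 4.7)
The plan is to expand the left-hand side using the definition \eqref{eq:abribbon}, apply the ribbon decomposition rule \eqref{eq:ribdecompose} to the standard ribbon operators, and then use abelianness together with the character identity $\overline{\chi}(ab)=\overline{\chi}(a)\overline{\chi}(b)$ to factorise the resulting double sum.

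Concretely, I would first write
\[
F^{\chi,c}_\rho \;=\; \sum_{g\in G}\overline{\chi}(g)\,F^{\inv c,g}_\rho
\;=\; \sum_{g,k\in G}\overline{\chi}(g)\,F^{\inv c,k}_{\rho_1}\,F^{\ovl h\,\inv c\,h,\,\ovl k g}_{\rho_2}\bigg|_{h=g},
\]
where the inner expression is just the ribbon decomposition rule applied to $F^{\inv c,g}_\rho$ along the splitting $\rho=\rho_1\rho_2$. Since $G$ is abelian, conjugation is trivial, so $\ovl h\,\inv c\,h = \inv c$ and the second factor becomes $F^{\inv c,\ovl k g}_{\rho_2}$.

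Next I would reindex the sum by substituting $g'=\ovl k g$, so that $g = k g'$ runs once over $G$ as $g'$ does. By the character property, $\overline{\chi}(g)=\overline{\chi}(k)\,\overline{\chi}(g')$, and the double sum separates:
\[
F^{\chi,c}_\rho
= \sum_{k,g'\in G}\overline{\chi}(k)\overline{\chi}(g')\,F^{\inv c,k}_{\rho_1}\,F^{\inv c,g'}_{\rho_2}
= \Bigl(\sum_{k}\overline{\chi}(k)F^{\inv c,k}_{\rho_1}\Bigr)\Bigl(\sum_{g'}\overline{\chi}(g')F^{\inv c,g'}_{\rho_2}\Bigr),
\]
which by \eqref{eq:abribbon} equals $F^{\chi,c}_{\rho_1}F^{\chi,c}_{\rho_2}$.

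There is no real obstacle here; the statement is a direct consequence of the ribbon decomposition rule, abelianness (which kills the conjugation twist $\ovl h g h$) and multiplicativity of characters. The only thing to be careful about is that the decomposition rule as stated is for a specific choice of splitting, but by the consistency result recalled just after \eqref{eq:ribdecompose} the identity extends to any partition $\rho=\rho_1\rho_2$.
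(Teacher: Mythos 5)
Your proof is correct and follows essentially the same route as the paper's: expand $F^{\chi,c}_\rho$ via the definition~\eqref{eq:abribbon}, apply the ribbon decomposition rule~\eqref{eq:ribdecompose}, and reindex the double sum using multiplicativity of $\overline{\chi}$ (your explicit remark that abelianness trivialises the conjugation twist $\ovl{h}\,\inv{c}\,h$ is a detail the paper leaves implicit, but the argument is the same).
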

\begin{proof}
	With the help of equation~\eqref{eq:ribdecompose} we find that
\[
	F_\rho^{\chi,c} = \sum_{g,k \in G} \overline{\chi}(g) F^{c,k}_{\rho_1} F^{c,\inv{k} g}_{\rho_2} = \sum_{g,k \in G} \overline{\chi}(k g) F^{c,k}_{\rho_1} F^{c,g}_{\rho_2}
					  = F_{\rho_1}^{\chi,c} F_{\rho_2}^{\chi,c},
\]
where we made the substitution $g \mapsto kg$ in the second equality, and used that $\overline{\chi}(kg) = \overline{\chi}(k) \overline{\chi}(g)$.
\end{proof}

\subsection{Properties of ribbon operators}\label{sec:proprib}
For later use we list some properties of ribbon operators that we will need later. In particular, we are interested in the question how the action of these operators on the ground state depend on the ribbon itself. As will be outlined in Section~\ref{sec:ground}, in the present situation there is a unique translational invariant ground state $\omega_0$. The corresponding GNS representation will be denoted by $(\pi_0, \Omega, \mc{H}_0)$. If we talk about ``the ground state'' or ``ground state vector'', we will always mean the translational invariant ground state $\omega_0$ (resp. the GNS vector $\Omega$). Since $\alg{A}$ is an inductive limit of simple algebras, it is simple, hence $\pi_0$ is a faithful representation. To simplify notation we therefore often write simply $A$ for $\pi_0(A)$. An essential fact in proving the properties below is that the ground state vector $\Omega$ has the property that $A^g_s \Omega = \Omega$ and $B^h_s \Omega = \delta_{h,e} \Omega$ for any site $s$ and group elements $g,h \in G$.

    As it turns out the action of a ribbon operator on the ground state only depends on the sites connected by the ribbon and not on the connecting ribbon itself. This allows to deform ribbons by fixing its endpoints and changing the shape in between. In the following $G$ denotes any finite group.
	\begin{lemma}\label{lem:indeprib}
        Let $\rho,\rho',\sigma,\sigma'$ be ribbons with $\partial_i\rho=\partial_i\rho'$ and $\partial_i\sigma=\partial_i\sigma',\ i=0,1$. Then for all $A\in\mc A$ and all $g,h,k,l\in G$ it holds
        \begin{align*}
            \omega_0(F_\rho^{h,g}AF_\sigma^{l,k})=\omega_0(F_{\rho'}^{h,g}AF_{\sigma'}^{l,k}).
        \end{align*}
    \end{lemma}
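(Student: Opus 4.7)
The plan is to reduce this two-ribbon identity to a one-ribbon claim: $F_\rho^{h,g}\Omega = F_{\rho'}^{h,g}\Omega$ whenever $\partial_i\rho = \partial_i\rho'$ for $i = 0,1$. Granting this claim, the lemma follows from a short manipulation. Using the adjoint rule $(F_\rho^{h,g})^* = F_\rho^{\inv{h},g}$ recorded after \eqref{eq:ribdecompose}, one writes
\[
\omega_0(F_\rho^{h,g} A F_\sigma^{l,k}) = \langle F_\rho^{\inv{h},g}\Omega,\, A F_\sigma^{l,k}\Omega\rangle,
\]
and then applies the key claim once to $F_\rho^{\inv{h},g}\Omega$ to replace $\rho$ by $\rho'$, and once to $F_\sigma^{l,k}\Omega$ to replace $\sigma$ by $\sigma'$. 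Reassembling the inner product yields the claimed identity.

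To prove the one-ribbon claim, I would argue by induction on the number of triangles in the symmetric difference of $\rho$ and $\rho'$. Any two ribbons connecting the same pair of sites can be related by a finite sequence of elementary moves, in which a short subribbon is deformed across a single vertex or face, the remainder being left intact. For such a local move, the ribbon decomposition rule \eqref{eq:ribdecompose} lets one express $F_{\rho'}^{h,g}$ as a linear combination of products of $F_\rho^{h',g'}$ (with shifted labels) and an elementary closed ribbon operator localised around the enclosed vertex or face. By the relation between closed ribbons and the algebra of star and plaquette operators, this elementary closed ribbon is built from operators of the form $A_s^g$ or $B_s^h$. Using the commutation relations \eqref{eqn:ABF} to pass these operators through the remaining ribbon (which only conjugates the labels at the endpoint sites), and then the stabiliser identities $A_s^g\Omega = \Omega$ and $B_s^h\Omega = \delta_{h,e}\Omega$, one sees that the elementary closed ribbon acts on $\Omega$ by a scalar that precisely compensates the induced label shifts, giving $F_\rho^{h,g}\Omega = F_{\rho'}^{h,g}\Omega$.

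The main obstacle is the combinatorial bookkeeping of labels, not any analytic difficulty. Both \eqref{eq:ribdecompose} and \eqref{eqn:ABF} mix the two indices $(h,g)$ via conjugation by group elements, so one must carefully track how the pair $(h,g)$ transforms through each local move and confirm that after summation the various contributions really reassemble into a single term $F_\rho^{h,g}\Omega$, rather than a non-trivial linear combination of ribbon operators with shifted labels. Once this verification is carried out for a single vertex-move and a single face-move, the global induction on the symmetric difference is immediate, and no new ingredient beyond the algebraic relations \eqref{eq:ribdecompose}, \eqref{eqn:ABF} and the stabiliser characterisation of $\Omega$ is required.
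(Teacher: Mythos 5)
Your reduction is exactly the paper's proof: both rest on the single fact that $F_\rho^{h,g}\Omega = F_{\rho'}^{h,g}\Omega$ for ribbons sharing endpoints, combined with the adjoint rule $(F_\rho^{h,g})^* = F_\rho^{\inv{h},g}$ to move one ribbon operator onto the left slot of the inner product. The only difference is that the paper simply cites~\cite{PhysRevB.78.115421} for that deformation-invariance fact, whereas you sketch its proof via elementary moves and the stabiliser identities --- a sketch that is correct in outline, though the label bookkeeping you defer is precisely the content of the cited result.
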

    \begin{proof}
        In \cite{PhysRevB.78.115421} it is shown that for ribbons $\rho,\rho'$ as above the ribbon operators $F_\rho^{h,g},F_{\rho'}^{h,g}$ map the ground state vector $\Omega$ to the same image, i.e. $F_\rho^{h,g}\Omega=F_{\rho'}^{h,g}\Omega$.
        Hence by noting that $\omega_0(A)=\langle\Omega ,A\Omega\rangle,\,A\in\mc A$ and $(F_\rho^{h,g})^*=F_\rho^{\ovl h,g}$ the claim follows.
    \end{proof}
    We refer to $\rho'$ and $\sigma'$ as \emph{deformations} of $\rho$ and $\sigma$. A more detailed definition and description can be found in \cite{PhysRevB.78.115421}.
    An example of a deformation of a ribbon is given in Figure~\ref{fig:rib_def}.
    \begin{figure}
        \centering
        \includegraphics[width=0.5\linewidth]{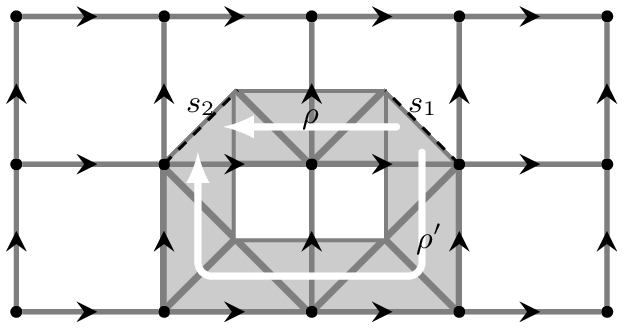}
        \caption{A deformation $\rho'$ of a ribbon $\rho$ connection sites $s_1$ and $s_2$. The white arrows indicates the orientation of the ribbons.}
        \label{fig:rib_def}
    \end{figure}
    Later on we need to connect to ending sites of a ribbon $\rho$ with another ribbon $\ovl\rho$ such that $\rho\ovl\rho$ is a closed ribbon. The way ribbon operators of $\rho$ and $\ovl\rho$ relate to each other is given by the following lemma:
    \begin{lemma}\label{lem:ribinv}
        Let $\rho,\ovl\rho,\sigma,\ovl\sigma$ be ribbons with $\partial_i\rho=\partial_{1-i}\ovl\rho$ and $\partial_i\sigma=\partial_{1-i}\ovl\sigma,\ i=0,1$. Then for all $A\in\mc A$ and all $g,h,k,l\in G$ it holds
        \begin{align*}
            \omega_0(F_\rho^{h,g}AF_\sigma^{l,k})=\omega_0(F_{\ovl\rho}^{\ovl h,\ovl g}AF_{\ovl\sigma}^{\ovl l,\ovl k}).
        \end{align*}
    \end{lemma}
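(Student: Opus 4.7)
The plan is to reduce the lemma to the vector identity
\[
F_\rho^{h,g}\Omega = F_{\ovl\rho}^{\ovl h,\ovl g}\Omega
\]
in the GNS Hilbert space, mirroring the strategy used in the proof of Lemma~\ref{lem:indeprib}. Once that identity is established, the lemma follows by combining it with the adjoint relation $(F_\rho^{h,g})^*=F_\rho^{\ovl h,g}$: writing
\begin{align*}
\omega_0(F_\rho^{h,g}AF_\sigma^{l,k})
 &= \langle F_\rho^{\ovl h,g}\Omega,\, AF_\sigma^{l,k}\Omega\rangle \\
 &= \langle F_{\ovl\rho}^{h,\ovl g}\Omega,\, AF_{\ovl\sigma}^{\ovl l,\ovl k}\Omega\rangle
  = \omega_0(F_{\ovl\rho}^{\ovl h,\ovl g}AF_{\ovl\sigma}^{\ovl l,\ovl k}),
\end{align*}
where the middle equality applies the vector identity with $(h,g)\mapsto(\ovl h,g)$ on the bra side and as stated on the ket side, and the last step uses the adjoint relation once more.

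To prove the vector identity, I would first use Lemma~\ref{lem:indeprib} to deform $\ovl\rho$ to the literal reversal of $\rho$: if $\rho=(\tau_1,\ldots,\tau_n)$, then $\ovl\rho=(\ovl{\tau}_n,\ldots,\ovl{\tau}_1)$. The claim then reduces to an induction on $n$. For a single direct triangle $\tau$ the operator identity $T_{\ovl\tau}^{\ovl h}=T_\tau^h$ holds regardless of orientation, so the base case is immediate. For a single dual triangle $\tau'$ the operators $L_{\tau'}^h$ and $L_{\ovl{\tau'}}^{\ovl h}$ correspond to left versus right multiplication by $h$ on the edge variable; in the abelian setting relevant to this paper these coincide as operators, again yielding the identity on the nose. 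The inductive step writes $\rho=\rho_1\rho_2$ (so $\ovl\rho=\ovl{\rho_2}\ovl{\rho_1}$) and applies the ribbon decomposition~\eqref{eq:ribdecompose} to both sides. After invoking the inductive hypothesis on the shorter pieces and performing an index shift in the summation over the intermediate group element $k$, the two expansions match, provided the ribbon operators on $\rho_1$ and $\ovl{\rho_2}$ commute. This last point is ensured by first using Lemma~\ref{lem:indeprib} to deform $\rho_1$ and $\ovl{\rho_2}$ so that they share only their common endpoint site.

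The main obstacle lies in the bookkeeping of this index shift. In the abelian case, which is the one needed in the paper, the conjugation twist $h\mapsto\ovl{k}hk$ appearing in~\eqref{eq:ribdecompose} disappears, and the inductive identification reduces to the combinatorial rewrite $\sum_k F_{\rho_1}^{h,k}F_{\ovl{\rho_2}}^{\ovl h,\ovl g k}=\sum_{k'}F_{\ovl{\rho_2}}^{\ovl h,k'}F_{\rho_1}^{h,gk'}$ (after $k'=\ovl g k$) together with commutativity on disjoint edges. One also has to check that deforming $\rho_1$ and $\ovl{\rho_2}$ preserves the vector $F_{\rho_1}^{h,k}F_{\ovl{\rho_2}}^{\ovl h,\ovl g k}\Omega$; this is delicate because Lemma~\ref{lem:indeprib} gives deformation invariance only when acting on $\Omega$ directly, but with the two operators placed on disjoint ribbons one may commute them past each other and reduce to that case. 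In the non-abelian case the conjugation twist at interior sites would need to be tracked carefully, presumably via the conjugacy-class-adapted basis~\eqref{eq:charge_ribbons}.
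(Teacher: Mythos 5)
The paper offers no proof of this lemma at all---it is simply quoted from \cite{PhysRevB.78.115421}---so your self-contained argument is a genuine addition rather than a reproduction. Your reduction of the expectation-value identity to the single vector identity $F_\rho^{h,g}\Omega=F_{\ovl\rho}^{\ovl h,\ovl g}\Omega$ via $(F_\rho^{h,g})^*=F_\rho^{\ovl h,g}$ is exactly the mechanism the authors themselves use for Lemma~\ref{lem:indeprib}, and your verification of the vector identity for abelian $G$ (deform $\ovl\rho$ to the literal reversal of $\rho$ using Lemma~\ref{lem:indeprib}, check the two triangle types, induct via the decomposition rule with the substitution $k'=\ovl g k$) is correct. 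One simplification: the operators $F_{\rho_1}^{h,k}$ and $F_{\ovl{\rho_2}}^{\ovl h,\ovl g k}$ already act on disjoint edge sets, since the triangles of $\rho_1$ and $\rho_2$ are pairwise non-overlapping and reversal does not change the underlying edges; they therefore commute outright, and the deformation detour you worry about at the end (and the attendant question of whether deforming one factor in a product preserves the vector) is not needed.

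The one substantive caveat concerns generality. The lemma is placed in the portion of Section~2.1 where $G$ is an arbitrary finite group, whereas your argument is honestly abelian, and the obstruction is more than bookkeeping. For a single dual triangle the vector identity compares left multiplication $L^h_{\tau'}$ with right multiplication (which is what $L^{\ovl h}_{\ovl{\tau'}}$ becomes after orientation reversal), and these give genuinely different vectors when applied to $\Omega$ for non-abelian $G$: conjugating the group label on one edge of a flat connection generically destroys flatness, so $\langle L^h\Omega, R^h\Omega\rangle<1$. Moreover, taking $\sigma$ trivial and letting $A$ range over $\alg A$ shows, by cyclicity of $\Omega$, that the lemma is essentially \emph{equivalent} to your vector identity, so the general case cannot be reached by your route plus extra index-tracking; it requires the conjugation-twisted inversion formula of \cite{PhysRevB.78.115421}, in which the flux label of $F_{\ovl\rho}$ is conjugated by the holonomy label. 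You flag the non-abelian case, which is good, but ``presumably via the conjugacy-class-adapted basis'' understates the difficulty. For this paper the point is moot: every subsequent use of the lemma (Lemmas~\ref{lem:tech}, \ref{lem:orth_charge_2}, \ref{lem:selfadjoint} and the sector analysis) is in the abelian setting, so your proof covers everything that is actually needed.
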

	For the proof we refer to \cite{PhysRevB.78.115421}.
   The ribbons $\ovl\rho$ and $\ovl\sigma$ are referred to as \emph{inversions} of $\rho$ and $\sigma$. We can always choose such an inversion of a ribbon.

From now on assume that $G$ is abelian. Then the irreducible representations of $\mc D(G)$ are given by elements $c\in G$ and characters $\chi:G\to\mb C$, as mentioned above.
\begin{lemma}\label{lem:ab_tool_3}
    Let $\rho\subset\Lambda$ be a ribbon, $c,k\in G$, $\chi$ an irreducible character and $s$ the starting site of $\rho$. Then
    \begin{align*}
        (F_\rho^{\chi,c})^*   &=  F_\rho^{\ovl\chi,\ovl c}\\
        A_s^kF_\rho^{\chi,c}  &=  \sum_{g\in G}\ovl\chi(g)A^kF^{\ovl c,g} = \chi(k)F_\rho^{\chi,c}A_s^k\\
        B_s^kF_\rho^{\chi,c}  &=  F_\rho^{\chi,c}B_s^{k\ovl c}
    \end{align*}
\end{lemma}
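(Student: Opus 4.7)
The plan is to reduce all three identities to the definition \eqref{eq:abribbon} of $F_\rho^{\chi,c}$ combined with the commutation relations \eqref{eqn:ABF} and the adjoint formula $(F_\rho^{h,g})^* = F_\rho^{\ovl h,g}$, using that abelianness of $G$ trivialises the conjugation $khk^{-1} = h$ appearing in \eqref{eqn:ABF}. None of the three steps involves a genuine obstacle; the only thing to be careful about is tracking the index substitutions in the character sums.

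For the adjoint identity, I would write
\[
(F_\rho^{\chi,c})^* = \sum_{g \in G} \chi(g) (F_\rho^{\ovl c, g})^* = \sum_{g \in G} \chi(g) F_\rho^{c,g},
\]
and then observe that this is precisely $F_\rho^{\ovl\chi,\ovl c}$ upon applying \eqref{eq:abribbon} with the character $\ovl\chi$ and group element $\ovl c$ (since $\overline{\ovl\chi}(g) = \chi(g)$ and $\inv{(\ovl c)} = c$).

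For the star-operator identity, I would expand $A_s^k F_\rho^{\chi,c} = \sum_{g} \ovl\chi(g) A_s^k F_\rho^{\ovl c, g}$, which is already the first claimed equality (up to the obvious typo $A^k \to A_s^k$). Applying the first commutation relation in \eqref{eqn:ABF} at the starting site $s$ and using $k\ovl c k^{-1} = \ovl c$ by abelianness gives $A_s^k F_\rho^{\ovl c, g} = F_\rho^{\ovl c, kg} A_s^k$. Substituting $g \mapsto \ovl k g$ in the resulting sum and pulling out the character factor $\ovl\chi(\ovl k) = \chi(k)$ yields
\[
A_s^k F_\rho^{\chi,c} = \chi(k) \sum_{g \in G} \ovl\chi(g) F_\rho^{\ovl c, g} A_s^k = \chi(k) F_\rho^{\chi,c} A_s^k,
\]
which is the second equality.

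For the plaquette identity, the commutation relation $B_{s_0}^k F_\rho^{h,g} = F_\rho^{h,g} B_{s_0}^{kh}$ in \eqref{eqn:ABF} gives, applied to $h = \ovl c$, the relation $B_s^k F_\rho^{\ovl c, g} = F_\rho^{\ovl c, g} B_s^{k \ovl c}$. The label $k\ovl c$ does not depend on the summation variable $g$, so
\[
B_s^k F_\rho^{\chi,c} = \sum_{g \in G} \ovl\chi(g) F_\rho^{\ovl c, g} B_s^{k\ovl c} = F_\rho^{\chi,c} B_s^{k \ovl c},
\]
completing the proof. All that is really being used is that for abelian $G$ the indices appearing in \eqref{eqn:ABF} that mix $k$ with $h$ (via conjugation by $k$) collapse, so the operators $A_s^k$ and $B_s^k$ interact with the basis ribbon operators $F_\rho^{\ovl c, g}$ in a uniform way across the character sum.
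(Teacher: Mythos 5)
Your proof is correct and is exactly the ``direct calculation using equations~\eqref{eq:abribbon} and~\eqref{eqn:ABF}'' that the paper invokes without writing out; the index substitution $g\mapsto \ovl k g$ and the identity $\ovl\chi(\ovl k)=\chi(k)$ for unitary characters are handled properly. No discrepancy with the paper's approach.
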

\begin{proof}
	By direct calculation using equations~\eqref{eq:abribbon} and~\eqref{eqn:ABF}.
\end{proof}
As mentioned earlier the commutation relations of ribbon operators with the star and plaquette operators can be interpreted as the ribbon operators generating excitations at the ending sites of their respective ribbons, when applied to the ground state. The next lemma sheds more light on this interpretation.

\begin{lemma}\label{lem:triv_charge}
    Let $\rho$ be an open ribbon and $F_\rho^{\chi,c}$ an associated ribbon operator. Then, if $s=\partial_0\rho$ or $s=\partial_1\rho$, it holds
    \begin{align*}
        [F_\rho^{\chi,c},A_s] = 0&\iff \chi = \mr{id}.
    \end{align*}
    Similarly
    \begin{align*}
        [F_\rho^{\chi,c},B_s] = 0&\iff c = e.
    \end{align*}
\end{lemma}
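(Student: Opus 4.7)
The plan is to prove each equivalence by splitting into the two implications and, where needed, handling the two endpoints $s=\partial_0\rho$ and $s=\partial_1\rho$ symmetrically. The ``if'' direction is read off directly from Lemma~\ref{lem:ab_tool_3}: substituting $\chi=\mr{id}$ into $A_s^k F_\rho^{\chi,c}=\chi(k) F_\rho^{\chi,c} A_s^k$ gives $A_s^k F_\rho^{\chi,c}=F_\rho^{\chi,c} A_s^k$ for every $k\in G$, and averaging with the prefactor $1/|G|$ yields $[A_s, F_\rho^{\chi,c}]=0$; substituting $c=e$ into $B_s^k F_\rho^{\chi,c}=F_\rho^{\chi,c} B_s^{k\ovl c}$ and taking $k=e$ yields $[B_s, F_\rho^{\chi,c}]=0$ since $B_s=B_s^e$. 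The commutation relations at $s=\partial_1\rho$ read off from \eqref{eqn:ABF} in the same way only replace $\chi$ by $\ovl\chi$ and $c$ by $\ovl c$, which preserves triviality, so the argument carries over verbatim.

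For the converse I argue by contrapositive and exhibit a vector on which the commutator fails to vanish. The natural candidate is the ground state vector $\Omega$, since the defining properties $A_s^g \Omega=\Omega$ and $B_s^h \Omega=\delta_{h,e}\Omega$ collapse both sides of the commutator. Using Lemma~\ref{lem:ab_tool_3},
\begin{equation*}
A_s F_\rho^{\chi,c}\Omega = \frac{1}{|G|}\sum_{k\in G}\chi(k)\, F_\rho^{\chi,c} A_s^k \Omega = \biggl(\frac{1}{|G|}\sum_{k\in G}\chi(k)\biggr) F_\rho^{\chi,c}\Omega = \delta_{\chi,\mr{id}}\, F_\rho^{\chi,c}\Omega
\end{equation*}
by orthogonality of characters, so that $[F_\rho^{\chi,c}, A_s]\Omega = F_\rho^{\chi,c}\Omega - 0 = F_\rho^{\chi,c}\Omega$ whenever $\chi\neq\mr{id}$. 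Likewise $B_s F_\rho^{\chi,c}\Omega = F_\rho^{\chi,c} B_s^{\ovl c}\Omega = \delta_{\ovl c,e}\, F_\rho^{\chi,c}\Omega = 0$ for $c\neq e$, whence $[F_\rho^{\chi,c}, B_s]\Omega = F_\rho^{\chi,c}\Omega$.

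The remaining step, which I expect to be the main obstacle, is to verify that $F_\rho^{\chi,c}\Omega$ is a non-zero vector. I would compute its norm via the adjoint and multiplication rules for these operators: $(F_\rho^{\chi,c})^* F_\rho^{\chi,c} = F_\rho^{\ovl\chi\chi,\ovl c c} = F_\rho^{\mr{id},e}$, so $\|F_\rho^{\chi,c}\Omega\|^2 = \omega_0(F_\rho^{\mr{id},e})$. Lemma~\ref{lem:abdecompose} factorises this last operator into a product of elementary contributions from the triangles of $\rho$ (an identity on each dual triangle and a scalar multiple of a rank-one projector onto $\ket{e}$ on each direct triangle), and a short calculation exploiting the stabiliser character of $\Omega$, together with the deformation invariance of Lemma~\ref{lem:indeprib} to choose a convenient representative ribbon, shows the expectation equals $1$. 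Once $F_\rho^{\chi,c}\Omega\neq 0$ is in hand, the computation above promotes the non-vanishing on $\Omega$ to operator-level non-vanishing of the commutator, completing the contrapositive.
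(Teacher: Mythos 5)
Your proof is correct and follows essentially the same route as the paper's: the backward implications are read off from Lemma~\ref{lem:ab_tool_3}, and the forward ones come from averaging the commutation relation over $G$ (orthogonality of characters) together with the non-vanishing of $F_\rho^{\chi,c}\Omega$. The only cosmetic differences are that you evaluate the commutator on $\Omega$ whereas the paper writes the identity $A_s F_\rho^{\chi,c} A_s = \delta_{\chi,\mr{id}} F_\rho^{\chi,c} A_s$ at the operator level, and that you sketch a justification of $F_\rho^{\chi,c}\Omega \neq 0$ which the paper merely asserts.
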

\begin{proof}
    First note, that $F_\rho^{\chi,c}\neq 0$ as well as $A_s$ and $B_s$, since the ground state is not contained in the respective kernels. Note that the respective implications from the right hand side to the left hand side are true by Lemma~\ref{lem:ab_tool_3}.
    For the first statement we see, using Lemma~\ref{lem:ab_tool_3} and \cite[Theorem 27.15]{HeRo:1970},
    \begin{align*}
        A_sF_\rho^{\chi,c}A_s=\frac1{|G|}\sum_{k\in G}\chi(k)F_\rho^{\chi,c}A_s = \delta_{\chi,\mr{id}}F_\rho^{\chi,c}A_s.
    \end{align*}
	Thus, if $[A_s,F_\rho^{\chi,c}]=0$ we have $A_sF_\rho^{\chi,c}=\delta_{\chi,\mr{id}}F_\rho^{\chi,c}A_s$ which is only true, if $\chi = \mr{id}$.
    Using a similar derivation for the second statement we get
    \begin{align*}
        [F_\rho^{\chi,c},B_s] = 0    &\implies   \delta_{c,e}F_\rho^{\chi,c}B_s=F_\rho^{\chi,c}B_s
    \end{align*}
    and thus $c=e$.
\end{proof}
If $\rho$ is an open ribbon, then the excitations created at its ends by applying some ribbon operator on the ground state can be detected by certain local operators (see \cite[Section B 9.]{PhysRevB.78.115421} for detailed definitions).
A particularly useful example is that of certain projections (compare \cite[Section C 3.]{PhysRevB.78.115421}). For this let $s=\partial_0\rho$ the initial site of $\rho$. Then
    \begin{align}\label{eq:chapro}
        D_s^{\xi,d}:=\frac1{|G|}\sum_{k\in G}\ovl{\xi(k)}A_s^kB^d
    \end{align}
    detects the charge created by $F_\rho^{\chi,c}$ in the following sense:
    \begin{align*}
        D_s^{\xi,d}F_\rho^{\chi,c}\Omega   =  \frac1{|G|}\sum_{g,h\in G}\ovl{\xi(g)}\ovl{\chi(\ovl gh)}F_\rho^{\ovl c,h}A_s^{g}B_s^{\ovl cd}\Omega  = \delta_{\xi,\chi}\delta_{c,d}F_\rho^{\chi,c}\Omega.
    \end{align*}
    Here we used Lemma~\ref{lem:ab_tool_3} and $(\xi,d),(\chi,c)$ denote irreducible representations of $\mc D(G)$.
    Note, that by Lemma~\ref{lem:indeprib}, $D_s^{\xi,d}\Omega=\delta_{\xi,\mr{id}}\delta_{d,e}\Omega$ and in particular the projection onto the ground state is given by $D_s^{\mr{id},e}$. In subsequent sections we will use the notion $D_s:=D_s^{\mr{id},e}$.

Under some circumstances we can extend ribbons by triangles without changing an associated ribbon operator. This will be of some use later.
\begin{lemma}\label{lem:ab_tool_1}
    Let $\rho$ be an open ribbon and denote $s_0:=\partial_0\rho$ and $s_1:=\partial_1\rho$. Pick $c\in G$ and an irreducible representation $\chi$ of $G$. If there is a direct triangle $\tau$ such that $\tau\rho$ is a ribbon the following holds:
    \begin{align*}
        [F_\rho^{\chi,c},A_{s_0}]=0\implies F_\rho^{\chi,c} = F_{\tau\rho}^{\chi,c}
    \end{align*}
    The analogue statement holds true if $\rho\tau$ is a ribbon and the ribbon operator commutes with the star operator at $s_1$.

    If there is a dual triangle $\tau'$ such that $\tau'\rho$ is a ribbon then
    \begin{align*}
         [F_\rho^{\chi,c},B_{s_0}]=0\implies F_\rho^{\chi,c} = F_{\tau'\rho}^{\chi,c}
    \end{align*}
    and again an analogue statement holds true if $\rho\tau'$ is a ribbon.
\end{lemma}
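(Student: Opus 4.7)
The overall plan is to combine Lemma~\ref{lem:abdecompose} (factorisation of ribbon operators under concatenation) with Lemma~\ref{lem:triv_charge} (characterisation of the commutation hypothesis) to reduce each of the four implications to a triviality statement about the single-triangle factor. I focus on prepending at $s_0$; the $s_1$-variants follow from the same argument applied to the decomposition $\rho\cdot\tau$ (or $\rho\cdot\tau'$) with the roles of the two factors in Lemma~\ref{lem:abdecompose} reversed, together with the $s_1$-analog of Lemma~\ref{lem:triv_charge} (which the proof of that lemma adapts routinely).

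For the dual-triangle case, the $B_s$-version of Lemma~\ref{lem:triv_charge} turns the hypothesis $[F_\rho^{\chi,c},B_{s_0}]=0$ into $c=e$. Substituting $c=e$ into~\eqref{eq:abribbon} and using the base case $F_{\tau'}^{g,h}=\delta_{e,h}L_{\tau'}^g$, the character sum collapses immediately to $F_{\tau'}^{\chi,e}=L_{\tau'}^e$, which equals the identity on the edge of $\tau'$ because $L_{\tau'}^e\ket{k}=\ket{k}$. Lemma~\ref{lem:abdecompose} then yields $F_{\tau'\rho}^{\chi,e}=F_{\tau'}^{\chi,e}F_\rho^{\chi,e}=F_\rho^{\chi,e}$, as required.

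For the direct-triangle case, Lemma~\ref{lem:triv_charge} reduces the hypothesis to $\chi=\mathrm{id}$. Using~\eqref{eq:abribbon} with the base case $F_\tau^{g,h}=T_\tau^g$, the operator $F_\tau^{\mathrm{id},c}$ collapses to an expression supported on the single edge of $\tau$, proportional to the diagonal projection $T_\tau^{\inv c}$. Lemma~\ref{lem:abdecompose} then gives $F_{\tau\rho}^{\mathrm{id},c}=F_\tau^{\mathrm{id},c}F_\rho^{\mathrm{id},c}$, so the remaining task is to show that this edge-supported factor acts as the identity when multiplied with $F_\rho^{\mathrm{id},c}$.

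The direct case is the technical heart of the proof and I expect it to be the main obstacle: unlike the dual case, where the label constraint $c=e$ literally forces $F_{\tau'}^{\chi,e}$ to be the identity, the constraint $\chi=\mathrm{id}$ leaves $F_\tau^{\mathrm{id},c}$ as a nontrivial projection on the newly prepended edge. I plan to resolve this by unwinding the recursion~\eqref{eq:ribdecompose} one further level at the $s_0$-endpoint of $\rho$, folding $\tau$ together with the first triangle of $\rho$ through the shared site $s_0$: the triviality of the charge at $s_0$ (equivalent to $\chi=\mathrm{id}$ by Lemma~\ref{lem:triv_charge}) together with the commutation relations~\eqref{eqn:ABF} should show that the projection $T_\tau^{\inv c}$ is already implicit in the structure of $F_\rho^{\mathrm{id},c}$ at $s_0$, so that the edge factor is absorbed without altering the operator.
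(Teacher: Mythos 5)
Your overall strategy (Lemma~\ref{lem:triv_charge} to turn the commutation hypothesis into a constraint on the label, then Lemma~\ref{lem:abdecompose} to peel off the triangle factor) is exactly the paper's argument, and your treatment of the dual-triangle case is correct and matches the paper's computation; the $s_1$ variants are likewise dismissed as ``analogous'' in the paper.

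The gap is in the direct-triangle case, and it is not where you locate it. You assert that $F_\tau^{\mr{id},c}$ collapses to a nontrivial diagonal projection proportional to $T_\tau^{\inv{c}}$, and you propose to ``absorb'' this projection into $F_\rho^{\mr{id},c}$ by unwinding the recursion at $s_0$. This cannot work: since $\tau\rho$ is a ribbon, $\tau$ overlaps no triangle of $\rho$, so $T_\tau^{\inv{c}}$ acts on an edge on which $F_\rho^{\mr{id},c}$ acts trivially; no manipulation of $F_\rho^{\mr{id},c}$ at $s_0$ can produce or cancel a nontrivial operator on that fresh edge, and a genuinely nontrivial projection there would make the conclusion false. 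The resolution is that the triangle factor is already exactly the identity. For a direct triangle the projection label of $F_\tau^{g,h}$ is the index over which the character sum in~\eqref{eq:abribbon} runs: this is forced by the relation $F_\rho^{g,h}F_\rho^{k,l}=\delta_{h,l}F_\rho^{gk,l}$ (the first index multiplies, the second is the projection index), so the base case must be read as $F_\tau^{g,h}=T_\tau^{h}$ --- with your reading $F_\tau^{\ovl c,g}=T_\tau^{\ovl c}$ independent of $g$, all the operators $F_{\tau\rho}^{\ovl c,g}$ would coincide and the ribbon algebra would degenerate. Hence $\chi=\mr{id}$ gives $F_\tau^{\mr{id},c}=\sum_{g}T_\tau^{g}=I$, and the claim follows at once from Lemma~\ref{lem:abdecompose}, exactly parallel to your dual case. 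This is precisely the paper's computation, $F_{\tau\rho}^{\mr{id},c}=\sum_{g,k}T_\tau^{g}F_\rho^{\ovl c,\ovl g k}=F_\rho^{\mr{id},c}$ using $\sum_g T_\tau^g=I$; the ``technical heart'' you anticipate is an artefact of the index placement, and your proposed workaround does not close the case as written.
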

\begin{proof}
    By Lemma~\ref{lem:triv_charge} $[F_\rho^{\chi,c},A_{s_0}]=0$ implies $\chi =\mr{id}$. Hence $F_\rho^{\chi,c}=F_\rho^{\mr{id},c}$ and therefore
    \begin{align*}
        F_{\tau\rho}^{\mr{id},c} = \sum_{g,k\in G}T_\tau^gF_\rho^{\ovl c,\ovl gk} = F_{\rho}^{\mr{id},c}
    \end{align*}
    since $\sum_{g\in G}T_\tau^g=I$.
    Analogously the other case.
    For if $\tau'\rho$ is a ribbon $ [F_\rho^{\chi,c},B_{s_0}]=0\implies c=e$ and
    \begin{align*}
        F_{\tau\rho}^{\chi,e} = \sum_{g,k\in G}\ovl{\chi(k)}L_{\tau'}^{e}\delta_{g,e}F_\rho^{e,\ovl gk}=F_\rho^{\chi,e}
    \end{align*}
    and again analogously for the second case.
\end{proof}
Since $G$ is abelian we also have that ribbon operators of closed ribbons commute with all star and plaquette operators.
\begin{lemma}
    Let $\rho$ be any closed ribbon. Then for all $h,g,k\in G$
    \begin{align*}
        [F_\rho^{h,g},A^k]=0=[F_\rho^{h,g},B^k].
    \end{align*}
\end{lemma}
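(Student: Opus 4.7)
The plan is to reduce the commutation claim to the endpoint relations \eqref{eqn:ABF} by splitting the closed ribbon into two open sub-ribbons. The key feature of the abelian setting is that the conjugations appearing in \eqref{eqn:ABF} trivialize, which makes the twists at the two endpoints of the split cancel.

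Fix a site $t$ at which I wish to check commutation with $A_t^m$ and $B_t^m$. If neither the star nor the plaquette at $t$ shares any edges with $\rho$, the commutation is immediate. Otherwise, choose a decomposition $\rho = \alpha\beta$ so that $t$ is an endpoint of both $\alpha$ and $\beta$: if $t = \partial_0\rho = \partial_1\rho$, split at any interior site $s'$ of $\rho$, obtaining $\partial_0\alpha = t = \partial_1\beta$ and $\partial_1\alpha = s' = \partial_0\beta$; if $t$ is an interior site of $\rho$, split at $t$ itself, obtaining $\partial_1\alpha = t = \partial_0\beta$. In either case the ribbon decomposition rule \eqref{eq:ribdecompose} combined with the abelian identity $h^{-1} g h = g$ gives
\[F_\rho^{h,g} = \sum_{k\in G} F_\alpha^{h,k}\, F_\beta^{h, k^{-1} g}.\]

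Now move $A_t^m$ through each factor using \eqref{eqn:ABF}. In the first geometric case, the $s_0$-rule applied to $\alpha$ gives $A_t^m F_\alpha^{h,k} = F_\alpha^{h,mk} A_t^m$ and the $s_1$-rule applied to $\beta$ gives $A_t^m F_\beta^{h, k^{-1} g} = F_\beta^{h, k^{-1} g m^{-1}} A_t^m$; substituting and re-indexing $k \mapsto m^{-1} k'$ produces $\sum_{k'} F_\alpha^{h,k'} F_\beta^{h, k'^{-1} m g m^{-1}} A_t^m = F_\rho^{h,g} A_t^m$, where the last equality uses $m g m^{-1} = g$. The second geometric case is handled identically with the two endpoint rules swapped. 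For $B_t^m$, the $s_0$-rule introduces a twist $B_t^{mh}$ which the subsequent $s_1$-rule then transforms to $B_t^{h^{-1} m h}$; the abelian identity $h^{-1} m h = m$ makes both twists cancel, so $[F_\rho^{h,g}, B_t^m] = 0$.

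The main obstacle is careful bookkeeping: one must apply the correct endpoint rule to each sub-ribbon and exploit the fact that in the abelian decomposition both $F_\alpha$ and $F_\beta$ share the same first index $h$, which is exactly what makes the $B$-case twists collapse. For non-abelian $G$ the identities $m g m^{-1} = g$ and $h^{-1} m h = m$ both fail, which is consistent with the lemma being restricted to the abelian case.
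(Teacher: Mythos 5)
Your computation is a legitimate self-contained proof route: the paper itself does not prove this lemma but defers to Appendix B.5 of Bombin and Mart\'in-Delgado, so any argument here is ``different from the paper's''. The heart of your argument is correct: writing $F_\rho^{h,g}=\sum_k F_\alpha^{h,k}F_\beta^{h,\ovl{k}g}$ (valid by \eqref{eq:ribdecompose} once the conjugation is trivialised), pushing $A_t^m$ through with the $\partial_0$-rule on one factor and the $\partial_1$-rule on the other, and re-indexing $k\mapsto \ovl{m}k'$ does reproduce $F_\rho^{h,g}A_t^m$; and for $B_t^m$ the two twists $m\mapsto mh\mapsto \ovl{h}mh=m$ indeed cancel. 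You also correctly identify that commutativity of $G$ is used exactly at $mgm^{-1}=g$ and $h^{-1}mh=m$, which is why the lemma sits in the abelian section.

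There is, however, a gap in your reduction to the endpoint relations. Your case analysis tacitly assumes that every site $t$ whose star or plaquette meets the support of $F_\rho^{h,g}$ is either the base site $\partial_0\rho=\partial_1\rho$ or a site visited by $\rho$ (an endpoint of one of its triangles), so that after splitting, the relations \eqref{eqn:ABF} account for the entire interaction. That is not automatic: for instance, the edge of a dual triangle has a second endpoint vertex that need not carry any site of the ribbon, yet the star there acts on that edge; and a self-intersecting closed ribbon can meet the star or plaquette of a single site through several of its triangles, in which case a single endpoint rule per half does not capture all of the interaction. Commutation in these situations is not edge-by-edge (single $L$- and $T$-type factors on a common edge do not commute) but relies on the summation structure of the ribbon operator. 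The clean repair is to invoke the fact stated in the paper immediately after \eqref{eqn:ABF} (and proved in Bombin's appendix): a ribbon operator commutes with \emph{every} star and plaquette operator except at its two endpoint sites. Granting that, the whole lemma collapses to the single base site of the closed ribbon, where your splitting-and-cancellation computation is exactly what is needed, and your separate ``interior site'' case becomes superfluous.
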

The proof can be found in~\cite[Appendix B.5]{PhysRevB.78.115421}. A somewhat weaker statement of this is also true if we remove one triangle from a closed ribbon.
\begin{lemma}\label{lem:ab_tool_2}
    Let $\rho$ be an open ribbon such that there is a direct triangle $\tau$ with $\tau\rho$ is a closed ribbon. Then, with $\chi,c,s_0$ as above,
    \begin{align*}
        [A_{s_0},F_\rho^{\chi,c}]= 0\implies [B_{s_0},F_\rho^{\chi,c}]=0
    \end{align*}
    Given instead that there is a dual triangle $\tau'$ such that $\tau'\rho$ is a closed ribbon. Then
    \begin{align*}
         [B_{s_0},F_\rho^{\chi,c}]= 0\implies [A_{s_0},F_\rho^{\chi,c}]=0
    \end{align*}
\end{lemma}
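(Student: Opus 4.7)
The plan is to chain together the two preceding lemmas: first use Lemma~\ref{lem:ab_tool_1} to rewrite $F_\rho^{\chi,c}$ as a ribbon operator on an \emph{extended}, now \emph{closed} ribbon, and then invoke the lemma about closed-ribbon operators commuting with every star and plaquette to deduce the other commutation relation.

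More concretely, for the first implication assume that $\tau$ is a direct triangle with $\tau\rho$ closed and $[A_{s_0}, F_\rho^{\chi,c}] = 0$. Since $s_0 = \partial_0\rho$ is precisely the site at which $\tau$ is glued to $\rho$, the hypothesis of Lemma~\ref{lem:ab_tool_1} is met, and we obtain the identity
\[
F_\rho^{\chi,c} = F_{\tau\rho}^{\chi,c}.
\]
Because $\tau\rho$ is a closed ribbon, the preceding (unnumbered) lemma tells us that $F_{\tau\rho}^{h,g}$ commutes with every $A^k$ and $B^k$, and consequently (by linearity in the expansion~\eqref{eq:abribbon}) so does $F_{\tau\rho}^{\chi,c}$. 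In particular $[B_{s_0}, F_{\tau\rho}^{\chi,c}] = 0$, and through the equality above this gives $[B_{s_0}, F_\rho^{\chi,c}] = 0$, as desired.

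For the second implication the argument is symmetric: assume $\tau'$ is a dual triangle with $\tau'\rho$ closed and $[B_{s_0}, F_\rho^{\chi,c}] = 0$. The dual-triangle part of Lemma~\ref{lem:ab_tool_1} then yields $F_\rho^{\chi,c} = F_{\tau'\rho}^{\chi,c}$, and closedness of $\tau'\rho$ again forces commutation with every star and plaquette operator, in particular $[A_{s_0}, F_\rho^{\chi,c}] = 0$.

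There is no real obstacle here; the only mild thing to watch is that Lemma~\ref{lem:ab_tool_1} has been stated for the \emph{starting} site $s_0 = \partial_0\rho$, which is exactly the site at which the extra triangle $\tau$ (resp.\ $\tau'$) is attached in the hypothesis of Lemma~\ref{lem:ab_tool_2}, so no rearrangement of endpoints is needed. (Had the hypothesis instead been $\rho\tau$ closed, one would apply the analogous ``$s_1$'' version stated at the end of Lemma~\ref{lem:ab_tool_1}, with the obvious notational changes.)
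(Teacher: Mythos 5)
Your proof is correct and follows essentially the same route as the paper: apply Lemma~\ref{lem:ab_tool_1} to replace $F_\rho^{\chi,c}$ by the ribbon operator on the closed ribbon $\tau\rho$ (resp.\ $\tau'\rho$), then use the fact that closed-ribbon operators commute with all star and plaquette operators. The paper's proof is just a terser version of yours, leaving implicit the linearity step via equation~\eqref{eq:abribbon} that you spell out.
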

\begin{proof}
    1.) The premises imply, by Lemma~\ref{lem:ab_tool_1}, that $F_{\rho}^{\chi,c}=F_{\tau\rho}^{\chi,c}$, and since $\tau\rho$ is a closed ribbon the claim follows.

    2.) The premises imply by Lemma~\ref{lem:ab_tool_1}, that $F_{\rho}^{\chi,c}=F_{\tau'\rho}^{\chi,c}$, and since $\tau'\rho$ is a closed ribbon the claim follows.
\end{proof}

\subsection{More on commutation relations}\label{sec:rib_comm}
In order to discuss statistics of superselection sectors later in Section~\ref{sec:irred_secs} we have to worry about commutation relations of ribbons.
In particular we want to know the commutation relations of ribbons which overlap at their ends and of ribbons that cross each other once, since then the commutation relations of associated ribbon operators reflect the braiding and fusion structure of irreducible representations of $\mc D(G)$ (c.f.~\cite{MR1951039}).

We start with some finite group $G$ and two ribbons $\rho,\sigma$. We say that $\rho,\sigma$ \emph{start at the same site} $s$ if there is a direct triangle $\tau$, a dual triangle $\tau'$, ribbons $\tilde\rho,\rho',\sigma'$ such that $\partial_0\tilde\rho=s$, $\rho'\cap\sigma'=\emptyset$ and $\rho=\rho'\tau'\tilde\rho,\sigma=\sigma'\tau\tilde\rho$ (in \cite{PhysRevB.78.115421} this is called a left joint).
The commutation relations of associated ribbon operators are then given by
\begin{align}\label{eq:rib_braid}
    F_\rho^{p,q}F_\sigma^{s,t} = F_\sigma^{ps\ovl p,pt}F_\rho^{p,q}.
\end{align}
Similarly, $\rho,\sigma$ \emph{end at the same site} $s$ if $\rho=\tilde\rho\tau'\rho'$, $\sigma=\tilde\rho\tau\sigma'$ and $\partial_1\tilde\rho=s$ (which is called a right joint in \cite{PhysRevB.78.115421}).
The corresponding commutation relations are given by
\begin{align*}
    F_\rho^{p,q}F_\sigma^{s,t}   &=  F_\sigma^{s,t\ovl q\ovl pq}F_\rho^{p,q}.
\end{align*}
Note that in the remaining possible cases for $\rho$ and $\sigma$, i.e. $\rho$ ends at the same site at which $\sigma$ starts we have that the ribbon operators commute.

We have particular interest in the commutation relations for finite abelian groups $G$. Here, for instance, equation~\eqref{eq:rib_braid} becomes
\begin{align*}
    F_\rho^{p,q}F_\sigma^{s,t} = F_\sigma^{s,pt}F_\rho^{p,q}.
\end{align*}
Furthermore, with the notation $(\chi,c),(\xi,d)$ for irreducible representations of $\mc D(G)$, the commutation relations for ribbons $\rho,\sigma$ starting at the same site give
\begin{align*}
    F_\rho^{\chi,c}F_\sigma^{\xi,d} &=  \ovl{\xi(c)}F_\sigma^{\xi,d}F_\rho^{\chi,c},
\end{align*}
and for ribbons ending at the same site we have
\begin{align*}
    F_\rho^{\chi,c}F_\sigma^{\xi,d} &=  \xi(c)F_\sigma^{\xi,d}F_\rho^{\chi,c}.
\end{align*}
Now consider two ribbons $\rho,\sigma$ that \emph{cross each other}, meaning there are ribbons $\rho_1,\rho_2,\sigma_1,\sigma_2$ such that $\rho=\rho_1\rho_2$, $\sigma=\sigma_1\sigma_2$, $\rho_i\cap\sigma_i\neq\emptyset, i=1,2$ and $\partial_1\rho_1=\partial_0\sigma_2$, $\partial_1\sigma_1=\partial_0\rho_2$. I.e. $\rho_1,\sigma_1$ end at the same site as $\sigma_2,\rho_2$ start at.
Such a situation is illustrated in Figure~\ref{fig:rib_cross}. The commutation relations then are
\begin{align}\label{eq:abel_braid}
    F_{\rho}^{p,q}F_{\sigma}^{s,t}=F_\sigma^{s,\ovl pt}F_\rho^{p,\ovl sq}.
\end{align}
Applied on ribbon operators labelled by irreducible representation of $\mc D(G)$ this gives
\begin{equation}
	\label{eq:irrepbraid}
	\begin{split}
    F_\rho^{\chi,c}F_\sigma^{\xi,d} &=  \sum_{g,h\in G} \ovl\chi(\ovl dg)\ovl\xi(\ovl ch)F_\sigma^{\ovl d,h}F_\rho^{\ovl c,g}\\
        &=  \chi(d)\xi(c)F_\sigma^{\xi,d}F_\rho^{\chi,c},
	\end{split}
\end{equation}
where we used $F_\rho^{\chi,c}=\sum_{g\in G}\ovl\chi(g)F_\rho^{\ovl c,g}$ and with the usual notation $(\chi,c)$ and $(\xi,d)$ for irreducible representations of $\mc D(G)$.
\begin{figure}
    \centering
    \includegraphics[width=0.35\linewidth]{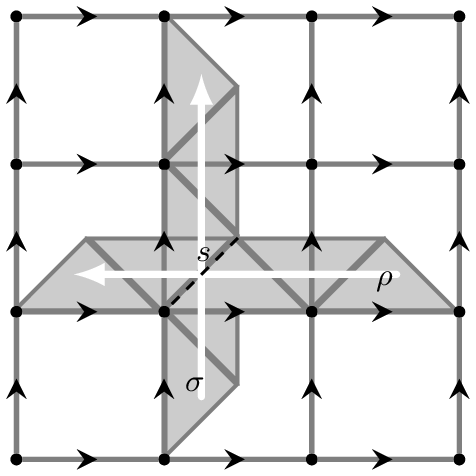}
    \caption{Two ribbons $\rho,\sigma$ crossing each other at site $s$. }
    \label{fig:rib_cross}
\end{figure}

With the commutation relations at hand we can prove the following technical lemma, which will be used later in one of the proofs for Haag duality.
It basically states that if at a site $s$ there is an excitation of the ground state created by multiple ribbons then it can be created by a single ribbon ending at $s$.
The excitations at the remaining spots different from $s$ are created by ribbon operators connecting those sites with each other.
\begin{lemma}\label{lem:tech}
    Let $\rho_1,\dots,\rho_n$ be open ribbons and $s$ be some site.
    Assume that $(\forall i\in\{1,\dots,n\})(\exists! j\in\{0,1\}):\ \partial_j\rho_i=s$. This gives a map $\{1,\dots,n\}\ni i\mapsto j_i\in\{0,1\}$.
    Furthermore assume that for all $i,i'\in\{1,\dots,n\}$ it holds that $\partial_{1-j_i}\rho_i\neq\partial_{1-j_{i'}}\rho_{i'}$.
    Let $\chi_i, i=1,\dots,n$ be irreducible representations of $G$ and elements $c_i\in G$, $i=1,\dots,n$.
    Set $\chi:=\chi_1\cdots\chi_n$ and $c:=c_1\cdots c_n$.

    Then there are ribbons $\sigma_1,\dots,\sigma_{n-1}$ with $\{\partial_0\sigma_k,\partial_1\sigma_k|k=1,\dots,n-1\}=\{\partial_{1-j_i}\rho_i|i=1,\dots,n\}$,
    a ribbon $\gamma$ with $\partial_0\gamma=s$ and $\partial_1\gamma=\partial_{1-j_i}\rho_i$ for some $i\in\{1,\dots,n\}$, and irreducible representations $\xi_1,\dots,\xi_n$ of $G$ and elements $d_1,\dots,d_{n-1}\in G$ such that
    \begin{align*}
        F_{\rho_1}^{\chi_1,c_1}\cdots F_{\rho_n}^{c_n}\Omega = zF_{\sigma_1}^{\xi_1,d_1}\cdots F_{\sigma_{n-1}}^{\xi_{n-1},d_{n-1}}F_\gamma^{\chi,c}\Omega
    \end{align*}
    where $z\in\mb C$ and $|z|=1$.
\end{lemma}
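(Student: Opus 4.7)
The plan is to proceed by induction on $n$. The base case $n=1$ is essentially trivial when $j_1=0$: take $\gamma=\rho_1$, no $\sigma_k$'s, and $z=1$. When $j_1=1$ the ribbon $\rho_1$ ends rather than starts at $s$, so I would pass to an inversion $\gamma$ of $\rho_1$ (as in Lemma~\ref{lem:ribinv}). Using the expansion~\eqref{eq:abribbon} together with the identity $F_{\rho}^{h,g}\Omega=F_{\overline\rho}^{\overline h,\overline g}\Omega$ (which follows from Lemma~\ref{lem:ribinv} applied against arbitrary quasilocal operators and the cyclicity of $\Omega$) expresses $F_{\rho_1}^{\chi_1,c_1}\Omega$ as a ribbon operator on $\gamma$ with appropriate charge labels, up to a unimodular phase.

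For the inductive step I would reduce from $n$ ribbons meeting $s$ to $n-1$ by fusing two of them. After the base-case reorientation we may assume that $\rho_1$ and $\rho_2$ both start at $s$; write $s_i:=\partial_1\rho_i$. Choose a ribbon $\sigma$ from $s_1$ to $s_2$ such that $\rho_1\sigma$ is again a ribbon (admissible by Lemma~\ref{lem:indeprib}, since only the action on $\Omega$ matters). Then
\[
F_{\rho_2}^{\chi_2,c_2}\Omega=F_{\rho_1\sigma}^{\chi_2,c_2}\Omega=F_{\rho_1}^{\chi_2,c_2}F_\sigma^{\chi_2,c_2}\Omega
\]
by Lemma~\ref{lem:indeprib} and Lemma~\ref{lem:abdecompose}, and since $\rho_1$ ends where $\sigma$ starts these two factors commute (the trivial case of the joint relations of Section~\ref{sec:rib_comm}). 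Combining with the fusion identity $F_{\rho_1}^{\chi_1,c_1}F_{\rho_1}^{\chi_2,c_2}=F_{\rho_1}^{\chi_1\chi_2,c_1c_2}$ stated after~\eqref{eq:abribbon} yields
\[
F_{\rho_1}^{\chi_1,c_1}F_{\rho_2}^{\chi_2,c_2}\Omega=F_{\rho_1}^{\chi_1\chi_2,c_1c_2}F_\sigma^{\chi_2,c_2}\Omega.
\]
Now $\rho_1$ carrying the merged charge $(\chi_1\chi_2,c_1c_2)$, together with $\rho_3,\dots,\rho_n$, satisfies the hypotheses of the lemma for $n-1$ ribbons with outer endpoints $\{s_1,s_3,\dots,s_n\}$. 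Applying the induction hypothesis produces $\gamma$ from $s$ to some $s_i$ with total charge $(\chi,c)$ and $n-2$ further ribbons covering $\{s_1,s_3,\dots,s_n\}$; adjoining $\sigma$ as $\sigma_{n-1}$ with labels $(\chi_2,c_2)$ extends the covered set to $\{s_1,\dots,s_n\}$, as required. Reordering $F_\sigma^{\chi_2,c_2}$ so that $F_\gamma^{\chi,c}$ sits at the right of the product uses the joint or braiding relations of Section~\ref{sec:rib_comm} and contributes at most another unimodular phase.

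The main obstacle I expect is not conceptual but the bookkeeping of phase factors. Three sources of unimodular scalars must be tracked and combined into a single $z\in U(1)$: the reversals via Lemma~\ref{lem:ribinv} used to orient the ribbons attached to $s$, which also turn $(\chi_i,c_i)$ into $(\overline{\chi_i},\overline{c_i})$ and therefore force a consistent convention so that the final charge on $\gamma$ is literally $\chi_1\cdots\chi_n$ and $c_1\cdots c_n$; the commutation of $F_\sigma^{\chi_2,c_2}$ past the remaining ribbon operators at the shared site $s_1$, or across one of the $\rho_j$ if $\sigma$ had to be deformed through it, governed by~\eqref{eq:rib_braid} and~\eqref{eq:abel_braid}; and the final reordering needed to place $F_\gamma^{\chi,c}$ to the right. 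Each such contribution is either trivial or a character value on a group element, hence unimodular, so the accumulated scalar satisfies $|z|=1$.
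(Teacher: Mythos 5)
Your proposal is correct and follows essentially the same route as the paper: induction on the number of ribbons, reorientation via Lemma~\ref{lem:ribinv} so that all ribbons start at $s$, and a pairwise fusion step that reroutes one ribbon's charge along another via Lemma~\ref{lem:indeprib} and Lemma~\ref{lem:abdecompose}, leaving a connector ribbon between the two outer endpoints and accumulating unimodular phases from the joint/crossing relations of Section~\ref{sec:rib_comm}. The only (cosmetic) differences are that you fuse $\rho_1,\rho_2$ first and then invoke the induction hypothesis, whereas the paper inducts on $\rho_2,\dots,\rho_n$ first and then fuses $\rho_1$ with the resulting $\gamma$ by inserting $F_{\ovl\gamma}^{\chi_1,c_1}F_{\gamma}^{\chi_1,c_1}\Omega=\Omega$; and you are, if anything, more explicit than the paper about the conjugation of labels $(\chi_i,c_i)\mapsto(\ovl\chi_i,\ovl c_i)$ incurred by the reorientation step.
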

\begin{proof}
    The proof works by induction over the number of ribbons.
	By means of inversions of ribbons, i.e. Lemma~\ref{lem:ribinv}, we can assume without loss of generality that $j(\{1,\dots,n\})=\{0\}$ for any $n>0$. In other words we assume that all ribbons involved have their starting point at $s$ since otherwise we could invert them due to the aforementioned lemma.

    If $n=1$ the claim is trivial. We will elaborate on the case $n=2$ since this illustrates the basic idea of the proof.
    Let $\rho_1,\rho_2$ be ribbons as in the assumptions. Let $\chi_1,\chi_2$ be irreducible representations of $G$ and $c_1,c_2\in G$.
    Let $\overline\rho_1$ be an inversion of $\rho_1$ such that $\rho_2\ovl\rho_1$ is a ribbon.
    Then by Lemma~\ref{lem:indeprib} and Lemma~\ref{lem:ribinv} we have
    \begin{align*}
        F_{\rho_1}^{\chi_1,c_1}F_{\rho_2}^{\chi_2,c_2}\Omega &=  F_{\rho_1}^{\chi_1,c_1}F_{\rho_2}^{\chi_2,c_2}F_{\ovl\rho_1}^{\chi_2,c_2}F_{\rho_1}^{\chi_2,c_2}\Omega\\
        &=  zF_{\rho_1}^{\chi_1\chi_2,c_1c_2}F_{\rho_2}^{\chi_2,c_2}F_{\rho_2\ovl\rho_1}\Omega
    \end{align*}
    where $z$ is the factor given by the commutation relations in equation~\eqref{eq:rib_braid}.
    Now let $\sigma$ be a deformation of $\rho_2\ovl\rho_1$ such that $s\neq\sigma$. We then have
    \begin{align*}
        F_{\rho_1}^{\chi_1,c_1}F_{\rho_2}^{\chi_2,c_2}\Omega = zF_{\rho_1}^{\chi_1\chi_2,c_1c_2}F_\sigma^{\chi_2,c_2}\Omega
    \end{align*}
    as claimed.

    Now let $\rho_1,\dots,\rho_n$ be ribbons as in the preamble of the Lemma and assume that the claim holds for all any $n-1$ such ribbons.
    Let $\chi_1,\dots,\chi_n$ be irreducible representation of $G$ and $c_1,\dots,c_n\in G$. Set $\xi:=\chi_2\cdots\chi_n$ and $d:=c_2\cdots c_n$.
    Then
    \begin{align*}
        F_{\rho_1}^{\chi_1,c_1}\cdots F_{\rho_n}^{\chi_n,c_n}\Omega &=  zF_{\rho_1}^{\chi_1,c_1}F_{\sigma_2}^{\xi_2,d_2}\cdots F_{\sigma_{n-1}}^{\xi_{n-1},d_{n-1}}F_\gamma^{\xi,d}\Omega
    \end{align*}
    where the ribbons $\gamma,\sigma_k$, irreducible representations $\xi_k$ and $c_k\in G$ are corresponding to the claim.
    Let $\ovl\gamma$ be an inversion of $\gamma$ such that $\rho_1\ovl\gamma$ is a ribbon. Let $\sigma_1$ be a deformation of $\rho_1\ovl\gamma$.
    Again, using the same Lemmas as above we have
    \begin{align*}
        F_{\rho_1}^{\chi_1,c_1}\cdots F_{\rho_n}^{\chi_n,c_n}\Omega &=  \tilde zF_{\sigma_2}^{\xi_2,d_2}\cdots F_{\sigma_{n-1}}^{\xi_{n-1},d_{n-1}}F_{\rho_1}^{\chi_1,c_1}F_\gamma^{\xi,d}\Omega\\
            &= yF_{\sigma_2}^{\xi_2,d_2}\cdots F_{\sigma_{n-1}}^{\xi_{n-1},d_{n-1}}F_{\rho_1}^{\chi_1,c_1}F_\gamma^{\xi,d}F_{\ovl\gamma}^{\chi_1,c_1}F_{\gamma}^{\chi_1,c_1}\Omega\\
            &=  \tilde yF_{\sigma_2}^{\xi_2,d_2}\cdots F_{\sigma_{n-1}}^{\xi_{n-1},d_{n-1}}F_{\gamma}^{\xi\chi_1,cc_1}F_{\rho_1\ovl\gamma}^{\chi_1,c_1}\Omega\\
            &=  \tilde yF_{\sigma_2}^{\xi_2,d_2}\cdots F_{\sigma_{n-1}}^{\xi_{n-1},d_{n-1}}F_{\gamma}^{\xi\chi_1,cc_1}F_{\sigma_1}^{\chi_1,c_1}\Omega\\
            &=  \hat y F_{\sigma_2}^{\xi_2,d_2}\cdots F_{\sigma_{n-1}}^{\xi_{n-1},d_{n-1}}F_{\sigma_1}^{\chi_1,c_1}F_{\gamma}^{\xi\chi_1,cc_1}\Omega.
    \end{align*}
    The factors $\tilde z,y,\tilde y$ and $\hat y$ are products with $z$ and phase factors resulting from the commutation relations of the ribbon operators (c.f. Section~\ref{sec:rib_comm}).
    The last expression is of the form as in the claim.
\end{proof}
\section{Uniqueness of translational invariant ground state}
\label{sec:ground}
We now outline the proof that for each finite group $G$ (not necessarily abelian!), the quantum double model has a unique translational invariant ground state.\footnote{As was pointed out to us by Bruno Nachtergaele, the claim in~\cite{MR2345476} about uniqueness of the ground state is not entirely correct. By modifying finite volume boundary conditions and taking the thermodynamic limit, it is possible to obtain additional (algebraic) ground states. These states however are not invariant with respect to translations. Examples of such states are given by the ``single anyon states'', cf.~\cite[Prop 3.2]{toricendo}.} In case the model is defined by an oriented lattice on a compact surface it is known that the ground space degeneracy is the number of flat $G$-connections up to conjugation~\cite{MR1951039}, hence it is no surprise that in this infinite but topologically trivial setting we find a unique translational invariant ground state. The proof we discuss here is based on the proof in~\cite{phdnaaijkens}, where the full details can be found, which in turn is partly based on ideas of~\cite{MR2345476}.

Each term in the local Hamiltonians only acts on the bonds of a star or of a plaquette. Moreover, in the present situation of a square lattice, there is an obvious action of the group $\mathbb{Z}^2$ by translations. It follows that the local Hamiltonians $H_\Lambda$ are defined by a bounded, translation invariant interaction $\Phi$. Since the interaction is of bounded range and translationally invariant there is a corresponding one-parameter group $\alpha_t$ of automorphisms of $\alg{A}$ describing the time evolution~\cite{MR1441540}. The next lemma is useful when discussing ground states with respect to these dynamics.

\begin{lemma}[\hspace{-0.03cm}\cite{MR2345476}]
	\label{lem:state}
	Let $\omega$ be a state on a unital $C^*$-algebra $\alg{A}$, and suppose $X \in \alg{A}$ satisfies $X=X^*$, $X \leq I$, and $\omega(X) = 1$. Then $\omega(XY) = \omega(YX) = \omega(Y)$ for any $Y \in \alg{A}$.
\end{lemma}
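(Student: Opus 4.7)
The plan is a direct application of the Cauchy--Schwarz inequality for states. Recall that for any state $\omega$ on a unital $C^*$-algebra, the map $(A,B) \mapsto \omega(A^*B)$ is a positive semi-definite sesquilinear form, so $|\omega(A^*B)|^2 \leq \omega(A^*A)\,\omega(B^*B)$ for all $A,B$ in the algebra.

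The first step is to pass to an element whose square is easy to control. Since $X = X^*$ and $X \leq I$, the element $I - X$ is positive, so continuous functional calculus produces a positive square root $T \in \alg{A}$ with $T = T^*$ and $T^2 = I - X$. The hypotheses $\omega(X) = 1$ and $\omega(I) = 1$ then give $\omega(T^*T) = \omega(T^2) = 0$.

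Plugging this into Cauchy--Schwarz with $A = T$ and $B = TY$ yields
\[
|\omega((I-X)Y)|^2 \;=\; |\omega(T \cdot TY)|^2 \;\leq\; \omega(T^2)\,\omega(Y^*T^2Y) \;=\; 0,
\]
which rearranges to $\omega(XY) = \omega(Y)$. The equality $\omega(YX) = \omega(Y)$ follows symmetrically: apply Cauchy--Schwarz with $A = Y^*$ and $B = T$ to conclude that $\omega(YT) = 0$ for every $Y$, then replace $Y$ by $YT$ to obtain $\omega(Y(I-X)) = 0$.

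I do not anticipate any real obstacle here. The only ingredient beyond Cauchy--Schwarz is existence of a positive square root, which is automatic inside a unital $C^*$-algebra once one knows $I - X \geq 0$, so the full proof should fit in a handful of lines. A conceptually equivalent route would be to pass to the GNS representation and argue that $\pi_\omega(X)\Omega_\omega = \Omega_\omega$, since $\langle \Omega_\omega, \pi_\omega(I-X)\Omega_\omega\rangle = 0$ together with positivity of $\pi_\omega(I-X)$ forces $\pi_\omega(I-X)^{1/2}\Omega_\omega = 0$; both arguments use the same fact at heart.
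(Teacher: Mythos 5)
Your proof is correct, and the paper itself gives no proof of this lemma (it is simply cited from the reference), so there is nothing to diverge from: the Cauchy--Schwarz argument you give, via the positive square root of $I-X$ and $\omega\bigl((I-X)\bigr)=0$, is exactly the standard argument behind the cited result. Both of your variants (direct Cauchy--Schwarz and the GNS reformulation $\pi_\omega(X)\Omega_\omega=\Omega_\omega$) are fine and complete.
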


The following characterisation of ground states for the quantum double model is inspired by results obtained in~\cite{MR2345476} for the special case of $G = \mathbb{Z}_2$.
\begin{proposition}\label{prop:kitground}
There exists a ground state $\omega_0$ for the dynamics of the quantum double model, which has the property that $\omega_0(A_s) = \omega_0(B_s) = 1$ for each site $s$. Moreover, every translation invariant ground state has this property.
\end{proposition}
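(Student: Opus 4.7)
My plan is to prove the two claims separately: first exhibit an explicit ground state $\omega_0$ with the stated property, and then argue that any translation invariant ground state must also have it via a mean-energy-density argument. The existence part will exploit the frustration-free structure of the Hamiltonian, while the second part relies on the standard variational characterisation of translation invariant ground states.

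For existence I would proceed finite-volume-by-finite-volume. Since all $A_s$ and $B_s$ are mutually commuting orthogonal projections by~\eqref{eq:agbh}, for each finite $\Lambda\in\mc P_f(\bonds)$ the operator
\[
P_\Lambda := \prod_{\str(s)\subset\Lambda} A_s \prod_{\plaq(s)\subset\Lambda} B_s
\]
is again a projection, which I would check is nonzero by exhibiting a concrete vector in its range — for instance, the product basis vector $\bigotimes_e\ket{e}$ is already stabilised by every $B_s$, and averaging it with the $A_s^g$ produces a unit vector in the common $+1$-eigenspace. Picking any unit vector $\Omega_\Lambda$ in the range of $P_\Lambda$, I would extend the corresponding vector state on $\alg A(\Lambda)$ to a state $\omega_\Lambda$ on $\alg A$ (e.g. by tensoring with a tracial state outside $\Lambda$), and take any weak-$*$ cluster point $\omega_0$ using Banach--Alaoglu. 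Because $\omega_\Lambda(A_s)=\omega_\Lambda(B_s)=1$ whenever $\Lambda$ contains the relevant star and plaquette, the property $\omega_0(A_s)=\omega_0(B_s)=1$ passes to the limit. To verify that $\omega_0$ is an algebraic ground state for $\alpha_t$, I would check the derivation inequality $-i\omega_0(X^*\delta(X))\geq 0$ on local $X$: the commutator $\delta(X)=i[H_{\Lambda'},X]$ stabilises for large enough $\Lambda'$, and the inequality reduces to a finite-volume computation inside $\omega_{\Lambda}$, where it holds because $P_\Lambda$ projects onto the ground space of $H_\Lambda$ up to an overall constant.

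For the second claim let $\omega$ be a translation invariant ground state. Translation invariance together with the fact that all stars (resp.\ all plaquettes) are translates of one another implies that $a:=\omega(A_s)$ and $b:=\omega(B_s)$ are independent of $s$. Since $0\leq A_s,B_s\leq I$ we have $a,b\leq 1$, so the mean energy density of $\omega$ equals $-a-b\geq -2$. The ground state $\omega_0$ constructed above achieves the minimum $-2$. Invoking the standard result that every translation invariant ground state of a translation invariant bounded-range interaction attains the infimum of the mean energy density over translation invariant states (see~\cite{MR1441540}), we conclude $a=b=1$, as required. As an alternative closing step I could appeal to Lemma~\ref{lem:state}: once $a=1$ it gives $\omega(A_sX)=\omega(XA_s)=\omega(X)$ for all $X$, which will be useful in later sections anyway.

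I expect the main obstacle to be the rigorous transfer of the finite-volume nonnegativity of $H_\Lambda$ (shifted by a constant) into the abstract derivation inequality for the weak-$*$ limit $\omega_0$; this is where one must be careful about the exchange of limits, although for frustration-free commuting-projector Hamiltonians this is essentially routine. The mean-energy-density step in the second part is standard but requires the right theorem from the $C^*$-dynamical systems literature to be cited with care, in particular because the model admits non-translation-invariant ground states (as pointed out in the footnote) which do \emph{not} satisfy the property.
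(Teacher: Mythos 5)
Your proposal is correct, and for the existence part it takes a genuinely different route from the paper. The paper does not build $\omega_0$ as a thermodynamic limit of finite-volume vector states; instead it defines the ``all spins up'' state directly on the abelian subalgebra generated by all $A_s$ and $B_s$ (two copies of a classical Ising system), extends it to $\alg{A}$ by Hahn--Banach, and then verifies the derivation inequality $-i\omega_0(X^*\delta(X))\geq 0$ in one stroke using Lemma~\ref{lem:state}: since $\omega_0(A_s)=\omega_0(B_s)=1$, that lemma lets one replace $\omega_0(X^*XH_{\Lambda'})$ by $-N\omega_0(X^*X)$ (with $N$ the number of terms), and positivity follows from $H_{\Lambda'}\geq -NI$. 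This is worth noting because the obstacle you flag --- transferring finite-volume positivity through the weak-$*$ limit --- evaporates entirely in the paper's scheme, and would also evaporate in yours: the property $\omega_0(A_s)=\omega_0(B_s)=1$ survives the cluster-point limit (as you observe), and once you have it you can apply Lemma~\ref{lem:state} \emph{directly to the limit state} rather than trying to push the finite-volume ground-space computation through the limit. Your construction buys concreteness (an explicit frustration-free vector in each finite volume); the paper's buys economy, and Lemma~\ref{lem:state} is reused heavily in the subsequent uniqueness argument anyway. For the second claim both you and the paper invoke the same mean-energy-density theorem (Theorem 6.2.58 of~\cite{MR1441540}); one small point you should make explicit is that the infimum of the mean energy over \emph{translation invariant} states is actually $-2$, which requires exhibiting a translation invariant state with $\omega(A_s)=\omega(B_s)=1$ --- your cluster point need not be translation invariant as constructed, but averaging it over translations (or choosing a translation invariant Hahn--Banach extension) preserves the stabiliser property and closes this gap.
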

\begin{proof}[Proof (sketch)]
	The star and plaquette algebras generate an abelian subalgebra of $\alg{A}$. We can identify each star and each plaquette with a classical Ising spin, hence this algebra describes two copies of the Ising model. The state we are looking for is the state with all spins in the up position. By the Hahn-Banach theorem there exists an extension to a state $\omega_0$ of $\alg{A}$. This is the state we are looking for: using Lemma~\ref{lem:state} it is straightforward to show that $-i \omega_0(X^* \delta(X)) \geq 0$ for all local observables $X$ (and $\delta$ the derivation implementing the dynamics). Hence $\omega_0$ is a ground state.

To show that any translational invariant ground state has this property, let $\omega_0$ be such a state. Since $A_s$ and $B_s$ are projections, it follows that $0 \leq \omega_0(A_s), \omega_0(B_s) \leq 1$. Because ground states minimise the mean energy $H_\Phi(\omega)$ by Theorem 6.2.58 of~\cite{MR1441540} one sees that we must have $\omega_0(A_s) = \omega_0(B_s) = 1$.
\end{proof}
To show that there is only one state on $\alg{A}$ with these properties, the idea is essentially to use Lemma~\ref{lem:state} again, just as it was used in the proof of the uniqueness of the translational invariant ground state of the toric code model~\cite{MR2345476}. The combinatorics, however, are much more involved. The proof consists of two steps. First we calculate the value of a ground state on certain products of projections acting on an individual site. In the second step this result is to calculate the expectation values of arbitrary local observables, showing that the ground state is completely fixed.

It was already remarked by Kitaev that the ground states of the quantum double model are related to so-called flat $G$-connections~\cite{MR1951039}. Here we have to consider \emph{local} observables, and hence it is enough to specify a $G$-connection for \emph{finite} parts of the system. The precise definition is a slight adaption from discrete gauge theory~\cite{MR2174961}:
\begin{definition}
	Let $F$ be a finite collection of faces and let $\Lambda \subset \bonds$ be the set of bonds bounding any face $f \in F$. A \emph{$G$-connection} $c$ is a map $c: \Lambda \to G$. A connection is called \emph{flat} if the monodromy around each face is trivial. That is, let $f \in F$ and list the edges $j_1, \dots j_n$ of $f$ in counter-clockwise order. Then the monodromy is trivial if $\sigma(c(j_1)) \sigma(c(j_2)) \cdots \sigma(c(j_n)) = e$, where $\sigma$ is as defined as follows: $\sigma(c(j)) = c(j)$ if the direction of $j$ coincides with the direction of the path around $f$, and $\inv{c(j)}$ otherwise. The set of all $G$-connections on $\Lambda$ will be denoted by $C_G(\Lambda)$, whose subset of flat connections is called $C^f_G(\Lambda)$.
\end{definition}
The constant map defined by $c_0(j) = e$ is trivially a flat $G$-connection, hence the set of flat connections is certainly non-empty. To each such a $G$-connection we can associate a projection, projecting on the basis vector $|c(j)\rangle$ at the site $j$. That is,
\[
	P_c = \prod_{j \in \Lambda} T^{c(j)}_{\tau(j)},
\]
where $\tau(j)$ is the direct triangle with edge $j$ whose orientation matches. Now, if $c$ is not a flat connection, there is a face $f$ with non-trivial monodromy. Let $s$ be a site with face $f$. Then it follows that $B_s P_c = 0$ (since $B_s$ projects on the subspace of trivial monodromy around the face $f$). With Lemma~\ref{lem:state} it follows that $\omega_0(P_c) = \omega_0(P_c B_s) = 0$ if $c$ is not flat. Now suppose that $c$ is a flat connection. Then one can show that $A^g_s P_c = P_{c'} A^g_s$, where $c'$ is also a flat connection. In fact by a sequence of such moves one can go from any flat connection $c$ to any other flat connection $c'$. With the same Lemma as before one then deduces the following  Lemma. For a detailed proof we refer to~\cite{phdnaaijkens}.
\begin{lemma}
Let $c \in C_G(\Lambda)$ and suppose that $\omega_0$ is a ground state for the quantum double model. Then $\omega_0(P_c) = 1/|C_G^f(\Lambda)|$ if $c$ is flat, and zero otherwise. Here $|C_G^f(\Lambda)|$ is the number of flat $G$-connections.
\end{lemma}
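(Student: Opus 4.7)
The plan is to split into the cases $c$ flat and $c$ non-flat, each treated by an application of Lemma~\ref{lem:state}. For non-flat $c$, I would pick a face $f \in F$ on which the monodromy of $c$ is nontrivial and a site $s$ with that face. The plaquette projector $B_s$ projects onto configurations with trivial flux at $f$, whereas $P_c$ projects onto a basis vector with nontrivial flux, so $B_s P_c = 0$. Since $\omega_0(B_s) = 1$, Lemma~\ref{lem:state} applied with $X = B_s$ and $Y = P_c$ immediately yields $\omega_0(P_c) = \omega_0(P_c B_s) = 0$.

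For flat $c$ I would proceed in two steps. First, show that $\omega_0(P_{c'})$ is the same for every flat connection $c'$. The identity $A_s^g P_c (A_s^g)^* = P_{c^g}$, where $c^g$ denotes the gauge-transformed connection at the vertex of $s$, reduces this to the claim that $\omega_0 \circ \Ad(A_s^g) = \omega_0$ on each $P_c$. To verify this, note that each $A_s^g$ is unitary with $|\omega_0(A_s^g)| \leq 1$, while the convex average $\omega_0(A_s) = \frac{1}{|G|}\sum_g \omega_0(A_s^g)$ equals $1$; hence $\omega_0(A_s^g) = 1$ for every $g$. A short Cauchy--Schwarz estimate, using $\omega_0((I - A_s^g)^*(I - A_s^g)) = 0$, then gives $\omega_0(A_s^g X (A_s^g)^*) = \omega_0(X)$ for every $X \in \alg{A}$, and in particular $\omega_0(P_{c^g}) = \omega_0(P_c)$. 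Iterating over all vertices adjacent to $\Lambda$ and invoking the transitivity claim from the paragraph preceding the lemma, $\omega_0$ is constant on flat connections. Second, since the plaquette projectors $B_f$ for $f \in F$ mutually commute and the joint flat-flux subspace decomposes as the orthogonal sum of the images of the $P_{c'}$, one has
\[
\sum_{c' \in C_G^f(\Lambda)} P_{c'} = \prod_{f \in F} B_f.
\]
Iterated application of Lemma~\ref{lem:state} with $X = B_f$ gives $\omega_0\bigl(\prod_{f \in F} B_f\bigr) = 1$, so combined with the constancy from step one, $\omega_0(P_c) = 1/|C_G^f(\Lambda)|$ for every flat $c$.

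The main obstacle I expect is justifying the transitivity used above: that the gauge moves implemented by the $A_s^g$ actually connect every pair of flat connections on $\Lambda$. For simply connected $F$ this is essentially the elementary fact that a flat connection on a disk is a coboundary; for arbitrary finite $F$ (potentially with ``holes'') some additional combinatorial bookkeeping is required. The paragraph preceding the lemma asserts this transitivity explicitly and defers to~\cite{phdnaaijkens} for the detailed combinatorics, which I would follow rather than rework here.
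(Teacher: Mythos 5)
Your proof is correct and takes essentially the same route as the paper's (sketched) argument: $B_s P_c = 0$ plus Lemma~\ref{lem:state} for non-flat $c$, and the gauge moves $A_s^g P_c (A_s^g)^* = P_{c^g}$ together with transitivity on flat connections for the flat case. The two details you supply that the paper defers to~\cite{phdnaaijkens} --- the Cauchy--Schwarz argument giving $\omega_0 \circ \Ad(A_s^g) = \omega_0$ (needed since Lemma~\ref{lem:state} as stated applies to self-adjoint $X \leq I$ rather than to the unitaries $A_s^g$) and the normalization $\sum_{c'} P_{c'} = \prod_{f} B_f$ --- are both correct, and you rightly flag the transitivity for regions with holes as the point requiring the cited combinatorics.
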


As remarked before the operators $L^g T^h$ acting on the same edge form a set of matrix units for the local algebra. Hence every local observable can be written as a linear combination of operators of the form $X = L P_c$, where $c$ is a connection and $L$ is a product of operators of the form $L^g_j$. By the argument above it follows that $\omega_0(X) = 0$ if $c$ is not flat. By systematically multiplying $X$ on the left (right) by star operators (plaquette operators), one can ``clean up'' the observable $X$, and show that $\omega_0(X)$ is either zero, or equal to  $\omega_0(P_{c'})$ for some flat connection $c'$. This argument leads to the following theorem~\cite{phdnaaijkens}:
\begin{theorem}
Kitaev's quantum double model on a square lattice on the plane has a unique translational invariant ground state $\omega_0$, completely determined by $\omega_0(A_s) = \omega_0(B_s) = 1$. This state is pure.
\end{theorem}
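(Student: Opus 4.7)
The plan is to establish uniqueness of the translational invariant ground state by showing that the values $\omega(A_s) = \omega(B_s) = 1$ at every site already pin down a state on the quasi-local algebra $\alg{A}$ completely. Existence of such a state, and the fact that every translational invariant ground state enjoys this property, were handled in Proposition~\ref{prop:kitground}; purity will then follow by a short convex decomposition argument.

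First I would reduce to local algebras. Since every state on $\alg A$ is norm-continuous and $\alg A$ is the inductive limit of $\alg A(\mc O)$ over finite $\mc O$, it suffices to determine $\omega_0$ on each $\alg A(\mc O)$. Using that the matrix units $L^g_{\tau'} T^h_\tau$ on individual edges span $\alg A(\mc O)$, any local element is a linear combination of operators of the form $L P_c$, where $P_c = \prod_{j \in \mc O} T^{c(j)}_{\tau(j)}$ is the spectral projection associated with a $G$-connection $c$ and $L$ is a product of $L^g$-factors. The key input is Lemma~\ref{lem:state}: combined with $A_s\Omega = \Omega$ and the representation property $A_s^g A_s^{g'} = A_s^{gg'}$, it forces $A_s^g \Omega = \Omega$ for every $g$ (because the invariant projection for the representation is exactly $A_s$), and hence
\begin{equation*}
\omega_0(A_s^g Y) = \omega_0(Y A_s^g) = \omega_0(Y), \qquad \omega_0(B_s Y) = \omega_0(Y B_s) = \omega_0(Y)
\end{equation*}
for all $Y \in \alg A$.

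Next I would execute the cleaning-up procedure. If $c$ fails to be flat at some face $f \subset \mc O$, pick a site $s = (v,f)$; then $B_s P_c = 0$ because these two diagonal projections select incompatible monodromies around $f$, and therefore $\omega_0(L P_c) = \omega_0(L P_c B_s) = 0$. If $c$ is flat, I instead insert $A_s^g$ operators on the left and right and average over $g \in G$: the commutation relations between $A_s^g$ and the $L^h$ factors on the edges of $\str(s)$ show that averaging either kills the corresponding $L$-factor or collapses it to a scalar, and simultaneously transforms $P_c$ into $P_{c'}$ with $c'$ again flat. Iterating over every site that $L$ touches reduces $\omega_0(L P_c)$ to $\alpha \,\omega_0(P_{c'})$ for an explicit scalar $\alpha$ and a flat connection $c'$, which the Lemma stated just above the theorem evaluates to $\alpha/|C_G^f(\Lambda)|$. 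Norm density then extends this to all of $\alg A$, proving uniqueness of the state and hence of the translational invariant ground state.

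The main obstacle here is the combinatorial bookkeeping of the cleaning step: one must set up an induction on the support of $L$ so that each $A_s^g$-insertion genuinely simplifies $L$ without producing residual factors outside $\mc O$, and verify that every transformation preserves flatness of the connection. This is the content of the detailed computation carried out in \cite{phdnaaijkens}. Finally, for purity, suppose $\omega_0 = \lambda \omega_1 + (1-\lambda)\omega_2$ with $\lambda \in (0,1)$ and $\omega_i$ states on $\alg A$. Since $A_s$ and $B_s$ are projections with $\omega_0$-expectation $1$, the inequalities $0 \leq \omega_i(A_s), \omega_i(B_s) \leq 1$ together with the convex combination identity force $\omega_i(A_s) = \omega_i(B_s) = 1$ for $i = 1,2$, and the uniqueness just established gives $\omega_1 = \omega_2 = \omega_0$. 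Hence $\omega_0$ is an extreme point of the state space, i.e.\ pure.
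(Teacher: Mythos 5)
Your proposal follows the paper's own route for the uniqueness part essentially verbatim: reduction to local observables of the form $L P_c$ via the matrix units, the vanishing $\omega_0(LP_c)=\omega_0(LP_cB_s)=0$ for non-flat $c$ via Lemma~\ref{lem:state}, the cleaning-up by star-operator insertions reducing everything to $\omega_0(P_{c'})$ for a flat connection $c'$, and the deferral of the combinatorial bookkeeping to the cited thesis --- all of this matches the paper's sketch. The one place you genuinely diverge is purity. The paper argues that $\omega_0$ restricted to the abelian subalgebra generated by the $A_s$ and $B_s$ is multiplicative, hence pure, that a pure extension of this restriction to $\alg{A}$ exists, and that this extension must coincide with $\omega_0$ by the uniqueness just proved. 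You instead take a convex decomposition $\omega_0=\lambda\omega_1+(1-\lambda)\omega_2$ and use $0\le\omega_i(A_s),\omega_i(B_s)\le1$ to force $\omega_i(A_s)=\omega_i(B_s)=1$, whence $\omega_i=\omega_0$ by uniqueness. Your extremality argument is correct --- crucially, the uniqueness statement applies to \emph{arbitrary} states satisfying $\omega(A_s)=\omega(B_s)=1$, not only to ground states, so the components $\omega_i$ need not be shown to be ground states --- and it is arguably more self-contained, since it avoids invoking the existence of pure extensions of pure states on subalgebras. Both arguments ultimately rest on the same uniqueness result, so the difference is one of packaging rather than substance.
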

Purity follows because $\omega_0$ restricted to the abelian subalgebra generated by all $A_s$ and $B_p$ is multiplicative (hence pure). Since there exists a pure extension to $\alg{A}$ and by the argument above, the state $\omega_0$ is completely determined by the values on stars and plaquettes, it follows that $\omega_0$ must be pure. We will henceforth only consider this translational invariant ground state, and just refer to it as ``the'' ground state and will call the corresponding GNS representation the \emph{vacuum representation}.

If one inspects the full proof of the theorem given in~\cite{phdnaaijkens} carefully, one sees that in fact $\omega_0(AB) = \omega_0(A) \omega_0(B)$ if $A$ and $B$ are local, and their supports are sufficiently far removed from each other. This is related to the approximate split property, which will be discussed in Section~\ref{sec:split}.

\section{Haag duality}\label{sec:haagd}
The main result in this paper is the proof that Haag duality holds in the GNS representation of the translational invariant ground state for certain cone-like regions. We first introduce some definitions to make clear what we mean with a ``cone''. With these definitions we then discuss the proof. What is essential in our proof is a good understanding of how one can build up the Hilbert space of the ground state representation from excitations of the ground state. In particular, how one can obtain those excitations that are localised in a cone, by acting with the appropriate ribbon operators.
We use this to reduce the problem to a commutation problem of algebras acting on a smaller Hilbert space, consisting only of excitations \emph{inside} the cone. The ground state vector is cyclic for this Hilbert space, with respect to the cone algebra. This finally makes it possible to apply a result by Rieffel and Van Daele~\cite{RiDa:1975}, which relates the commutation properties of algebras to a density property of self-adjoint parts of algebras acting on a cyclic vector. In this way we circumvent the problem that the Reeh-Schlieder property (which says that the ground state vector is cyclic and separating for \emph{local} algebras) is not available, unlike for relativistic quantum field theories where it usually plays an important role in proving Haag duality~\cite{MR0438944,MR1147468}.

\subsection{Cones}
The main motivation to consider cone-like regions is given by the localisation regions of single excitations of the ground state.
These will turn out later to be suitably described by \emph{cones}.
How these cones are defined and which properties we need them to fulfill is described in the following.
We will state a list of requirements as a definition and then give a family of regions which fulfill this list.
Some of these requirements originate in the localisation properties of excitations sitting at the end of ribbons. Others are motivated as a technical requirement for proving a weaker form of the split property.
Most importantly cones should be ``ribbon connected'' in the sense that we can connect any site inside the cone with ribbons without leaving the cone. Furthermore it should be possible to translate any finite subset of the lattice into the cone using some lattice translation.

First we discuss what we mean by the boundary of a subset of $\Gamma$. We regard edges as a pair of vertices which are connected by an oriented bond. If we remove one of those vertices we also discard the bond.
\begin{definition}
    Let $\Lambda\subset\Gamma$ be a collection of edges and denote $\Lambda^c:=\Gamma\setminus\Lambda$.
    The \emph{interior} $\mr{int}(\Lambda^c)$ of $\Lambda^c$ is defined by the collection of edges in $\Lambda^c$ obtained by removing from $\mb Z^2$ all vertices contained in $\Lambda$ and discarding the associated bonds in $\Gamma$.
    The \emph{boundary} $\partial\Lambda^c$ of $\Lambda^c$ is then defined to be $\partial\Lambda^c:=\Gamma\setminus(\Lambda\cup\mr{int}(\Lambda^c))$ and we set $\partial\Lambda:=\partial\Lambda^c$.
\end{definition}
Note that the definition of $\partial\Lambda$ is symmetric under the exchange of $\Lambda$ and $\mr{int}(\Lambda^c)$.
Furthermore $\Lambda\cup\mr{int}(\Lambda^c)$ is a proper subset of $\Gamma$. That is to say $\partial\Lambda$ is the ``gap'' between $\Lambda$ and the interior of $\Lambda^c$.
\begin{definition}
    Given a subset $\Lambda\subset\Gamma$, a triangle $\tau\subset\Gamma$ and a ribbon $\rho\subset\Gamma$.
    We say that $\tau$ \emph{belongs to} or \emph{is contained in} $\Lambda$ if the edge of $\tau$ is in $\Lambda$.
    Similarly we say $\rho$ \emph{belongs to} $\Lambda$ if all triangles of $\rho$ belong to $\Lambda$.
    If this is the case we write $\tau\subset\Lambda$ and $\rho\subset\Lambda$.
\end{definition}
An illustration of this definition can be found in Figure~\ref{fig:triangle_region}.
\begin{figure}
    \centering
    \includegraphics[width=0.4\textwidth]{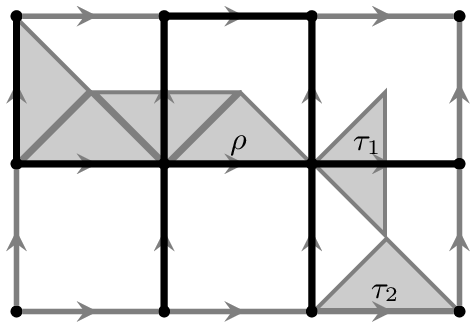}
    \caption{This image illustrates when triangles are contained in a region. The black lines indicate a collection $\Lambda$ of edges. The triangle $\tau_1$ and the ribbon $\rho$ are contained in $\Lambda$ whereas $\tau_2$ is not.}
    \label{fig:triangle_region}
\end{figure}
As we saw in Section~\ref{sec:proprib} excitations of the ground states are localised at sites and can be detected by star and plaquette operators.
Therefore, in order to distinguish whether an excitation is contained inside an area or not we have to specify when a site is, which is rather obvious.
Less clear on the other hand is the specification of a site sitting at the boundary of an area.
For our purposes and having an eye on Lemma~\ref{lem:ab_tool_1} we use the following notion.
\begin{definition}\label{def:bound_site}
    Let $\Lambda\subset\Gamma$ be again a subset and let $s=(v,f)$ be any site.
    Then $s$ is considered to be \emph{contained in} $\Lambda$, writing $s\in\Lambda$, whenever for any edge $e\in\Gamma$ with $\partial e=v$ it holds $e\in\Lambda$.

    We say that $s$ is \emph{contained in} $\partial\Lambda$, writing $s\in\partial\Lambda$ whenever $s\notin\Lambda$ and there are edges $e\in\Lambda$ and $e'\in\Lambda^c$ which bound $f$ or are neighbours of $v$.
\end{definition}
In other words $s=(v,f)\in\Lambda$ if the star at $v$ is contained in $\Lambda$, and $s\in\partial\Lambda$ if the star or the plaquette has non-empty intersection with $\Lambda$ and if $s\notin\Lambda$ (c.f. Figure~\ref{fig:cone_b}).

Unfortunately the definition of $s\in\partial\Lambda$ is not symmetric  under swapping the roles of $\Lambda$ and the \emph{interior} $\mr{int}(\Lambda^c)$ of $\Lambda^c$:
There might be sites that are contained in $\partial\Lambda$ that have empty intersection with $\Lambda^c$.
Nevertheless this definition is sufficient for our purposes since we just want to distinguish stars and plaquettes that are contained in $\mr{int}(\Lambda^c)$ from those having non-trivial intersection with $\Lambda$.
We will use this later on to move excitations that sit on the boundary of cones into the interior of the respective cone.
\begin{lemma}\label{lem:sites_outside}
    Let $\Lambda\subset\Gamma$ be some subset and let $s=(v,f)\in\mr{int}(\Lambda^c)$ be some site. Then for all edges $e$ ending at $v$ or bounding $f$ it holds $e\in\Lambda^c$.
\end{lemma}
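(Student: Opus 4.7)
The plan is to unfold the definitions of $\mr{int}(\Lambda^c)$, of a site belonging to $\Lambda^c$, and of the implicit notion of a \emph{vertex} being contained in $\Lambda$ (namely, a vertex $u$ is contained in $\Lambda$ iff some edge of $\Gamma$ adjacent to $u$ lies in $\Lambda$, since that is exactly what makes the removal-of-vertices construction in the definition of $\mr{int}(\Lambda^c)$ coherent). From this characterisation, an edge $e = (u,w)$ lies in $\mr{int}(\Lambda^c)$ precisely when neither endpoint is adjacent to any edge of $\Lambda$.

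First I would dispatch the four edges of $\str(v)$ directly. By Definition~\ref{def:bound_site} applied to $s \in \mr{int}(\Lambda^c)$, every edge $e$ with $\partial e = v$ satisfies $e \in \mr{int}(\Lambda^c) \subseteq \Lambda^c$. In particular, the two edges of $\plaq(f)$ that are also adjacent to $v$ are handled for free.

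The remaining work is on the two edges of $\plaq(f)$ not adjacent to $v$. Label the vertices around $f$ as $v, v_1, v_2, v_3$ (with $v_1, v_3$ the two neighbours of $v$ on $f$, and $v_2$ opposite $v$). The two remaining edges of $\plaq(f)$ are $e_1 = (v_1, v_2)$ and $e_2 = (v_2, v_3)$. I would argue that, since the edge $(v,v_1)$ belongs to $\mr{int}(\Lambda^c)$, the endpoint $v_1$ is not contained in $\Lambda$, i.e.\ no edge of $\Gamma$ incident to $v_1$ lies in $\Lambda$. Hence $e_1 \in \Lambda^c$. The argument for $e_2$ is symmetric, using the edge $(v,v_3) \in \mr{int}(\Lambda^c)$ to conclude $v_3 \notin \Lambda$ and thus $e_2 \in \Lambda^c$.

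There is no real obstacle here: the entire content is bookkeeping on which vertex of each plaquette edge is guaranteed to be ``safely outside'' by virtue of already being an endpoint of a star edge at $v$. The only subtle point is ensuring that the notion of a vertex being contained in $\Lambda$ is the correct one so that $\mr{int}(\Lambda^c)$ really does consist of edges neither of whose endpoints touches $\Lambda$; once that is made explicit, the lemma follows in a few lines.
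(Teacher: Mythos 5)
Your proof is correct and follows essentially the same route as the paper: the paper argues by contraposition (an edge of $\Lambda$ ending at $v$ or bounding $f$ forces some star edge at $v$ out of $\mr{int}(\Lambda^c)$, via the shared endpoint), while you run the identical argument directly, splitting the plaquette edges into those adjacent to $v$ and the two opposite ones whose endpoints $v_1, v_3$ are already known to avoid $\Lambda$. The difference is purely one of logical direction, not of substance.
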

\begin{proof}
    Assume that there was an edge $e\in\Lambda$ ending at $v$ or bounding $f$.
    Then in case it ends in $v$ we have $s\notin\mr{int}(\Lambda^c)$.
    In case $e$ bounds $f$ but does not end in $v$ we have that both $\partial_0e,\partial_1e\in\Lambda$. But then there is at least one edge $e'$ ending at $v$ and one of $\partial_0e,\partial_1e$ and hence $e'\in\partial\Lambda$. But then $s\notin\mr{int}(\Lambda^c)$.
\end{proof}
Finally the straightforward definition of a ribbon $\rho$ \emph{starting or ending at} $\partial\Lambda$ is given by requiring that the starting and ending sites $\partial_{0/1}\rho$ are contained in $\partial\Lambda$.

With this definition we have that a ribbon $\rho\subset\Lambda^c$ with, say, $\partial_0\rho\in\partial\Lambda$, is at most one triangle apart from $\Lambda$ in the following sense.
There is a ribbon $\rho_0\subset\Lambda^c$ with $\partial_0\rho_0\in\partial\Lambda$ such that $\rho_0\rho$ is a ribbon and $\rho_0$ is either a single triangle or a trivial ribbon.
(Here we have again Lemma~\ref{lem:ab_tool_1} in mind.). This situation is depicted in Figure~\ref{fig:cone_b}.

We now come to the definition of cones.
For any subset $\mc O\subset\Gamma$ and any point $y\in\mb Z^2$ we denote by $y+\mc O$ the subset in $\Gamma$ obtained by translating all pairs of vertices corresponding to edges in $\mc O$ by $y$.
\begin{definition}
    A subset $\Lambda\subset\Gamma$ is called \emph{cone} if it satisfies all of the following criteria.
    \begin{enumerate}
        \item\label{item:cone1}
            For any finite subset $\mc O\subset\Gamma$ there is a point $y\in\mb Z^2$ such that $y+\mc O\subset\Lambda$.
        \item
            For any pair of sites $s_0,s_1\in\Lambda$ there is a ribbon $\rho\subset\Lambda$ with $\partial_{0/1}\rho=s_{0/1}$.
        \item
            For any pair of sites $s_0,s_1\in\partial\Lambda$ there are ribbons $\rho_0,\rho_1\subset\Lambda^c$ and $\rho\subset\Lambda$ such that $\rho_0\rho\rho_1$ is a ribbon, $\partial_i\rho_i=s_i,i=0,1$ and $\rho_i, i=0,1$ are single triangles or trivial.
        \item
            For any pair of sites $s_0,s_1\in\partial\Lambda$ there is a ribbon $\rho\subset\Lambda^c$ such that $\partial_i\rho=s_i,i=0,1$.
    \end{enumerate}
\end{definition}
The first condition is of technical nature and plays a role when proving that the weak closure of cone algebras in the vacuum representation are factors of Type $II_\infty$ or of Type $III$ (see also Section~\ref{sec:split} and reference \cite{toricendo}).

The second and the third conditions express a kind of connectedness: Any pair of sites inside a cone $\Lambda$ can be connected with a ribbon, and sites at the boundary can be connected by ribbons that are contained in $\Lambda$ up to single triangles at the ends.
Both of them do not prohibit $\Lambda$ of having holes inside they just make sure that it is sufficiently connected in the aforementioned sense. The last condition ensures that that the complement $\Lambda^c$ is properly connected so that there are no holes in $\Lambda$.

As a result we can choose whether we want to connect sites at the boundary of the cone by ribbons that run in the exterior or in the interior of the cone up to triangles at the endpoints of the ribbon.
In particular for any ribbon $\rho\subset\Lambda^c$ with $\partial_{i}\rho\in\partial\Lambda, i=0,1$ there exist ribbons $\rho_0,\rho_1\subset\Gamma$ and $\tilde\rho\subset\Lambda$ such that $\rho_0\tilde\rho\rho_1$ is a ribbon, $\partial_{0}\rho_0=\partial_1\rho$, $\partial_1\rho_1=\partial_0\rho$ and $\rho_0,\rho_1$ are trivial ribbons or single triangles.
Furthermore, by condition \ref{item:cone1}, any cone is an infinite set.
\begin{figure}
    \centering
    \begin{subfigure}[t]{.48\textwidth}
        \centering
        \includegraphics[width=\textwidth]{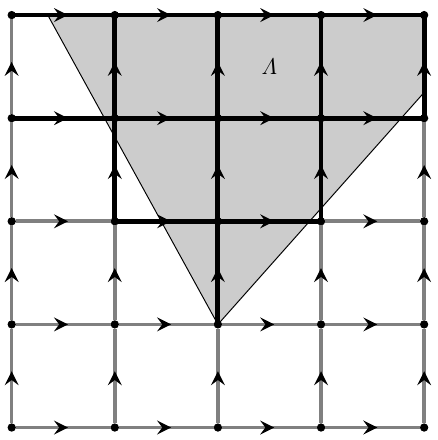}
        \caption{}
        \label{fig:cone_a}
    \end{subfigure}
    \hfill
    \begin{subfigure}[t]{.48\textwidth}
        \centering
        \includegraphics[width=\textwidth]{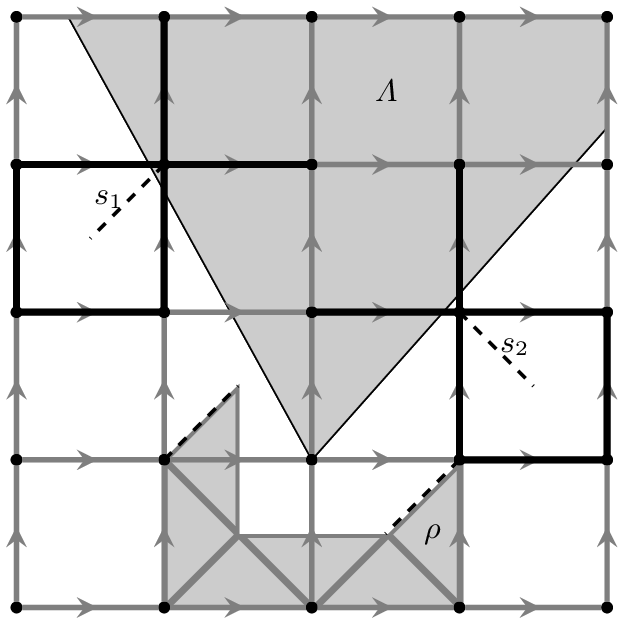}
        \caption{}
        \label{fig:cone_b}
    \end{subfigure}
        \caption{In both pictures the grey shaded region $\Lambda$ is a cone.
            (\subref{fig:cone_a}): Edges that are drawn black are either contained in $\Lambda$ or $\mr{int}(\Lambda^c)$ The grey bonds form $\partial\Lambda$.
            (\subref{fig:cone_b}):  Dotted lines indicate sites, especially $s_1\in\Lambda$ and $s_2\in\partial\Lambda$. The black lines highlight the edges belonging to the stars and plaquettes at $s_1$ and $s_2$. The ribbon $\rho$ connects a site at $\Lambda$ with a site in $\partial\Lambda$
        }
        \label{fig:cone}
\end{figure}
Examples for cones can be generated by those in $\mb R^2$: let $l_1\neq l_2$ be two semi-infinite lines in $\mb R^2$ emanating from a common point in $\mb Z^2$ and enclosing an angle smaller than $\pi$.
Denote by $\Lambda$ the set of edges that are contained in area enclosed by or have non-empty intersection with the two lines (see also Figure~\ref{fig:cone_a}).
It can be easily checked that $\Lambda$ is a cone.

In the following, and if not specified otherwise, $\Lambda$ will be a cone.

\subsection{Haag Duality}
In this section we prove Haag duality of cone algebras in the vacuum representation.
The proof is based on ideas developed in \cite{haagdtoric} and subdivides into several steps.
First we consider certain subspaces of the vacuum representation Hilbert space that are invariant under the action of cone algebras and show that the cone algebras are completely determined by this restriction.
Secondly we show that these subspaces are also invariant under the commutants of the algebras associated to the complement of the cones.
The last step consists in showing that linear combinations of the self-adjoint parts of the restricted cone algebras generate the above subspaces.
Using these facts together with a result by Rieffel and van Daele \cite{RiDa:1975} we can conclude Haag duality for the cone algebras.

For the start consider a cone $\Lambda\subset\Gamma$, denote the associated cone algebra by $\mc A(\Lambda)$ and by $\mc A(\Lambda^c)$ the one of the complement. The ground state's cyclic (GNS) representation is given by the tuple $(\pi_0,\Omega,\mc H_0)$ where the state itself is referred to as $\omega_0$. For any region $\mc O\subset\Gamma$ we denote the weak closure of $\mc A(\mc O)$ by $\mc R_{\mc O}:=\pi_0(\mc A(\mc O))''$.
As sketched above we aim at finding a subspace $\mc H_\Lambda\subset\mc H_0$ such that $\Omega$ is cyclic for $\mc R_\Lambda$. Again, we will identify operators $A\in\mc A$ with their image under $\pi_0$.

Let $\rho$ be a ribbon and let again $\mc F_\rho:=\{F^{h,g}_\rho\,|\, h,g\in G\}$ be the algebra linearly generated by all ribbon operators at $\rho$.
Note that the inclusion $\mc F_\rho\subseteq\bigotimes_{e\in\rho}\mc A_e$ is usually proper since $\mc F_\rho$ can be viewed as the subset of elements of the right hand side singled out by the commutation relations given by equation \eqref{eqn:ABF} (c.f.\cite[B.8]{PhysRevB.78.115421}).
For cones $\Lambda$ we denote by $\mc F_\Lambda:=\bigcup_{\rho\subset\Lambda}\mc F_\rho$ the algebra of ribbon operators localised in $\Lambda$. Analogously we denote $\mc F_{\Lambda^c}$ the algebra of ribbon operators localised in $\Lambda^c$.

The first observation is that products of operators in $\mc F_\Lambda$ and $\mc F_{\Lambda^c}$ generate a norm-dense subspace of $\mc H_0$ when applied to $\Omega$ (compare also \cite{haagdtoric}).
\begin{lemma}\label{lem:rib_dense}
    Given that $\Lambda\subset\Gamma$ is a cone we have with the notation from above:
    \begin{align*}
        &\overline{\mc F_\Lambda\mc F_{\Lambda^c}\Omega}^{\|\cdot\|}=\mc H_0.
    \end{align*}
\end{lemma}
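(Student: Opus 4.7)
The plan is to show that $\mc F_\Lambda \mc F_{\Lambda^c}\Omega$ contains $\pi_0(\mc A)\Omega$, which is norm-dense in $\mc H_0$ since $\Omega$ is cyclic for $\pi_0(\mc A)$ by the GNS construction. So the whole task reduces to exhibiting every local observable, applied to $\Omega$, as a finite sum of vectors in $\mc F_\Lambda \mc F_{\Lambda^c}\Omega$.

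The key observation concerns a single edge. For any $e \in \Gamma$, the direct triangle $\tau$ on $e$ and the dual triangle $\tau'$ on $e$ are individually (trivial) ribbons, and the corresponding single-triangle ribbon operators $F_\tau^{h,g} = T_\tau^h$ and $F_{\tau'}^{g,e} = L_{\tau'}^g$ belong to $\mc F_\tau$ respectively $\mc F_{\tau'}$. Since the products $L_{\tau'}^g T_\tau^h$ with $g,h \in G$ form a set of matrix units spanning $\mc A_e$ (recalled in Section~\ref{sec:qdouble}), this gives $\mc A_e \subseteq \mc F_\Lambda$ whenever $e \in \Lambda$, and $\mc A_e \subseteq \mc F_{\Lambda^c}$ whenever $e \in \Lambda^c$. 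Notice that no geometric assumption on $\Lambda$ is used here beyond $\Gamma = \Lambda \sqcup \Lambda^c$.

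Next I would exploit the tensor product structure of the local algebras: for any finite $\mc O \subset \Gamma$, setting $\mc O_1 = \mc O \cap \Lambda$ and $\mc O_2 = \mc O \cap \Lambda^c$, one has $\mc A(\mc O) = \mc A(\mc O_1) \otimes \mc A(\mc O_2)$. By the previous paragraph, $\mc A(\mc O_1) \subseteq \mc F_\Lambda$ and $\mc A(\mc O_2) \subseteq \mc F_{\Lambda^c}$, so every $A \in \mc A(\mc O)$ can be written as a finite sum $A = \sum_i B_i C_i$ with $B_i \in \mc F_\Lambda$ and $C_i \in \mc F_{\Lambda^c}$. Therefore $A\Omega \in \mc F_\Lambda \mc F_{\Lambda^c}\Omega$.

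Finally, since $\bigcup_{\mc O \text{ finite}} \mc A(\mc O)$ is norm-dense in $\mc A$ and $\pi_0(\mc A)\Omega$ is norm-dense in $\mc H_0$, the inclusions above give $\overline{\mc F_\Lambda \mc F_{\Lambda^c}\Omega}^{\|\cdot\|} = \mc H_0$. There is really no hard step here; the only point to verify carefully is that the ribbon operators on trivial (single-triangle) ribbons already generate $\mc A_e$, which is immediate from the matrix-unit identity and the definitions $F_\tau^{h,g} = T_\tau^h$, $F_{\tau'}^{g,e} = L_{\tau'}^g$.
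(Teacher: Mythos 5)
Your proof is correct and follows essentially the same route as the paper: single-triangle ribbon operators give matrix units spanning each edge algebra, hence $\mc F_\Lambda$ and $\mc F_{\Lambda^c}$ contain the local algebras of $\Lambda$ and $\Lambda^c$, and the claim follows from norm density of the local observables and cyclicity of $\Omega$. Your version merely spells out the tensor-product splitting $\mc A(\mc O)=\mc A(\mc O\cap\Lambda)\otimes\mc A(\mc O\cap\Lambda^c)$ that the paper leaves implicit, and correctly observes that the cone hypothesis plays no role here.
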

\begin{proof}
    Single triangle operators are contained in $\mc F_\Lambda$ and $\mc F_{\Lambda^c}$. Since they form a basis of the edge algebras, operators in $\mc A_{\mr{loc}}(\Lambda)$ and $\mc A_{\mr{loc}}(\Lambda^c)$ are contained in $\mc F_\Lambda$ and $\mc F_{\Lambda^c}$, respectively.
    But those are norm-dense in $\mc A(\Lambda)$ and $\mc A(\Lambda^c)$, respectively, and together with cyclicity of $\Omega$ we arrive at the claim.
\end{proof}

\begin{definition}
    Let $\Lambda\subset\Gamma$ be a cone. We set $\mc H_\Lambda:=\overline{\mc F_\Lambda\Omega}^{\|\cdot\|}\subset\mc H_0$ and write $P_\Lambda$ for the projection onto $\mc H_\Lambda$.
\end{definition}
This subspace turns out to be left invariant by observables localised in the cone. Furthermore such observables are completely determined by their restriction to this space. The proof of this is the same as in \cite[Lemma 3.5]{haagdtoric} and we won't repeat it here.
\begin{lemma}\label{lem:invariant}
    For any cone $\Lambda\subset\Gamma$ the subspace $\mc H_\Lambda\subset\mc H_0$ is invariant under $\mc A(\Lambda)$, i.e. $\mc A(\Lambda)\mc H_\Lambda\subset\mc H_\Lambda$. Furthermore any element $A\in\mc R_\Lambda$ is completely determined by its restriction to $\mc H_\Lambda$.
\end{lemma}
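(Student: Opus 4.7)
The plan is to mirror the strategy of \cite[Lemma 3.5]{haagdtoric}, splitting the claim into two parts: first, invariance of $\mc H_\Lambda$ under $\mc A(\Lambda)$, which then extends automatically to $\mc R_\Lambda$; and second, injectivity of the restriction map $\mc R_\Lambda \ni A \mapsto A|_{\mc H_\Lambda}$.

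For the invariance, the key observation is that for every bond $e \in \Lambda$ the direct triangle $\tau$ and the dual triangle $\tau'$ carrying edge $e$ are both contained in $\Lambda$, so their single-triangle ribbon operators $T^h_\tau$ and $L^g_{\tau'}$ belong to $\mc F_\Lambda$. Since the products $L^g_{\tau'} T^h_\tau$ span $\mc A_e$ and $\mc F_\Lambda$ is a $*$-algebra, we obtain $\mc A_e \subset \mc F_\Lambda$ for every $e \in \Lambda$, and hence $\mc A(\mc O) \subset \mc F_\Lambda$ for every finite $\mc O \subset \Lambda$. Therefore $\mc A(\mc O) \cdot \mc F_\Lambda \Omega \subset \mc F_\Lambda \Omega \subset \mc H_\Lambda$, and by norm density of local operators in $\mc A(\Lambda)$ combined with continuity of bounded operators we conclude $\mc A(\Lambda) \mc H_\Lambda \subset \mc H_\Lambda$. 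The condition $T \mc H_\Lambda \subset \mc H_\Lambda$ is equivalent to $(I - P_\Lambda) T P_\Lambda = 0$, which is weakly closed in $T$, so the invariance extends to the double commutant $\mc R_\Lambda$.

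For the determination property, let $A_1, A_2 \in \mc R_\Lambda$ agree on $\mc H_\Lambda$ and set $A := A_1 - A_2$, so that $A P_\Lambda = 0$. Operators in $\mc F_{\Lambda^c}$ act only on edges in $\Lambda^c$, so by locality $\mc F_{\Lambda^c} \subset \mc A(\Lambda^c) \subset \mc A(\Lambda)' \subset \mc R_\Lambda'$. Consequently, for any $X \in \mc F_\Lambda$ and $Y \in \mc F_{\Lambda^c}$,
\begin{equation*}
A Y X \Omega = Y A X \Omega = 0,
\end{equation*}
because $X \Omega \in \mc H_\Lambda$ and $A$ annihilates $\mc H_\Lambda$. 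By Lemma~\ref{lem:rib_dense} the set $\mc F_{\Lambda^c} \mc F_\Lambda \Omega$ is norm-dense in $\mc H_0$, so boundedness of $A$ forces $A = 0$, i.e.\ $A_1 = A_2$.

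The only point requiring care is the inclusion $\mc F_{\Lambda^c} \subset \mc R_\Lambda'$: this hinges on the fact that any ribbon operator $F^{g,h}_\rho$ with $\rho \subset \Lambda^c$ is supported exclusively on edges of $\Lambda^c$, which is immediate from the recursive definition of $F^{g,h}_\rho$ together with the convention that a triangle belonging to $\Lambda^c$ has its edge in $\Lambda^c$. Given the structural machinery already assembled in Section~\ref{sec:proprib}, the remainder is a direct translation of the toric code argument; no genuine obstacle arises beyond bookkeeping the supports of the ribbon operators carefully.
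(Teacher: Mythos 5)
Your proposal is correct and follows exactly the strategy the paper itself adopts, namely the argument of \cite[Lemma 3.5]{haagdtoric}: invariance via the inclusion of the local algebras in $\mc F_\Lambda$ (single-triangle operators spanning the edge algebras) plus a weak-closure argument, and injectivity of the restriction via $\mc F_{\Lambda^c}\subset\mc R_\Lambda'$ together with the density statement of Lemma~\ref{lem:rib_dense}. No gaps; the commutativity of $\mc F_\Lambda$ and $\mc F_{\Lambda^c}$ justifies reordering the product in the density lemma, which is the only unstated step.
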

As a consequence we have that $P_\Lambda\in\mc R_\Lambda'$.
One basic observation in the proof is that $\mc F_\Lambda$ is dense in $\mc A(\Lambda)$.
The next step consists of showing that a similar but less obvious statement holds true for the operators commuting with those localised in $\Lambda^c$.
The main idea is to show that we can characterise $\mc H_\Lambda^\perp$ by certain ribbon operators in $\mc F_{\Lambda^c}$ namely those which create non-trivial excitations in $\mr {int}(\Lambda^c)$.

Next we show that observables in the commutant of $\mc A(\Lambda^c)$ leave this vector space invariant. The basic idea is the same as that of the proof of \cite[Lemma 3.6]{haagdtoric}: We can characterize vectors of the form $F_1\cdots F_n\Omega$ to lie either in $\mc H_\Lambda$ or in $\mc H_{\Lambda}^\perp$ where $F_1,\ldots,F_n\in\mc A$ are ribbon operators.
Namely if $F_1\cdots F_n\Omega$ contains non-trivial excitations in $\mr{int}(\Lambda^c)$ then it is contained in $\mc H_\Lambda^\perp$. If there are no excitations in $\mr{int}(\Lambda^c)$ contained in this vector then it belongs to $\mc H_\Lambda$.
The next two lemmas show this in a stronger sense, namely that the orthogonal relation in the first case holds even if we apply any operator from $\mc A(\Lambda^c)'$ to the vector.
The idea is to detect excitations with star and plaquette operators acting on the ending sites of the corresponding ribbons. For this recall the definition of the projections $D_s^{\chi,c}$ in equation~\eqref{eq:chapro} acting at a site $s$.
To say that there is a charge in $\mr{int}(\Lambda^c)$ created by some ribbon operator amounts to seeing that there is some site $s\in\mr{int}(\Lambda^c)$ such that $D_s^{\mr{id},e}$ does not commute with this operator.
Note that this follows from the discussion in Section~\ref{sec:proprib}, especially the part around equation~\eqref{eq:chapro}.

The following three lemmas are essential in gaining a better understanding of the Hilbert space $\mc{H}_\Lambda$.
\begin{lemma}\label{lem:orth_charge_1}
    Let $\hat F:=F_1\cdots F_n\in\mc F_{\Lambda^c}$ be a product of ribbon operators associated to ribbons in $\Lambda^c$. Then the following holds:
    \begin{equation}
        \begin{split}
            &   \left(\exists s\in\mr{int}(\Lambda^c):[A_s,\hat F]\neq0\lor[B_s,\hat F]\neq 0\right)\\
                &\implies   \left((\forall F,C\in\mc F_\Lambda )(\forall X\in\mc A(\Lambda^c)'):(\hat FF\Omega,XC\Omega)=0\right).
        \end{split}\label{eq:cone_exc_1}
    \end{equation}
    Especially the left hand side implies $\hat F\Omega\in\mc H_\Lambda^\perp$.
\end{lemma}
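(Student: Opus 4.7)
The strategy is to detect the non-trivial excitation at $s$ using the charge projection $D_s := D_s^{\mr{id}, e}$ from equation~\eqref{eq:chapro} and to shuttle it through the inner product. Since $s \in \mr{int}(\Lambda^c)$, Lemma~\ref{lem:sites_outside} guarantees that every edge of the star at $v$ and the plaquette at $f$ (with $s = (v,f)$) lies in $\Lambda^c$, so $A_s^k, B_s^d \in \mc A(\Lambda^c)$ and consequently $D_s \in \mc A(\Lambda^c)$. These edges are disjoint from every edge in $\Lambda$, so $D_s$ additionally commutes with each element of $\mc A(\Lambda)$, in particular with $F$ and $C$.

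Using $D_s \Omega = \Omega$ together with $[D_s, C] = 0$ and $[D_s, X] = 0$ (the latter since $X \in \mc A(\Lambda^c)'$ and $D_s = D_s^* \in \mc A(\Lambda^c)$), the inner product telescopes:
\begin{align*}
(\hat F F \Omega, X C \Omega) &= (\hat F F \Omega, X C D_s \Omega) = (\hat F F \Omega, X D_s C \Omega) \\
&= (\hat F F \Omega, D_s X C \Omega) = (D_s \hat F F \Omega, X C \Omega).
\end{align*}
It therefore suffices to prove $D_s \hat F F \Omega = 0$.

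To achieve this, I would apply Lemma~\ref{lem:tech} --- after first using Lemmas~\ref{lem:indeprib} and~\ref{lem:ribinv} to deform and invert the ribbons so that the hypotheses of that lemma are met --- to rewrite
\begin{equation*}
  \hat F F \Omega = z \cdot G \cdot F_\gamma^{\chi, c} \Omega,
\end{equation*}
where $G$ is a product of ribbon operators whose ribbons do not have $s$ as an endpoint, $F_\gamma^{\chi, c}$ is a single ribbon operator with $\partial_0 \gamma = s$, and the label $(\chi, c)$ is the total charge at $s$ obtained by multiplying the labels of the ribbons of $\hat F$ which touch $s$ (the ribbons of $F \in \mc F_\Lambda$ lie in $\Lambda$ and hence cannot have $s \in \mr{int}(\Lambda^c)$ as an endpoint, so they contribute trivially). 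Since the ribbons in $G$ have no endpoint at $s$, the commutation relations~\eqref{eqn:ABF} give $[D_s, G] = 0$, so $D_s \hat F F \Omega = z\, G\, D_s F_\gamma^{\chi, c} \Omega$.

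Finally, Lemma~\ref{lem:triv_charge} together with the rearranged form of the product shows that the hypothesis $[A_s, \hat F] \neq 0$ forces $\chi \neq \mr{id}$ and $[B_s, \hat F] \neq 0$ forces $c \neq e$, so $(\chi, c) \neq (\mr{id}, e)$. The charge-detection identity displayed after equation~\eqref{eq:chapro} then yields $D_s F_\gamma^{\chi, c} \Omega = 0$, completing the argument; the final clause $\hat F \Omega \in \mc H_\Lambda^\perp$ is the special case $F = I$, $X = I$. The principal obstacle is the rearrangement step: the ribbons of $\hat F$ need not have distinct non-$s$ endpoints and may touch $s$ at both ends, so before Lemma~\ref{lem:tech} can be invoked one must carefully deform, invert, and regroup them while tracking the phase factors produced by the commutation relations of Section~\ref{sec:rib_comm}, and verify that the resulting total label at $s$ is precisely what the commutators with $A_s, B_s$ measure.
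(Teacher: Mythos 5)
Your opening reduction is correct and close in spirit to what the paper does: $D_s\in\mc A(\Lambda^c)$ by Lemma~\ref{lem:sites_outside}, it commutes with $C$, $F$ and $X$, it fixes $\Omega$ and is self-adjoint, so everything hinges on showing $D_s\hat FF\Omega=0$. The gap is in how you establish that. You route it through Lemma~\ref{lem:tech}, whose hypotheses (every ribbon touches $s$ exactly once, all non-$s$ endpoints pairwise distinct) are far from automatic for an arbitrary product $\hat FF$: there will in general be ribbons not meeting $s$ at all, closed ribbons through $s$, and ribbons sharing their far endpoints. You flag this reduction yourself as ``the principal obstacle'' and do not carry it out, and you also do not verify that the total label $(\chi,c)$ produced by the rearrangement is the one detected by $[A_s,\hat F]$ and $[B_s,\hat F]$ (inversions conjugate the labels, so this requires the bookkeeping of Lemma~\ref{lem:ribinv}). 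As written, the proof is therefore incomplete at its central step.

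The detour is in fact unnecessary. Commute $A_s^k$ through the product $\hat FF$ factor by factor using equations~\eqref{eqn:ABF} and Lemma~\ref{lem:ab_tool_3}: each ribbon-operator factor contributes a scalar $\hat\chi_j(k)$, equal to a value of its character label (or its conjugate, if $s$ is the terminal rather than initial site) when the ribbon ends at $s$, and equal to $1$ otherwise; the factors of $F\in\mc F_\Lambda$ contribute nothing since their supports are disjoint from the star at $s$. Hence $A_s\hat FF\Omega=\frac{1}{|G|}\sum_k\prod_j\hat\chi_j(k)\,\hat FF\Omega=\delta_{\prod_j\hat\chi_j,\mr{id}}\,\hat FF\Omega$ by orthogonality of characters, and $\prod_j\hat\chi_j=\mr{id}$ would force $[A_s,\hat F]=0$, contrary to hypothesis; the plaquette case is analogous, with $B_s\hat F=\hat FB_s^{k}$ for some $k\neq e$ and $B_s^k\Omega=0$. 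No deformation, inversion or regrouping of ribbons is needed, because the commutator hypothesis already encodes exactly the non-triviality of the product character (respectively product group element) at $s$. This is essentially the paper's proof, which does not even introduce $D_s$: it inserts $A_s$ (resp.\ $B_s$) next to $\Omega$ inside the inner product and moves it to the other side, where it is absorbed by $X$, $C$ and $\Omega$. The heavy case analysis you anticipate is genuinely needed only for the converse statement, Lemma~\ref{lem:orth_charge_2}.
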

\begin{proof}
    First note that because of Lemma~\ref{lem:sites_outside} $s\in\mr{int}(\Lambda^c)$, implies $A_s,B_s\in\mc F_{\Lambda^c}$.
    The proof works by repeated use of the lemmas of the discussion in Section~\ref{sec:proprib}.

    It is sufficient to work with ribbon operators labelled by irreducible representations of $\mc D(G)$ as defined in equation~\eqref{eq:abribbon}.
    Consider arbitrary such ribbon operators $\hat F_1,\dots,\hat F_n\in\mc F_{\Lambda^c}$ and let $C,F\in\mc F_\Lambda$ be some operators. By definition of $\mc F_\Lambda$ the operators $C$ and $F$ are sums of products of ribbon operators localised in $\Lambda$.
    For convenience we set $\eta := \hat F_1\cdots\hat F_nF\Omega\in\mc F_{\Lambda^c}\mc F_\Lambda\Omega$ and $\zeta := C\Omega\in\mc H_\Lambda$.

    Now for the proof of equation~\eqref{eq:cone_exc_1}, namely that if there are excitations in $\eta$ created by $\hat F_1,\dots,\hat F_n\in\mc F_{\Lambda^c}$ then $\eta$ is orthogonal to $X\zeta$ for all $C,F\in\mc F_\Lambda$ and $X\in\mc A(\Lambda^c)'$.

    Assume there exists a site $s\in\mr{int}(\Lambda^c)$ whose star operator $A_s$ does not commute with $\hat F_1\cdots\hat F_n$.
    Then, by Lemma~\ref{lem:ab_tool_3} and locality, we have
    \begin{align*}
        (\eta,X\zeta)   &=  \frac1{|G|}\sum_{k\in G}(\hat F_1\cdots\hat F_nFA_s^k\Omega,X\zeta)\\
            &=  \frac 1{|G|} \sum_{k\in G}\hat\chi_1(k)\cdots\hat\chi_n(k)(\eta,X\zeta)
    \end{align*}
    where $\hat\chi_j(k)$ either coincides with the corresponding term of the non-trivial representation of $\hat F_j$ if it doesn't commute with $A_s$, or $\hat\chi_j(k)=1$.
    Since for abelian groups the product of irreducible representations is again irreducible (they are all 1-dimensional), the right hand side equals $0$ since the appearing product representation is non-trivial. If the product representation was trivial then $[A_s,\hat F_1\cdots\hat F_n]=0$ and hence would contradict the assumptions (see Lemma~\ref{lem:triv_charge}). Thus we arrive at $(\eta,X\zeta) = 0$.

    Assume that there is a site $s\in\mr{int}(\Lambda^c)$ such that the associated plaquette operator $B_s$ does not commute with $\hat F_1\cdots\hat F_n$.
    Then there is at least one $j\in\{1,\dots,n\}$ with $[B_s,\hat F_{\rho_j}^{\chi,c}]\neq 0$ implying $c\neq e$ due to the commutation relations, see Lemma~\ref{lem:triv_charge}. More general there is a $k\in G$ with $k\neq e$ such that
    \begin{align*}
        \hat F_1\cdots\hat F_nB_s = B_s^k\hat F_1\cdots\hat F_n
    \end{align*}
    giving
    \begin{align*}
        (\eta,X\zeta)   =  (B_s^k\hat F_1\cdots\hat F_nC\Omega,\zeta)  =  (\eta,XFB_s^k\Omega)    =  0.
    \end{align*}
	This completes the proof.
\end{proof}
\begin{lemma}\label{lem:orth_charge_2}
    Let $\hat F:=F_1\cdots F_n\in\mc F_{\Lambda^c}$ be a product of ribbon operators associated to ribbons in $\Lambda^c$. Then the following holds:
     \begin{align}\label{eq:cone_exc_2}
         \left(\forall s\in\mr{int}(\Lambda^c):[A_s,\hat F]=0\land[B_s,\hat F]=0\right)\implies \hat F\Omega\in\mc H_\Lambda
    \end{align}
\end{lemma}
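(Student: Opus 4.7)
By expanding each $F_i$ in the basis $F^{\chi, c}_{\rho_i}$ provided by \eqref{eq:abribbon} and using linearity, I may assume $\hat F = F^{\chi_1, c_1}_{\rho_1} \cdots F^{\chi_n, c_n}_{\rho_n}$ is a product of such basis operators with $\rho_i \subset \Lambda^c$.

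The strategy is to transform $\hat F$, without changing its action on $\Omega$, into an operator whose associated ribbons all lie in $\Lambda$. First, at every site $s \in \mr{int}(\Lambda^c)$ where more than one of the $\rho_i$ terminates, I iteratively apply Lemma~\ref{lem:tech}: up to a unimodular scalar this rewrites the piece of the product ending at $s$ as a product of ribbon operators attached to the ``other'' endpoints, together with a single residual ribbon $\gamma$ carrying the total charge at $s$. Because $[\hat F, A_s] = [\hat F, B_s] = 0$ by hypothesis and the phases in Lemma~\ref{lem:tech} do not spoil this property, Lemma~\ref{lem:triv_charge} forces the label of $\gamma$ at $s$ to be trivial, i.e.\ $\chi = \mr{id}$ and $c = e$.

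After Step~1 each ribbon has at most one endpoint in $\mr{int}(\Lambda^c)$, and if so that endpoint carries trivial charge. For such a ribbon, Lemma~\ref{lem:ab_tool_1} allows me to prepend either a direct or a dual triangle without changing the operator. Because the operator is unchanged, the triviality of the commutators persists at the new endpoint, so the process may be iterated. Cone property~4 (ribbon-connectedness of $\Lambda^c$) guarantees a ribbon path from $s$ to some site in $\partial\Lambda$, so after finitely many such extensions every ribbon has both endpoints on $\partial\Lambda$.

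Now for each ribbon $\sigma \subset \Lambda^c$ with $\partial_0 \sigma, \partial_1 \sigma \in \partial \Lambda$, cone property~3 supplies a ribbon $\rho_0 \rho \rho_1$ with the same endpoints, $\rho \subset \Lambda$ and $\rho_0,\rho_1$ each a single (or trivial) triangle. Lemma~\ref{lem:indeprib} gives $F^{\chi,c}_\sigma \Omega = F^{\chi,c}_{\rho_0 \rho \rho_1}\Omega$, and Lemma~\ref{lem:abdecompose} factorises the right-hand operator as $F^{\chi,c}_{\rho_0} F^{\chi,c}_\rho F^{\chi,c}_{\rho_1}$.

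The remaining and hardest step is to absorb the boundary triangle factors $F^{\chi,c}_{\rho_0}, F^{\chi,c}_{\rho_1}$, which are single-triangle operators whose edges lie on $\partial\Lambda$ (hence in $\Lambda^c$). The plan is to do this using the defining properties $A^g_s \Omega = \Omega$, $B^h_s \Omega = \delta_{h,e}\Omega$ of the ground state together with Lemma~\ref{lem:ab_tool_1} applied from the $\Lambda$-side: since the site where $\rho_0$ meets $\rho$ is not an endpoint of the combined ribbon $\rho_0 \rho \rho_1$, there is no excitation there, and the boundary triangle can be moved across into the $\Lambda$-side ribbon via an interior extension. Carrying out this absorption for each boundary triangle of each of the ribbons produces an operator in $\mc F_\Lambda$ whose action on $\Omega$ equals $\hat F \Omega$, and hence $\hat F \Omega \in \mc H_\Lambda$.

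The main obstacle is the last paragraph: the identities in Lemma~\ref{lem:ab_tool_1} run in the direction of \emph{prepending} a triangle, not removing it, so the absorption requires arguing by equality of two ribbon descriptions of the same operator rather than by a literal cancellation. I expect the detailed verification to split into cases according to the type (direct vs.\ dual) of $\rho_0, \rho_1$, and to rely on the characters $(\chi,c)$ of the residual cone ribbon $\rho$ matching those of the boundary triangles, which is exactly what Lemma~\ref{lem:abdecompose} ensures.
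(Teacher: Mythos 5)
Your overall strategy -- reduce to ribbons whose endpoints all lie on $\partial\Lambda$, then trade each such ribbon for one running through $\Lambda$ -- is the same as the paper's, and your Step~1 (clearing interior junction sites with Lemma~\ref{lem:tech} and reading off the residual charge with the projectors $D_s^{\chi,c}$) is sound; the paper explicitly notes Lemma~\ref{lem:tech} as an admissible route there. But the proof has two problems, one of which you flag yourself and do not close.

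First, Step~2 cannot work as described. Lemma~\ref{lem:ab_tool_1} lets you prepend a \emph{direct} triangle only when $[F_\rho^{\chi,c},A_{s_0}]=0$, i.e.\ $\chi=\mr{id}$, and a \emph{dual} triangle only when $c=e$. A path from an interior site to $\partial\Lambda$ generically requires triangles of both types, so the iteration is only available when both labels are trivial -- in which case the operator is already (a multiple of) the identity and there is nothing to extend. That is in fact the paper's conclusion for this configuration: a ribbon with a free endpoint in $\mr{int}(\Lambda^c)$ must, by the hypothesis and Lemma~\ref{lem:triv_charge}, carry the trivial charge there, and the corresponding ribbon operator simply drops out. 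So the case is handled, but not by the mechanism you propose.

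Second, and more seriously, the ``absorption'' of the boundary triangles $F^{\chi,c}_{\rho_0},F^{\chi,c}_{\rho_1}$ is not a technicality that splits into cases -- the statement you would need is false. For nontrivial $(\chi,c)$ the vectors $F^{\chi,c}_{\rho_0\rho\rho_1}\Omega$ and $F^{\chi,c}_{\rho}\Omega$ are genuinely different: by Lemma~\ref{lem:indeprib} each depends only on the endpoints of its ribbon, those endpoints differ by one triangle, and the charge projectors $D^{\chi,c}_s$ of equation~\eqref{eq:chapro} at the two candidate endpoint sites distinguish the two vectors. There is no lemma in the paper (and no true statement) that removes a triangle from the end of a charged ribbon on the grounds that an \emph{internal} site of the concatenated ribbon carries no excitation; Lemma~\ref{lem:ab_tool_1} is an endpoint statement and needs the corresponding label to be trivial. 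The paper closes this step differently: it keeps the at-most-two boundary triangles attached on the $\Lambda^c$ side, uses the hypothesis together with Lemma~\ref{lem:ab_tool_1} to control that extension, and then invokes the \emph{inversion} Lemma~\ref{lem:ribinv} combined with the deformation Lemma~\ref{lem:indeprib} to replace the whole $\Lambda^c$-ribbon operator, acting on $\Omega$, by a ribbon operator belonging to $\mc A(\Lambda)$. Your argument never uses Lemma~\ref{lem:ribinv} at this point, and without it (or some substitute) the key case of a ribbon joining two sites of $\partial\Lambda$ remains open.
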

\begin{proof}
    Again, as in the previous proof, it is sufficient to work with ribbon operators labelled by irreducible representations of $\mc D(G)$.
    First some remarks about some general simplifications we are are going to assume.
    In case two ribbons $\rho,\sigma$ have the same starting and ending sites then, by Lemma~\ref{lem:indeprib}, one of them can be deformed into the other, giving
    \begin{align}\label{eqn:rib_manip}
        \hat F_\rho^{\chi,c}\hat F_\sigma^{\tau,d}\Omega = \hat F_\rho^{\chi\tau,cd}\Omega.
    \end{align}
    We always can assume that there are non-trivial and non-closed ribbons in the product $\hat F_1\cdots\hat F_n$. If ribbon operators associated to closed ribbons appeared then we simply could commute them past the other operators in $C$ to $\Omega$ where they leave $\Omega$ invariant.
    This can be seen by noting that if $\rho$ is a closed ribbon and $\rho=\rho_1\rho_2$ is a partition into ribbons then by Lemma~\ref{lem:ribinv} and Lemma~\ref{lem:indeprib} we have $\hat F_{\rho}^{\chi,c}\Omega=\hat F_{\rho_1}^{\chi,c}\hat F_{\ovl\rho_2}^{\ovl\chi,\ovl c}\Omega=\Omega$. Here $\ovl\rho_2$ is an inversion of $\rho_2$ which, by construction, starts and ends at the same sites as $\rho_1$.
    Due to the commutation relations of ribbons, see the discussion in Section~\ref{sec:rib_comm}, we may pick up some phase factors which will not be important here.

    In case that there are two ribbon operators $\hat F_1,\hat F_2$ associated to open ribbons $\rho_1,\rho_2$ such that $\rho_1\rho_2$ is a closed ribbon we can write them as a product of a ribbon operator of a closed ribbon and an operator associated to an open ribbon.
    To see this we move $\hat F_1$ and $\hat F_2$ to each other using the commutation relations of ribbons.
    Then we use Lemma~\ref{lem:abdecompose} and the remark after equation~\eqref{eq:abribbon} to find
    \begin{align}\label{eq:ribcompose}
        \hat F_{\rho_1}^{\chi,c}\hat F_{\rho_2}^{\xi,d} =  \hat F_{\rho_1}^{\chi,c}\hat F_{\rho_1}^{\ovl\xi,\ovl d}\hat F_{\rho_1}^{\xi,d}\hat F_{\rho_2}^{\xi,d} = \hat F_{\rho_1}^{\chi\ovl\xi,c\ovl d}\hat F_{\rho_1\rho_2}^{\xi,d}.
    \end{align}
    We also can always assume that ribbons just appear at most once in each product by the remark following equation~\eqref{eq:abribbon}.

    Now we turn to the claim of the lemma, equation~\eqref{eq:cone_exc_2}.
	We are performing an induction over the number of ribbon operators in $\hat F_1\cdots\hat F_n$, i.e. over the number of ribbon operators outside $\Lambda$.
    Let's start with $n=1$ and let $\hat F_1\in\mc F_{\Lambda^c}$ be a ribbon operator.
    Then we have that the ribbon $\rho\subset\Lambda^c$, to which $\hat F_1$ is associated to, is either of one of the following forms:
    It connects two sites in $\partial\Lambda$ or at least one ending site of $\rho$ is contained in $\mr{int}(\Lambda^c)$.

    Consider the case that $\rho$ connects two sites in $\partial\Lambda$.
    Taking a look at Definition~\ref{def:bound_site} we see that there are at most two triangles $\tau,\tilde\tau\subset\Lambda^c$ such that $\tau\rho\tilde\tau\subset\Lambda^c$ is a ribbon.
    By assumption and Lemma~\ref{lem:ab_tool_1} we have that $\hat F_1\Omega=\hat F_{\tau\rho\tilde\tau}\Omega$.
    But then we can invoke Lemma~\ref{lem:ribinv} and Lemma~\ref{lem:indeprib} to obtain a ribbon $\ovl\rho\subset\Lambda$ with $\hat F_{\tau\rho\tilde\tau}\Omega = \hat F_{\ovl\rho}\Omega$ and $\hat F_{\ovl\rho}\in\mc A(\Lambda)$.
    In case that $\rho$ has at least one ending site contained in $\mr{int}(\Lambda^c)$ Lemma~\ref{lem:triv_charge} (or an analogue calculation with equation~\eqref{eqn:ABF}) implies that $\hat F_1=I$.
    Hence in either case the vector is contained in $\mc H_\Lambda$.

    Now let $n>1$ be arbitrary but fixed and assume that equation~\eqref{eq:cone_exc_2} holds for all $\hat F_1,\dots,\hat F_{n-1}\in\mc A(\Lambda^c)$.
    Let therefore $\hat F_1,\dots,\hat F_{n}\in\mc A(\Lambda^c)$ be ribbon operators associated to ribbons in $\Lambda^c$ and set $\eta:=\hat F_1\cdots\hat F_n\Omega$.
    The remainder of the proof can be subdivided into different cases corresponding to the different configurations ribbons. We will relate some of them to each other and proof the remaining cases.
    The two main cases are the following:
    Firstly, there could be $k\leq n$ ribbons that start and end at $\partial\Lambda$.
    Secondly, there could be several ribbons having at least one end in $\mr{int}(\Lambda^c)$. See also Figure~\ref{fig:cases_first_lemma}.
    \begin{figure}
        \centering
        \includegraphics[width=0.4\textwidth]{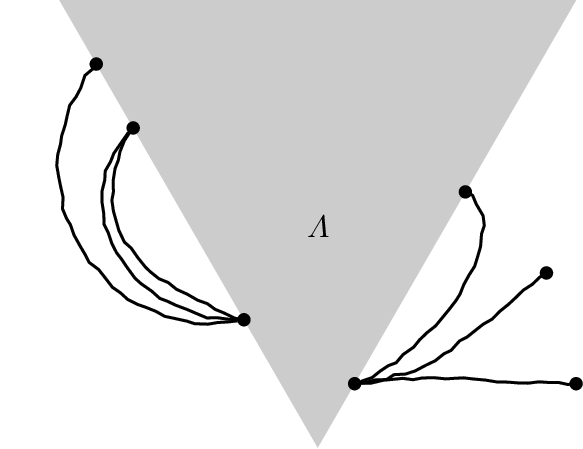}
        \caption{The two main cases in Lemma~\ref{lem:orth_charge_2} depicted in one image: On the left hand side of the cone $\Lambda$ is the case where only ribbons occur that connect sites of $\partial\Lambda$ whith each other. On the right hand side is the case with ribbons having ending sites in $\partial\Lambda$.}
        \label{fig:cases_first_lemma}
    \end{figure}

    The first main case can be handled as follows.
    Assume that there is a ribbon that connects two sites at $\partial\Lambda$, say $\rho_k$, $1\leq k\leq n$.
    Then we can commute the associated ribbon operator $\hat F_k$ in $\eta$ to the right in front of $\Omega$ thereby possibly obtaining a phase factor due to equation~\eqref{eq:irrepbraid}.
    But then, by using the argument from above, we can replace $\hat F_k$ with some operator $F_k\in\mc A(\Lambda)$ leaving a product of $n-1$ operators in $\mc A(\Lambda^c)$ in front of $F_k\Omega$.

    The second main case is a bit more involved.
    Consider that there is no such ribbon as in the first main case. If there is a ribbon $\rho$ having at least one ending site inside $\mr{int}(\Lambda^c)$ the following scenarios are possible.
    Firstly, one ending site of $\rho$ which is contained in $\mr{int}(\Lambda^c)$ does not coincide with an ending site of another ribbon occurring in $\eta$.
    Secondly, $\rho$ connects a site on $\partial\Lambda$ with a site in $\mr{int}(\Lambda^c)$ at which $k\geq 1$ other ribbons start or end.

    In the first case we find, by Lemma~\ref{lem:triv_charge}, that the associated ribbon operator $\hat F_\rho$ must be the identity operator.
    This reduces the product $\hat F_1\cdots\hat F_n\subset\mc A(\Lambda^c)$ in $\eta$ to a product of $n-1$ ribbon operators in $\mc A(\Lambda^c)$.

    In the second case we can assume that every of these $k$ ribbons connects to $\partial\Lambda$, since otherwise, we can just pick one of them that doesn't and use the previous procedure to remove it.
    Remember that we don't have to consider closed ribbons any more as well as open ribbons forming a closed loop.
    Now consider the ribbon operator $\hat F_\rho$ associated to $\rho$. We can safely assume that $\partial_1\rho$ is the site of interest. The other case can be treated in complete analogy.
    If there is a ribbon $\rho_l$ with $\partial_0\rho_l=\partial_1\rho$ then we first can deform $\rho$ into a ribbon $\tilde\rho$ such that $\tilde\rho\rho_l$ is a ribbon.
    On the level of ribbon operators this means first commuting the associated ribbon operator $\hat F_{\rho}$ in $\eta$ to the right in front of $\Omega$ and then using Lemma~\ref{lem:indeprib} to replace it with an operator $\hat F_{\tilde\rho}$.
    After that we use the commutation relations of ribbon operators again to move $\hat F_{\rho_1}$ to $\hat F_{\tilde\rho}$.
    We then can invoke equation~\eqref{eq:ribcompose} to obtain
    \begin{align*}
        \hat F_{\tilde\rho}^{\chi,c}\hat F_{\rho_l}^{\xi,d} =  \hat F_{\tilde\rho}^{\chi\ovl\xi,c\ovl d}F_{\tilde\rho\rho_l}^{\xi,d}.
    \end{align*}
    The ribbon $\tilde\rho\rho_l$ connects two sites at $\partial\Lambda$ and we can use a previous argument to replace $\hat F_{\tilde\rho\rho_l}$ in $\eta$ by a ribbon operator in $\mc A(\Lambda)$.

    If there is no ribbon $\rho_l$ with $\partial_0\rho_l=\partial_1\rho$ we pick one ribbon $\rho_l$ and apply Lemma~\ref{lem:ribinv} to replace it with a ribbon operator associated to a ribbon $\ovl\rho_l$ with $\partial_0\ovl\rho_l=\partial_1\rho$. But then we can proceed as before.
    Note that we also could have applied Lemma~\ref{lem:tech} instead to conclude the same for the second case.

    By induction we now can conclude that for any $n\in\mb N$ and any product of ribbon operators $\hat F_1,\cdots,\hat F_n\in\mc A(\Lambda^c)$ the relation in equation~\eqref{eq:cone_exc_2} holds true.
\end{proof}
\begin{lemma}\label{lem:inv}
    For any cone $\Lambda\subset\Gamma$ it holds $\mc A(\Lambda^c)'\mc H_\Lambda\subset\mc H_\Lambda$, hence $P_\Lambda\in\mc R_{\Lambda^c}$.
\end{lemma}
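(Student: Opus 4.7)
The plan is to show the equivalent statement that $\mc H_\Lambda^\perp$ is invariant under $\mc A(\Lambda^c)'$, from which $P_\Lambda \in \mc A(\Lambda^c)'' = \mc R_{\Lambda^c}$ and $\mc A(\Lambda^c)' \mc H_\Lambda \subset \mc H_\Lambda$ follow at once. By Lemma~\ref{lem:rib_dense}, the set $\mc F_{\Lambda^c}\mc F_\Lambda\Omega$ is norm-dense in $\mc H_0$, so it suffices to understand how individual product vectors $\hat F G\Omega$, with $\hat F$ a product of ribbon operators in $\mc F_{\Lambda^c}$ and $G\in\mc F_\Lambda$, decompose with respect to the subspaces $\mc H_\Lambda$ and $\mc H_\Lambda^\perp$.

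The heart of the proof is a dichotomy in the spirit of \cite{haagdtoric}. Suppose first that $\hat F$ commutes with every $A_s$ and $B_s$ for $s\in\mr{int}(\Lambda^c)$. Then Lemma~\ref{lem:orth_charge_2} gives $\hat F\Omega\in\mc H_\Lambda$, and since $G\in\mc A(\Lambda)$ is supported on edges disjoint from those of $\hat F$, we have $\hat F G\Omega = G\hat F\Omega$, which lies in $\mc H_\Lambda$ by the $\mc A(\Lambda)$-invariance from Lemma~\ref{lem:invariant}. Suppose instead that there exists some $s\in\mr{int}(\Lambda^c)$ at which $\hat F$ fails to commute with $A_s$ or $B_s$. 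Then Lemma~\ref{lem:orth_charge_1} provides the stronger orthogonality
\[
	(\hat F G\Omega,\,XC\Omega)=0\quad\text{for every }X\in\mc A(\Lambda^c)'\text{ and every }C\in\mc F_\Lambda.
\]
Because $\mc A(\Lambda^c)'$ is $*$-closed, this can be rewritten as $\langle X\hat F G\Omega,\,C\Omega\rangle=0$ for arbitrary $X\in\mc A(\Lambda^c)'$ and $C\in\mc F_\Lambda$, and by density of $\mc F_\Lambda\Omega$ in $\mc H_\Lambda$ it yields $X\hat F G\Omega\perp\mc H_\Lambda$ for all $X\in\mc A(\Lambda^c)'$.

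Let $\mathcal V$ denote the linear span of product vectors $\hat F G\Omega$ of the second type. The dichotomy together with Lemma~\ref{lem:rib_dense} ensures that $\mc H_\Lambda+\mathcal V$ is dense in $\mc H_0$, while specialising the second case to $X=I$ yields $\mathcal V\perp\mc H_\Lambda$. Hence $\overline{\mathcal V}=\mc H_\Lambda^\perp$. The stronger statement $X\mathcal V\perp\mc H_\Lambda$ passes to the norm-closure thanks to boundedness of $X$, so $X\mc H_\Lambda^\perp\subset\mc H_\Lambda^\perp$ for every $X\in\mc A(\Lambda^c)'$, which is the desired invariance.

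The main technical work has already been absorbed into Lemmas~\ref{lem:orth_charge_1} and~\ref{lem:orth_charge_2}, so the remaining difficulty is purely bookkeeping. The one point that deserves attention is the role of the $X$-factor in Lemma~\ref{lem:orth_charge_1}: the mere orthogonality $\hat F G\Omega\perp\mc H_\Lambda$ would only tell us that individual bad vectors lie outside $\mc H_\Lambda$, whereas it is the $X$-robust form $X\hat F G\Omega\perp\mc H_\Lambda$ that lets us push the invariance onto the full commutant $\mc A(\Lambda^c)'$ rather than merely on $\mc A(\Lambda^c)$ itself.
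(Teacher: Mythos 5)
Your proof is correct and follows essentially the same route as the paper's: the same dichotomy on product vectors $\hat F G\Omega$ via Lemmas~\ref{lem:orth_charge_1} and~\ref{lem:orth_charge_2}, combined with the density statement of Lemma~\ref{lem:rib_dense} and the $X$-robust orthogonality to pass to the full commutant. Your explicit introduction of the span $\mathcal{V}$ and the identification $\overline{\mathcal{V}}=\mc H_\Lambda^\perp$ merely spells out the extension step that the paper states more tersely; the content is identical.
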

\begin{proof}
    Let $\hat F:=\hat F_1\cdots\hat F_n$ be a product of ribbon operators $\hat F_1,\ldots,\hat F_n\in\mc F_{\Lambda^c}$. Furthermore let $F,C\in\mc F_\Lambda$ and $X\in\mc A(\Lambda^c)'$ be any, non-zero, operators.
    For convenience set $\eta:=\hat FF\Omega$ and $\xi:=C\Omega$. Recall the definition of $D_s$ in equation~\eqref{eq:chapro}.

    By Lemma~\ref{lem:orth_charge_1} we have that if $(\eta,X\xi)\neq 0$ holds for all $F,C\in\mc F_\Lambda$ and $X\in\mc A(\Lambda^c)'$ then for any $s\in\mr{int}(\Lambda^c)$ the operator $\hat F$ commutes with $D_s$, i.e. $[F,D_s]=0$.
    Now by Lemma~\ref{lem:orth_charge_2} this implies $\eta\in\mc H_\Lambda$. To see this note that
    \begin{align*}
        \hat F\Omega\in\mc H_\Lambda\iff\left(\forall F\in\mc F_\Lambda:\hat FF\Omega\in\mc H_\Lambda \right)
    \end{align*}
    since $\hat FF=F\hat F$ and $\mc F_\Lambda\mc H_\Lambda\subseteq\mc H_\Lambda$. The other direction of this equivalence can be seen by assuming that the right hand side was true while the left hand was not which immediately leads to a contradiction since $I\in\mc F_\Lambda$. Summarizing this we obtain
    \begin{align}\label{eq:stuff}
        (\eta,X\xi)\neq 0\implies \eta\in\mc H_\Lambda
    \end{align}
    for all $\eta=\hat FF\Omega,\xi=C\Omega$ and $\hat F,F,C, X$ as above.

    By definition $\mc F_{\Lambda^c}$ contains all matrix units of the edge algebras $\mc A_e$ for $e\in\Lambda^c$ since the former are products of triangle operators. Hence products of ribbon operators $\hat F_1,\ldots,\hat F_n$ form a generating system of $\mc F_{\Lambda^c}$.
    Thus, by Lemma~\ref{lem:rib_dense}, the linear span of the set
    \begin{align*}
        \{\hat F_1\cdots\hat F_nF\Omega\,|\,\hat F_1,\dots\hat F_n\in\mc F_{\Lambda^c}\textnormal{ ribbon operators },F\in\mc F_\Lambda, n\in\mb N\}
    \end{align*}
    is a dense subspace of $\mc H$.
    From this we conclude that equation~\eqref{eq:stuff} holds for any $\eta\in\mc H$ and $\xi\in\mc H_\Lambda$. Therefore
    \begin{align*}
        (\forall\psi\in\mc H):\psi\in\mc H_\Lambda^\perp\implies\left((\forall\phi\in\mc H_\Lambda)(\forall X\in\mc A(\Lambda^c)'):(\psi,X\phi)=0\right)
    \end{align*}
    and we arrive at $\mc A(\Lambda^c)'\mc H_\Lambda\perp\mc H_\Lambda^\perp$.
\end{proof}
As the next step we want to consider the restrictions of the von~Neumann algebras $\mc R_\Lambda$ and $\mc R_{\Lambda^c}$ to $\mc H_\Lambda$.
By \cite[Proposition II.3.10]{Takesaki:2002} both restrictions are again von~Neumann algebras.
\begin{definition}
    For any cone $\Lambda\subset\Gamma$ we write $\mc A_\Lambda:=P_\Lambda\mc R_\Lambda P_\Lambda\upharpoonright_{\mc H_\Lambda}$ and $\mc B_\Lambda=P_\Lambda\mc R_{\Lambda^c}P_\Lambda\upharpoonright_{\mc H_\Lambda}$ as subalgebras of $\mc B(\mc H_\Lambda)$.
\end{definition}
By using similar techniques as in the proof of the lemmas~\ref{lem:orth_charge_1} and \ref{lem:orth_charge_2} we show that elements of the form $A_s+iB_s$ with $A_s\in\mc A_s$ and $B_s\in\mc B_s$ already generate $\mc H_\Lambda$ when applied on the ground state vector. Here $\mc{A}_s$ is the self-adjoint part of $\mc{A}_\Lambda$, and similarly for $\mc{B}_s$.
\begin{lemma}\label{lem:selfadjoint}
    Let $\mc A_s$ be the self-adjoint part of $\mc A_\Lambda$ and $\mc B_s$ that of $\mc B_\Lambda$. Then the set
    \begin{align*}
        \mc A_s\Omega+i\mc B_s\Omega
    \end{align*}
    is dense in $\mc H_\Lambda$.
\end{lemma}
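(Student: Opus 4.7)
The plan is to reduce to checking that vectors of the form $F\Omega$ with $F\in\mc F_\Lambda$ lie in the closure of $\mc A_s\Omega+i\mc B_s\Omega$, since Lemma~\ref{lem:rib_dense} together with the observation inside its proof already identifies $\mc F_\Lambda\Omega$ as dense in $\mc H_\Lambda$. By linearity and the fact that products of ribbon operators labelled by irreducible representations $(\chi,c)$ of $\mc D(G)$ span $\mc F_\Lambda$, it suffices to handle products $F_{\rho_1}^{\chi_1,c_1}\cdots F_{\rho_n}^{\chi_n,c_n}\Omega$.

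The heart of the argument is the base case of a single ribbon operator $F_\rho^{\chi,c}$ with $\rho\subset\Lambda$ having both endpoints on $\partial\Lambda$. By cone condition (4) we may pick a ribbon $\sigma\subset\Lambda^c$ with $\partial_i\sigma=\partial_i\rho$ for $i=0,1$. Lemma~\ref{lem:indeprib} then yields $F_\sigma^{\chi,c}\Omega=F_\rho^{\chi,c}\Omega$ and, since taking adjoints via Lemma~\ref{lem:ab_tool_3} interchanges the labels $(\chi,c)\leftrightarrow(\ovl\chi,\ovl c)$, also $F_\sigma^{\ovl\chi,\ovl c}\Omega=F_\rho^{\ovl\chi,\ovl c}\Omega$. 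Defining the self-adjoint operators
\[
	a:=\tfrac{1}{2}\bigl(F_\rho^{\chi,c}+F_\rho^{\ovl\chi,\ovl c}\bigr)\in\mc A(\Lambda)^{sa},\qquad b:=\tfrac{1}{2i}\bigl(F_\sigma^{\chi,c}-F_\sigma^{\ovl\chi,\ovl c}\bigr)\in\mc A(\Lambda^c)^{sa},
\]
one obtains an element of $\mc A_s$ from $a$ (restricting to $\mc H_\Lambda$) and an element of $\mc B_s$ from the compression $P_\Lambda b P_\Lambda|_{\mc H_\Lambda}$; because $F_\sigma^{(\cdot)}\Omega$ already lies in $\mc H_\Lambda$ (it coincides with $F_\rho^{(\cdot)}\Omega$) the compression acts as $b$ itself on $\Omega$. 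A direct calculation gives $a\Omega+ib\Omega=F_\rho^{\chi,c}\Omega$, placing this vector in $\mc A_s\Omega+i\mc B_s\Omega$.

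For ribbons with one or both endpoints in the interior of $\Lambda$, or for genuine products, the plan is to reduce to the boundary case. Concretely, using Lemma~\ref{lem:tech} one rewrites a product of ribbon operators whose ribbons share a common site into a canonical form consisting of an ``outer'' ribbon carrying the total charge together with auxiliary ribbons connecting the remaining endpoints in pairs; by choosing the shared site and the auxiliary ribbon endpoints on $\partial\Lambda$ whenever possible (using cone conditions (2) and (3) for connectivity) one arranges that each factor in the resulting product is of the boundary type handled above, or can be absorbed into an element of $\mc A_s$ via its manifestly self-adjoint combinations. Where an interior ribbon remains, one extends it by short ribbons $\tau_L,\tau_R\subset\Lambda$ reaching $\partial\Lambda$ and applies Lemma~\ref{lem:abdecompose} to write $F_{\tau_L\rho\tau_R}^{\chi,c}=F_{\tau_L}^{\chi,c}F_\rho^{\chi,c}F_{\tau_R}^{\chi,c}$, so that the boundary case plus the commutation relations of Section~\ref{sec:rib_comm} extract the required decomposition.

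The main obstacle will be Step~3: for a ribbon whose endpoints sit strictly in the interior of $\Lambda$, one cannot directly substitute a ribbon in $\Lambda^c$ for the imaginary part, so the replacement $a\Omega+ib\Omega=F_\rho^{\chi,c}\Omega$ must be achieved only after rearranging via Lemma~\ref{lem:tech} and deformation through $\partial\Lambda$. The bookkeeping of phases coming from \eqref{eq:rib_braid} and \eqref{eq:irrepbraid}, and of the adjoint action on labels via Lemma~\ref{lem:ab_tool_3}, is what makes this step technical; the key observation that lets it go through is that every auxiliary ``interior'' piece can be paired with its Hermitian conjugate (contributing to $\mc A_s$) while the ``boundary'' piece is transported into $\Lambda^c$ (contributing to $\mc B_s$ after compression), mirroring the technique used in Lemmas~\ref{lem:orth_charge_1} and \ref{lem:orth_charge_2}.
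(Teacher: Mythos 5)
Your boundary case is correct and coincides with one sub-case of the paper's argument: for a ribbon $\rho\subset\Lambda$ with both endpoints on $\partial\Lambda$, replacing the ``imaginary part'' by a ribbon $\sigma\subset\Lambda^c$ with the same endpoints (so that $F_\sigma^{\chi,c}\Omega=F_\rho^{\chi,c}\Omega$ by Lemma~\ref{lem:indeprib}) and forming $a=\tfrac12(F_\rho^{\chi,c}+F_\rho^{\ovl\chi,\ovl c})$, $b=\tfrac1{2i}(F_\sigma^{\chi,c}-F_\sigma^{\ovl\chi,\ovl c})$ does give $a\Omega+ib\Omega=F_\rho^{\chi,c}\Omega$. (Do note that $\mc A_s\Omega+i\mc B_s\Omega$ is only a \emph{real} subspace, so you must also exhibit $iF\Omega$ separately; your construction adapts by swapping the roles of the symmetric and antisymmetric combinations, but you should say so.)

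The genuine gap is your treatment of ribbons whose endpoints lie in the interior of $\Lambda$ — which is the generic case, not a technicality. Your proposed reduction, extending $\rho$ to the boundary by $\tau_L,\tau_R$ and invoking Lemma~\ref{lem:abdecompose}, does not work: $F_{\tau_L\rho\tau_R}^{\chi,c}\Omega\neq F_\rho^{\chi,c}\Omega$ when $(\chi,c)$ is nontrivial, because extending a charged ribbon moves the excitations to new sites; Lemma~\ref{lem:ab_tool_1} permits such an extension only when the relevant charge at the endpoint is trivial. More fundamentally, the vector $F_\rho^{\chi,c}\Omega$ carries excitations at interior sites of $\Lambda$, and no operator in $\mc A(\Lambda^c)$ can create those, so the ``transport the imaginary part into $\Lambda^c$'' strategy is unavailable here and cannot be rescued by deformations or by Lemma~\ref{lem:tech}. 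The missing device — and the actual heart of the paper's proof — is to use the charge projector $D_s=A_sB_s$ (or just $A_s$) localised at an interior endpoint: since $D_s$ annihilates the charged vector $F^*\Omega$, the operator $\tilde F:=FD_s+D_sF^*$ is manifestly self-adjoint, lies in $\mc A(\Lambda)$, and satisfies $\tilde F\Omega=F\Omega$, so the interior case is handled entirely within $\mc A_s$ with no contribution from $\mc B_s$ at all. Without this idea your induction cannot close; the vague phrase ``absorbed into an element of $\mc A_s$ via its manifestly self-adjoint combinations'' does not supply it, since $\tfrac12(F+F^*)\Omega\neq F\Omega$.
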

\begin{proof}
    First note that since both $\mc A_s$ and $\mc B_s$ are real vector spaces it suffices to show for operators $F\in\mc F_\Lambda$ that $F\Omega$ and $iF\Omega$ are contained in $\mc A_s\Omega+i\mc B_s\Omega$.
    In order to do so we first show this to hold if $F$ is a finite product of ribbon operators in $\mc F_\Lambda$ and then conclude for general operators $F\in\mc F_\Lambda$ by a density argument.
    Essential now is the structure of the vector space $\mc H_\Lambda$ that we elaborated on earlier in Lemma~\ref{lem:rib_dense} and in the proofs of Lemma~\ref{lem:orth_charge_1} and \ref{lem:orth_charge_2}.
	This is to say that finite products of ribbon operators in $\mc F_\Lambda$ applied to the vacuum vector $\Omega$ sufficiently describe $\mc H_\Lambda$ and certain ribbon operators in $\mc F_{\Lambda^c}$ map $\Omega$ to vectors in $\mc H_\Lambda$ and can be expressed as the images of $\Omega$ of certain elements of $\mc F_\Lambda$.

    Throughout the proof we consider ribbon operators labelled by irreducible representations of the quantum double model and we can assume that the label is nontrivial for if it was trivial we just obtain the identity operator.
    Again we will use the charge projections $D_s^{\chi,c}$ introduced in equation~\eqref{eq:chapro} which project onto the excitation given by $(\chi,c)$ at site $s$. Especially recall that $D_s=D_s^{\mr{id},e}=A_sB_s$.
    Now let $F_1,\dots,F_n\in\mc F_\Lambda$ be ribbon operators with $n>0$ and set $F:=F_1\cdots F_n$.
    The idea is to construct self-adjoint elements of $\mc A_s$ and $\mc B_s$ by taking linear combinations of products of projections $A_s, B_s$ and products of ribbon operators in $\mc F_\Lambda$ and $\mc F_{\Lambda^c}$.
    These self-adjoint operators are chosen in such a way that they map the state vector to the same vector as $F$.
    Again, as in previous proofs, we will work with an induction over the number of ribbon operators in $F$.
    With the same argument as in the proof of Lemma~\ref{lem:orth_charge_2} we can assume that in $F$ there are no ribbon operators associated to closed ribbons or trivial ribbons.

    Let $n=1$ and let $\rho$ denote the corresponding ribbon. In case that both the star and the plaquette at least one of the ending sites of $\rho$, denoted by $s$, are contained in $\Lambda$ we set
    \begin{align*}
        \tilde F&:=FD_s+D_sF^*  &    &   \textnormal{and}    &      \hat F&:= i(FD_s-D_sF^*).
    \end{align*}
    Obviously these operators are selfadjoint hence contained in $\mc A_s$ and it can easily be checked that $\tilde F\Omega=F\Omega$ and $\hat F\Omega = iF\Omega$. Therefore $F\Omega$ and $iF\Omega$ belong to $\mc A_s\Omega$.

    Assume that at both ends of $\rho$ are contained in $\Lambda$ but the plaquettes at both sites are not contained in $\Lambda$. Then the stars are still contained in $\Lambda$, by definition (c.f. Definition~\ref{def:bound_site} and the discussion after) and the star operators are elements of $\mc A_s$.
    In case $[F,A_s]\neq 0$, with $s=\partial_0\rho$ or $s=\partial_1\rho$, it suffices to take
    \begin{align*}
        \tilde F&:=F A_s+A_sF^*    &    &   \textnormal{and}       &    \hat F&:=i(F A_s-A_sF^*),
    \end{align*}
    since then $\tilde F\Omega = F\Omega + \delta_{\chi,\mr{id}}\Omega=F\Omega$ an analogously $\hat F\Omega = iF\Omega$ where $\chi$ is part of the label of $F$.
    These operators are selfadjoint and $\tilde F,\hat F\in\mc A_s$ hence $F\Omega,iF\Omega\in\mc A_s\Omega$.

    If, however, $[F,A_s]=0$ we can use Lemma~\ref{lem:ab_tool_1} to extend $\rho$ with triangles $\tau,\tilde\tau$ such that $\tilde\rho:=\tau\rho\tilde\tau$ is a ribbon, and $\partial_0\tilde\rho,\partial_1\tilde\rho\in\partial\Lambda$.
    Furthermore we then have $F_{\tilde\rho}\Omega=F\Omega$.
    But now we can invoke Lemma~\ref{lem:ribinv} and Lemma~\ref{lem:indeprib} to find a ribbon $\ovl\rho\subset\Lambda^c$ such that $F\Omega=F_{\ovl\rho}^*\Omega$.
    Now we can set
    \begin{align*}
        \tilde F:=\frac12(F+F^*)+i\left(\frac i2(F_{\ovl\rho}-F_{\ovl\rho}^*)\right)
    \end{align*}
    and it can easily be checked that the ``real part'' of $\tilde F$ is an element of $\mc A_s$ and the ``imaginary part'' one of $\mc B_s$, hence $\tilde F\in\mc A_s+i\mc B_s$. By construction $F\Omega=\tilde F\Omega\in\mc A_s\Omega+i\mc B_s\Omega$. Similarly
    \begin{align*}
        \hat F:=\frac i2(F-F^*)+\frac i2(F_{\ovl\rho}+F_{\ovl\rho}^*)
    \end{align*}
    and $iF\Omega=\hat F\Omega\in\mc A_s\Omega+i\mc B_\Omega$.

    We now proceed by induction. Let $n>0$ be arbitrary but fixed and assume that the assertion holds for all $F_1,\dots,F_{n-1}\in\mc F_\Lambda$. Let $F_1,\dots,F_n\in\mc F_\Lambda$ be any non-trivial ribbon operators. If one of them was trivial then we could remove it and obtained $n-1$ factors.
    Again we have different cases to treat. First of all we handle the case where we can remove or combine ribbon operators leaving us with $n-1$ factors in the product. More precisely, consider that there are two ribbon operators associated to ribbons $\rho_i,\rho_k$ with $1\leq i,k\leq n$ such that they start and end at the same site.
    Then either $\partial_j\rho_i=\partial_j\rho_k, j=0,1$ or $\rho_i\rho_k$ is a closed ribbon.
    In either case in the product $F_1\cdots F_n$ we can bring $F_i$ and $F_k$ to the right by using the commutation relations of ribbon operators.
    Then we can use equation~\eqref{eqn:rib_manip} and the remark after equation~\eqref{eq:abribbon} to replace $F_iF_j$ in front of $\Omega$ with a single ribbon operator.
    If $\rho_i\rho_k$ is closed then we have $F_i^{\chi,c}F_j^{\xi,d}\Omega=F_{\rho_i}^{\chi\ovl\xi,c\ovl d}\Omega$.
    In case  $\partial_j\rho_i=\partial_j\rho_k, j=0,1$ we have $F_i^{\chi,c}F_j^{\xi,d}\Omega=F_i^{\chi\xi,cd}\Omega$.
    Again $(\chi,c)$ and $(\xi,d)$ are irreducible representations of $\mc D(G)$.
    That is, in both cases we end up with a product of $n-1$ ribbon operators in front of $\Omega$.
    This allows us to assume in the rest of the proof that in $F_1\cdots F_n$ each ribbon involved there is appearing exactly once.

    The rest of the proof can be divided into three main cases. Let again $F_1\cdots F_n$ be the product of non-trivial ribbon operators in $\mc F_\Lambda$. Assume that there are no such ribbons as in the previous case.
    Then there are three possibilities: either there exists a ribbon $\rho$ involved in the product such that $D_{\partial_i\rho}\in\mc F_\Lambda$ for at least one $i=0,1$, or all ribbons end at $\partial\Lambda$, or neither of both, i.e. $\partial_i\rho\notin\partial\Lambda$ and $D_{\partial_i\rho}\notin\mc F_\Lambda$.

    Consider the first main case, namely that $D_{\partial_i\rho}\in\mc F_\Lambda$ for $i=0$ or $i=1$ for at least one ribbon involved in $F_1\cdots F_n$. We set $s:=\partial_0\rho$ and without loss of generality we can assume that $F_\rho=F_n$ and $i=0$. If the ribbon operator was not $F_n$ we could use the commutation relations of ribbon operators to move this operator to the last place in the product.
    We can divide the treatment of this case into two different cases. The first case is that there is a site $s\in\Lambda$ such that $[F_1\cdots F_n,D_s]\neq 0$. In the other case we have that for all sites $s'\in\Lambda$ with $D_{s'}\in\mc F_\Lambda$ it holds $[F_1\cdots F_n,D_{s'}]=0$.

    Now for the first subcase of the first main case. If there is a site $s\in\Lambda$ with $[F_1\cdots F_n,D_s]\neq 0$ we can set
    \begin{align*}
        \tilde F&:=F_1\cdots F_nD_s+D_sF_1^*\cdots F_n^*    &&\textnormal{and}&       \hat F&:=iF_1\cdots F_nD_s-iD_sF_1^*\cdots F_n^*.
    \end{align*}
    Then $\tilde F, \hat F\in\mc A_s$ and it holds $F_1\cdots F_n\Omega=\tilde F\Omega$ and similarly $iF_1\cdots F_n\Omega=\hat F\Omega$.

    The case that for all sites $s'\in\Lambda$ with $D_{s'}\in\mc F_\Lambda$ it holds $[F_1\cdots F_n,D_{s'}]=0$ can be treated as follows.
    Since we assumed that there is at least one ribbon $\rho$ involved in the product, the corresponding ribbon operator is either trivial, by Lemma~\ref{lem:triv_charge}, or there is at least one additional ribbon ending or starting at one of the endpoints of $\rho$. We excluded the first case by assumption so we have to treat the second one.
    Therefore consider the situation where there are $k$ ribbons $\rho_{n-k},\dots,\rho_n$ in $F_1\cdots F_n$ ending at $s$.
	By Lemma~\ref{lem:triv_charge} the condition that the operators commute with the charge projector is equivalent to $\chi_{n-k}\cdots\chi_n=\mr{id}$ and $c_{n-k}\cdots c_n=e$ where $\chi_i$ are irreducible representations of $G$ and $c_i\in G$ with $i=n-k,\dots,n$.
    But by Lemma~\ref{lem:tech} we have that there are ribbons $\sigma_{n-k},\dots,\sigma_{n-1}$ such that they do not cross the site $s$, a ribbon $\gamma$ having $s$ as an ending site, irreducible representations $\xi_{n-k},\dots,\xi_{n-1}$ of $G$ and elements $d_{n-k},\dots,d_{n-1}\in G$ such that
    \begin{align*}
        F_{\rho_1}^{\chi_1,c_1}\cdots F_{\rho_n}^{\chi_n,c_n}\Omega &=  zF_{\rho_1}^{\chi_1,c_1}\cdots F_{\rho_{n-k-1}}^{\chi_{n-k-1},c_{n-k-1}}F_{\sigma_{n-k}}^{\xi_{n-k},d_{n-k}}\cdots F_{\sigma_{n-1}}^{\xi_{n-1},d_{n-1}}F_\gamma^{\chi,c}\Omega
    \end{align*}
    where $z\in\mb C,|z|=1$ and $\chi=\chi_{n-k}\cdots\chi_{n}$ and $c=c_{n-k}\cdots c_n$.
    The commutation relation with the charge projection now tells us that $\xi=\mr{id}$ and $c=e$, hence $F_\gamma^{\chi,c}=I$. This gives an expression with $n-1$ ribbon operators acting on $\Omega$ and we are done for this case.

    Let's turn to the second main case where in the product $F_1\cdots F_n\in\mc F_\Lambda$ there are only ribbons $\rho_i,i\in\{1,\dots,n\}$ involved whose ending sites are contained in $\partial\Lambda$.
    By definition, c.f. Definition~\ref{def:bound_site}, it holds for all $i\in\{1,\dots,n\}$ that $D_{\partial_k\rho_i}\neq\mc F_\Lambda, k=0,1$ so we cannot treat this in the manner as the first main case.
    In the proof of Lemma~\ref{lem:orth_charge_2} we used that we can replace ribbon operators associated to ribbons, which are contained in $\Lambda^c$ and which connect sites on $\partial\Lambda$, to ribbon operators of ribbons which are contained in $\Lambda$ and which connect the same sites, without changing the image of $\Omega$ under these operators.
    Of course, this works the other way round, too. So choosing
    \begin{align*}
        \tilde F:=\frac12\left(F_{\rho_1}\cdots F_{\rho_n}+F_{\rho_n}^*\cdots F_{\rho_1}^*\right)+i\left(\frac i2(F_{\tilde\rho_1}\cdots F_{\tilde\rho_n}-F_{\tilde\rho_n}^*\cdots F_{\tilde\rho_1}^*)\right)
    \end{align*}
    and
    \begin{align*}
        \hat F:=\frac i2\left(F_{\rho_1}\cdots F_{\rho_n}-F_{\rho_n}^*\cdots F_{\rho_1}^*\right)+\frac i2\left(F_{\tilde\rho_1}\cdots F_{\tilde\rho_n}+F_{\tilde\rho_n}^*\cdots F_{\tilde\rho_1}^*\right)
    \end{align*}
    will do the job.
    We used the notation $F_{\rho_i}$ instead of $F_i,i=1,\dots,n$ to indicate the dependence on the ribbon.
    As above $\tilde\rho_i$ indicates the ribbon obtained by extending $\rho_i$ by triangles corresponding to Lemma~\ref{lem:ab_tool_1} if necessary, and inverting it using Lemma~\ref{lem:ribinv}.
    Then $\tilde F,\hat F\in\mc A_s+i\mc B_s$ and it can be easily be verified that $\tilde F\Omega=F_1\cdots F_n\Omega$ and $\hat F\Omega=iF_1\cdots F_n\Omega$.

    It remains to treat the third main case. Consider there is no $\rho$ involved in $\hat F$ such that it falls under the two previous main cases. I.e. for any $\rho\subset\Lambda$ appearing in $\hat F$ at least one of the ending sites $s_i:=\partial_i\rho, i=0,1$ is such that $\mc D_{s_i}\notin\mc F_\Lambda$ and $s_i\notin\partial\Lambda$.
    Then, by construction of $\Lambda$ and by the Definition~\ref{def:bound_site}, the star operators at $s_i$ are still contained in $\Lambda$, i.e. $A_{s_i}\in\mc F_\Lambda$. Furthermore for each such $s_i$ there are triangles $\tau_i\in\Lambda$ such that $\tau_i\rho$ or $\rho\tau_i$ is a ribbon and $\partial_{i}\tau_i\in\partial\Lambda$.
    There are two cases appearing here: $[A_{s_i},F]=0$ for any such $s_i$ or $[A_{s_i},F]\neq 0$ for some $s_i$. In case $[A_{s_i},F]\neq 0$ for any $s_i$ we simply set
    \begin{align*}
        \tilde F&:=F_1\cdots F_nA_{s_i}+A_{s_i}F_1^*\cdots F_n^*    &&\textnormal{and}&       \hat F&:=iF_1\cdots F_nA_{s_i}-iA_{s_i}F_1^*\cdots F_n^*.
    \end{align*}
    Then $\tilde F, \hat F\in\mc A_s$ and it holds $F_1\cdots F_n\Omega=\tilde F\Omega$ and similarly $iF_1\cdots F_n\Omega=\hat F\Omega$.
    In case there is a $s_i$ such that $[A_{s_i},F]=0$ we first deform or invert any ribbon $\sigma$ involved in $F$, using Lemma~\ref{lem:indeprib} and \ref{lem:ribinv}, such that any of them has $s_i$ as final site, i.e. $\partial_0\sigma=s_i$ and any of them stays in $\Lambda$.
    This gives an expression $F\Omega=F'\Omega$ where $F'$ is again a product of ribbon operators in $\mc F_\Lambda$ together with a possible phase factor from the commutation relations. More importantly, $[F',A_{s_i}]=0$. Let $\sigma'$ denote these possibly deformed or inverted ribbons.
    Then, by Lemma~\ref{lem:ab_tool_1} there is a triangle $\tau\in\Lambda$ such that $\partial_1\tau=s_i$ and for any ribbon $\sigma'$ it holds $\tau\sigma'\subset\Lambda$ is a ribbon. Furthermore $\partial_0\tau\in\partial\Lambda$.
    If we apply this procedure to any of the ending sites $s_i$ of ribbons in $F$ for which $[F,A_{s_i}]=0$ we end up at the situation in the second main case from where we can proceed accordingly.

    This completes the third main case and also the proof of the claim.
\end{proof}
With these preparations we are finally in a position to prove the main theorem. In particular, the last lemma allows us to use the result of Rieffel and Van Daele mentioned before.
\begin{theorem}\label{thm:haagdual}
    Cone algebras of the quantum double model for finite abelian groups on the infinite square lattice satisfy Haag duality in the vacuum representation.

    More precisely, if $\Lambda\subset\Gamma$ is a cone then
    \begin{align*}
        \pi_0\left(\mc A(\Lambda^c)\right)'=\pi_0\left(\mc A(\Lambda)\right)''.
    \end{align*}
\end{theorem}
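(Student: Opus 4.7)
The inclusion $\pi_0(\mc{A}(\Lambda))'' \subseteq \pi_0(\mc{A}(\Lambda^c))'$ is an immediate consequence of locality, so the plan concerns the reverse, non-trivial direction. My strategy is to first reduce the question to a commutation problem on the invariant subspace $\mc{H}_\Lambda$, apply the Rieffel--Van Daele theorem there, and then lift the identity back to $\mc{H}_0$ by a cyclicity argument using $\mc{A}(\Lambda^c)$.

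On $\mc{H}_\Lambda$ the natural objects are the compressed von Neumann algebras $\mc{A}_\Lambda$ and $\mc{B}_\Lambda$. Locality on $\alg{A}$ transfers to these compressions and yields $\mc{A}_\Lambda \subseteq \mc{B}_\Lambda'$. The vector $\Omega$ is cyclic for $\mc{A}_\Lambda$ essentially by definition of $\mc{H}_\Lambda = \overline{\mc{F}_\Lambda \Omega}$ together with the fact that $\mc{F}_\Lambda \subseteq \pi_0(\mc{A}(\Lambda))$. Combining these two facts with Lemma~\ref{lem:selfadjoint}, which gives the density of $\mc{A}_s\Omega + i\mc{B}_s\Omega$ in $\mc{H}_\Lambda$, I would feed everything into the Rieffel--Van Daele theorem~\cite{RiDa:1975} to conclude $\mc{A}_\Lambda = \mc{B}_\Lambda'$ inside $\mc{B}(\mc{H}_\Lambda)$.

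For the lift, take $X \in \mc{R}_{\Lambda^c}'$. By Lemma~\ref{lem:inv} one has $P_\Lambda \in \mc{R}_{\Lambda^c}$, so $X$ commutes with $P_\Lambda$ and therefore preserves $\mc{H}_\Lambda$. For any $B \in \mc{R}_{\Lambda^c}$, $X$ commutes both with $B$ and with $P_\Lambda$, hence $X\upharpoonright_{\mc{H}_\Lambda}$ commutes with the compression $P_\Lambda B P_\Lambda \upharpoonright_{\mc{H}_\Lambda}$. Thus $X\upharpoonright_{\mc{H}_\Lambda} \in \mc{B}_\Lambda' = \mc{A}_\Lambda$, and by the very definition of $\mc{A}_\Lambda$ there exists $Y \in \mc{R}_\Lambda$ with $Y\upharpoonright_{\mc{H}_\Lambda} = X\upharpoonright_{\mc{H}_\Lambda}$ (using that $Y$ leaves $\mc{H}_\Lambda$ invariant by Lemma~\ref{lem:invariant}). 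Then $Z := X - Y$ lies in $\mc{R}_{\Lambda^c}'$ (the inclusion $\mc{R}_\Lambda \subseteq \mc{R}_{\Lambda^c}'$ is locality again), vanishes on $\mc{H}_\Lambda$, and commutes with every $B \in \mc{A}(\Lambda^c)$. Consequently $Z B \xi = B Z \xi = 0$ for all $B \in \mc{A}(\Lambda^c)$ and $\xi \in \mc{H}_\Lambda$, and Lemma~\ref{lem:rib_dense} guarantees that $\mc{A}(\Lambda^c) \mc{H}_\Lambda$ is dense in $\mc{H}_0$. Hence $Z = 0$ and $X = Y \in \mc{R}_\Lambda$, as required.

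The main obstacle is not the outer scaffolding just described — once Lemmas~\ref{lem:invariant}, \ref{lem:inv} and \ref{lem:selfadjoint} are in place, the argument is a fairly direct transcription of the Rieffel--Van Daele strategy of~\cite{haagdtoric}. The hard work sits inside Lemma~\ref{lem:selfadjoint}: one must show that arbitrary products of ribbon operators applied to $\Omega$ can be rewritten as sums of vectors of the form $A\Omega + iB\Omega$ with $A, B$ self-adjoint and localised inside $\Lambda$ respectively its complement. This is where the ribbon combinatorics, the deformation and inversion lemmas of Section~\ref{sec:proprib}, and the charge projections $D_s$ come in to ``clean up'' excitations sitting on the wrong side of $\partial\Lambda$; the whole Haag duality statement ultimately rests on this detailed understanding of the excitation content of ribbon operators.
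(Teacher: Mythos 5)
Your proposal is correct and follows essentially the same route as the paper: reduce to the compressed algebras $\mc A_\Lambda$ and $\mc B_\Lambda$ on $\mc H_\Lambda$, invoke Rieffel--Van Daele via Lemma~\ref{lem:selfadjoint} to get $\mc A_\Lambda = \mc B_\Lambda'$, and lift back to $\mc H_0$ using Lemmas~\ref{lem:invariant}, \ref{lem:inv} and the density statement of Lemma~\ref{lem:rib_dense}. Your final step (showing $Z = X - Y$ annihilates the dense subspace $\mc F_{\Lambda^c}\mc F_\Lambda\Omega$) is just a repackaging of the paper's direct computation $B\hat F F\Omega = A\hat F F\Omega$, and your identification of Lemma~\ref{lem:selfadjoint} as the locus of the real work is accurate.
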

\begin{proof}
    The argument is exactly the same as that given in reference \cite{haagdtoric}. For the convenience of the reader, we will restate it here.

    It remains to prove $\mc A(\Lambda^c)'\subset\mc A(\Lambda)''$ since, by locality, the other direction already holds.
    By construction it holds that $\mc A_\Lambda\subset\mc B_\Lambda'$ (as sub-algebras of $\mc B(\mc H_\Lambda)$) and both, $\mc A_\Lambda$ and $\mc B_\Lambda$, are von-Neumann algebras on the same Hilbert space $\mc H_\Lambda$. Hence, by \cite[Theorem 2]{RiDa:1975}, the statement of Lemma~\ref{lem:selfadjoint} is equivalent to $\mc A_\Lambda=\mc B_\Lambda'$.

    Furthermore, by \cite[Proposition II.3.10]{Takesaki:2002}, it holds that $\mc B_\Lambda'=P_\Lambda\mc R_{\Lambda^c}'P_\Lambda\upharpoonright_{\mc H_\Lambda}$.
    Now let $B\in\mc R_{\Lambda^c}'$ and denote $B_\Lambda:=P_\Lambda BP_\Lambda\upharpoonright_{\mc H_\Lambda}\in\mc B_\Lambda'$. Then $B_\Lambda\in\mc A_\Lambda$ and, by Lemma~\ref{lem:invariant}, there exists a unique element $A\in\mc R_\Lambda$ such that $B_\Lambda=P_\Lambda AP_\Lambda\upharpoonright_{\mc H_\Lambda}$.

    To proof the claim it suffices to show that $B=A$. Pick any $\hat F\in\mc F_{\Lambda^c}$ and $F\in\mc F_\Lambda$. Then
    \begin{align*}
        B\hat FF\Omega = \hat F BF\Omega = \hat FB_\Lambda F\Omega = \hat FAF\Omega = A\hat FF\Omega
    \end{align*}
    giving $A=B$, by Lemma~\ref{lem:rib_dense}, and consequently $B\in\mc R_\Lambda$.
\end{proof}

\section{The approximate split property}\label{sec:split}
One can ask the question if the observable (von Neumann) algebra actually is isomorphic to $\mc{R}_\Lambda \otimes \mc{R}_{\Lambda^c}$ if $\Lambda$ is a cone, so that we can see the cone part and the outside as two separate, independent systems without any correlations between them. This turns out not to be the case, because $\mc{R}_{\Lambda}$ is not a factor of Type I (remark that if this was the case Haag duality would follow readily). The proof that these factors are not of Type I given in~\cite[Thm 5.1]{toricendo} works for general finite groups $G$. Nevertheless, a slightly weaker condition \emph{is} true. If we separate the cone $\Lambda$ from the complement of a slightly bigger cone $\Lambda'$, the resulting von Neumann algebra \emph{is} a tensor product of the observable algebras in the two disjoint regions. This follows from the approximate split property For the convenience of the reader we first recall the precise definition.\footnote{We note again that in previous work we called this the \emph{distal} split property.}
\begin{definition}
We say that $\pi_0$ has the \emph{approximate split property} if for each pair $\Lambda_1 \ll \Lambda_2$ there is a Type I factor $\mc{N}$ such that $\mc{R}_{\Lambda_1} \subset \mc{N} \subset \mc{R}_{\Lambda_2}$.
\end{definition}
The notation $\Lambda_1 \ll \Lambda_2$ means that $\Lambda_1 \subset \Lambda_2$ and that the edges of $\Lambda_1$ and $\Lambda_2$ are sufficiently far removed. For the models that we study in this paper it is sufficient to demand that there is no star or plaquette that has a non-empty intersection with both $\Lambda_1$ and $\Lambda_2$.

The approximate split property is a variant of the split property as it appears in algebraic quantum field theory~\cite{MR0345546,MR848392} and in operator algebra~\cite{MR735338}. The approximate split property also plays a role in the definition of a \emph{cone index} that tells us something about the number of superselection sectors the theory has~\cite{klindex}. There are nice physical consequences of the approximate split property: it implies a certain statistical independence of the regions $\Lambda_1$ and $\Lambda_2^c$. In particular one can find normal product states across these regions, so it is possible to find states which do not violate Bell's inequality~\cite{MR984150}. In fact one can \emph{locally} (in the sense that one acts only with operators in $\mc{R}_{\Lambda_1}$ or $\mc{R}_{\Lambda_2^c}$) such product states~\cite{MR895295}.

That the approximate split property holds in Kitaev's quantum double model can be seen as follows. From the proof of the uniqueness of the translational invariant ground state outlined in Section~\ref{sec:ground} one can see that the ground state $\omega_0$ is actually a product state, when one restricts to regions that are sufficiently far away. More concretely, let $\Omega$ be the GNS vector for $\omega_0$. We will write $\omega_0$ again for the state on $\mc{R}_{\Lambda_1} \vee \mc{R}_{\Lambda_2^c}$ induced by the vector $\Omega$. Note that it is normal, since it is a vector state for the von Neumann algebra. Note that we remarked before that $\omega_0$ is actually a product state for $\alg{A}(\Lambda_1)$ and $\alg{A}(\Lambda_2^c)$ if the boundaries of $\Lambda_1$ and $\Lambda_2$ are sufficiently far apart. This is precisely guaranteed by the condition $\Lambda_1 \ll \Lambda_2$. One can then show what $\omega_0(AB) = \omega_0(A) \omega_0(B)$ if $A \in \mc{R}_{\Lambda_1}$ and $B \in \mc{R}_{\Lambda_2^c}$. The approximate split property then follows from the same proof as given in~\cite[Thm. 5.2]{toricendo}.

Another way to prove the approximate split property is to explicitly construct a unitary that as in~\cite{haagdtoric}. We do not attempt a proof along these lines here, although we believe that using the techniques developed above for the proof of Haag duality, the proof carries over to the present situation without much changes. Indeed, the main idea behind the proof is to remove some of the ambiguity in the description of a vector in the form $F_1 \cdots F_n \Omega$ due to the invariance of states under ribbon deformations. This can be done using the same techniques as employed above. This explicit construction can be helpful in the calculation of the cone index in concrete examples~\cite{klindex}, but for our present purposes it is not necessary.

\section{Sector theory for abelian models}
As an application of Haag duality we outline the sector theory for abelian groups $G$, in the spirit of the Doplicher-Haag-Roberts programme in algebraic QFT~\cite{MR0297259,MR0334742}. The goal here is to retrieve all properties of the superselection sectors (or charges) in the theory, from a few basic principles. We will construct equivalence classes of such sectors for quantum double models for abelian groups $G$, and show explicitly how one can obtain the fusion and braiding rules. The techniques that we will use here were developed in~\cite{toricendo}, which essentially deals with the case $G = \mathbb{Z}_2$. The main ideas are the same in the case of general abelian $G$, hence we will focus here on those steps that are different.

The goal is to characterise ``single charge'' representations. These representations describe how the observables of the system change in the presence of a \emph{single} charge (or quasi-particle excitation) in the background. The different superselection sectors or charges correspond to equivalence classes of irreducible representations of $\alg{A}$~\cite{MR1405610}. This implies that vector states in inequivalent ``charged'' representations can not be coherently superposed. Alternatively one can see that by local operations one cannot transform a vector state in one such irreducible representation into a vector state of another (inequivalent) irreducible representation. Physically this means that one cannot change the total ``charge'' of the system with local operations. This is exemplified in the quantum double model by the property that the ribbon operators always create a pair of \emph{conjugate} charges, hence they do not change the \emph{total} charge of the system.

There are very many equivalence classes of irreducible representations, most of which do not carry any reasonable physical interpretation. It is therefore necessary to restrict the representations of interest. Recall that in the class of models that we are interested in, excitations or charges can be obtained from the ground state by applying a ribbon operator. Note that this always gives us a \emph{pair} of excitations if the model is defined on the plane. The idea is then to move one end of the ribbon (or physically, one of the charges) to infinity. For a related construction of charged states in $\mathbb{Z}_N$ Higgs models, see for example~\cite{MR1341694,MR728449}. The charge at the fixed endpoint can only be detected by measuring a ``Wilson loop'' that encloses the charge. Hence if we disallow operators that form a loop around the charge, it cannot be detected and the state will look like the ground state for such measurements.

What does this mean for the corresponding representations, obtained via the GNS construction for example? One can choose \emph{any} cone, and restrict to measurements outside such a cone. By the argument above this should look like the ground state representation. We therefore restrict to those representations that satisfy (c.f.~\cite{MR660538,toricendo})
\begin{equation}
	\pi_0 \upharpoonright \alg{A}(\Lambda^c) \cong \pi \upharpoonright \alg{A}(\Lambda^c)
	\label{eq:coneselect}
\end{equation}
for any cone $\Lambda$. That is, the representation $\pi$ is unitary equivalent to the ground state representation, but only when one restricts to observables \emph{outside} a cone. Equation~\eqref{eq:coneselect} is called a \emph{selection criterion}. The construction of such representations that we will outline below will make clear why this is a physically reasonable criterion. We stress that equation~\eqref{eq:coneselect} should hold for \emph{all} cones $\Lambda$ (where the unitary setting up the equivalence may depend on the cone).

By finding all representations that satisfy this criterion one finds a list of all charges that the system supports. But one can recover much more structure, and this is the point where Haag duality comes in: using Haag duality we can instead look at maps of $\alg{A}$ into a slightly bigger algebra, and in fact these maps can be extended to endomorphisms of this bigger algebra. To see this, fix a cone $\Lambda$ and let $V$ be the unitary such that $\pi_0(A) = V \pi(A) V^*$ for all $A \in \alg{A}(\Lambda^c)$. Then define $\alpha(A) = V \pi(A) V^*$ for all $A \in \alg{A}$. Then we have that, for $A \in \alg{A}(\Lambda)$ and $B \in \alg{A}(\Lambda^c)$,
\[
	\pi_0(B) \alpha(A) = V \pi(BA) V^* = V \pi(AB) V^* = \alpha(A) \pi_0(B).
\]
Hence by Haag duality it follows that $\alpha(A) \in \mc{R}_\Lambda$. As mentioned, $\alpha$ can be extended to a proper endomorphism. That is, one can introduce an auxiliary algebra $\alg{A}^{\Lambda_a}$ (where $\Lambda_a$ is a fixed cone), such that the map $\rho$ can be extended to an endomorphism of $\alg{A}^{\Lambda_a}$~\cite{MR660538,toricendo}. This is mainly a technical issue which we will large suppress here. In the explicit construction of such maps $\alpha$ below, it turns out that we can even restrict to \emph{automorphisms} of $\alg{A}$, although for the construction of braiding operators the extension to the auxiliary algebra is necessary. In addition, with Haag duality it follows that all the results that we show for these automorphisms are true for \emph{any} representative in the same equivalence class, even if it cannot be restricted to an automorphism of $\alg{A}$.

The advantage of using automorphisms or endomorphisms is that these can be composed, unlike representations. That is, we can define $\alpha \otimes \beta := \alpha \circ \beta$. The interpretation is that we first add a charge $\beta$, then a charge $\alpha$. In addition, if $S$ is an intertwiner from $\alpha_1$ to $\alpha_2$, meaning $S \alpha_1(A) = \alpha_2(A) S$ for all $A \in \alg{A}$, and $T$ is an intertwiner from $\beta_1 \to \beta_2$, it follows that $S \otimes T := S \alpha_1(T)$ is an intertwiner from $\alpha_1 \otimes \beta_1$ to $\alpha_2 \otimes \beta_2$. Using Haag duality and the extension of $\alpha_1$ to the auxiliary algebra one can show that this is well-defined. This makes the category of localised and transportable (which we will discuss below) endomorphisms, with as morphisms the intertwiners. Studying the superselection sectors is then studying the properties of this category. In this case this amounts to showing that it is in fact the representation category of the quantum double of the group $G$. This is a \emph{modular tensor category}~\cite{MR1797619}, as is appropriate for applications to quantum computing~\cite{MR2200691,Wang}. However note that we only consider abelian models at the moment, which from a quantum computation point of view are less interesting. We comment briefly on this point at the end of this paper.

\subsection{Construction of irreducible sectors}\label{sec:irred_secs}
The first task is to construct different equivalence classes of representations satisfying the selection criterion. We already mentioned that the ribbon operators create a pair of excitations. We will use this fact to first construct ``charged states'', from which the representations can be obtained straightforwardly. As expected, to each element $c \in G$ and irreducible representation $\chi$ of $G$ (that is, a character), we can associate an equivalence class of representations. To this end, fix a cone $\Lambda$ and consider a semi-infinite ribbon $\rho$ inside $\Lambda$. That is, one end of $\rho$ is fixed, the other end is thought of to be sent to infinity. The ribbon consisting of the first $n$ triangles will be denoted by $\rho_n$. We associate an endomorphism (in fact, since the model is abelian this will be an automorphism) to the each pair $(\chi,c)$ and semi-infinite ribbon $\rho$. In the next sub-section we will show that the choice of ribbon is not important, in the sense that another choice will lead to a unitarily equivalent automorphism.

The operators $F^{\chi,c}_\rho$ defined in equation~\eqref{eq:abribbon} create a pair with charge $(\chi,c)$ at the start of $\rho$ and its conjugate at the other end. Therefore one can think of the following map as describing the effect of the presence of this pair on an observable $A$:
\[
\alpha_\rho^{\chi,c}(A) := (\operatorname{Ad} F_\rho^{\chi,c})(A) = F_\rho^{\chi,c} A (F_\rho^{\chi,c})^*.
\]
Note that since $F_\rho^{\chi,c}$ is unitary this map is an automorphism. The idea is to take the limit in which we extend $\rho$ to infinity. The next proposition shows that this indeed works.
\begin{proposition}
	\label{prop:posmap}
Let $\rho$ be a ribbon extending to infinity, and denote $\rho_n$ for the ribbon consisting of the first $n$ triangles of $\rho$. Suppose that $(\chi,c)$ is as above. Then for each $A \in \alg{A}$ the limit
\begin{equation}
	\label{eq:posmap}
	\alpha(A) := \lim_{n \to \infty} \alpha_{\rho_n}^{\chi,c}(A)
\end{equation}
converges in norm and this defines an automorphism $\alpha: \alg{A} \to \alg{A}$. This map has the following properties:
\begin{enumerate}[(i)]
	\item \label{it:trivial}$\alpha(A) = A$ for $A \in \alg{A}$ with $\supp(A)$ disjoint from $\rho$;
	\item \label{it:local} If $A \in \alg{A}_{loc}$, then $\alpha(A) = \alpha_{\widehat{\rho}}^{\chi,c}(A)$ for any ribbon $\widehat{\rho} \subset \rho$ such that $\supp(A) \cap \rho \subset \widehat{\rho}$.
\end{enumerate}
The last property says that it is enough to move one end of the ribbon far enough away so that it is disjoint from the support of a local observable $A$.
\end{proposition}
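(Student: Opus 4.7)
The plan is to prove convergence and property~(\ref{it:local}) simultaneously on local observables, where the sequence $\alpha_{\rho_n}^{\chi,c}(A)$ is in fact \emph{eventually constant}, and then to extend to all of $\alg{A}$ by density together with the isometry of each $\alpha_{\rho_n}^{\chi,c}$.

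For the local case, fix $A\in\alg{A}_{\mr{loc}}$ and a sub-ribbon $\widehat{\rho}\subset\rho$ containing $\supp(A)\cap\rho$. For $n$ large enough that $\widehat{\rho}\subset\rho_n$, I would write $\rho_n=\rho_1'\widehat{\rho}\rho_2'$ as a concatenation of ribbons, where $\rho_1'$ and $\rho_2'$ are disjoint from $\supp(A)$. Iterating Lemma~\ref{lem:abdecompose}, which is the precise place where abelianness enters, yields the factorisation
\begin{equation*}
F_{\rho_n}^{\chi,c}=F_{\rho_1'}^{\chi,c}\,F_{\widehat{\rho}}^{\chi,c}\,F_{\rho_2'}^{\chi,c}.
\end{equation*}
The operators $F_{\rho_i'}^{\chi,c}$ commute with $A$ and with $F_{\widehat{\rho}}^{\chi,c}A(F_{\widehat{\rho}}^{\chi,c})^*$ by locality, and combined with their unitarity this gives $\alpha_{\rho_n}^{\chi,c}(A)=\alpha_{\widehat{\rho}}^{\chi,c}(A)$. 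Thus the sequence is eventually constant, proving~(\ref{it:local}); property~(\ref{it:trivial}) is then the special case where $\widehat{\rho}$ can be taken to be the trivial ribbon $\eps$ so that $F_{\widehat{\rho}}^{\chi,c}=I$.

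To extend to $\alg{A}$, note that each $\alpha_{\rho_n}^{\chi,c}$ is a $*$-automorphism and hence an isometry. Given $A\in\alg{A}$ and $\vareps>0$, pick $A_\vareps\in\alg{A}_{\mr{loc}}$ with $\norm{A-A_\vareps}<\vareps/3$; a standard $\vareps/3$-argument combined with the stabilisation on $A_\vareps$ shows that $\alpha_{\rho_n}^{\chi,c}(A)$ is norm-Cauchy. Its limit $\alpha(A)$ lies in $\alg{A}$, and being a pointwise norm limit of $*$-homomorphisms, $\alpha$ is itself a $*$-endomorphism of $\alg{A}$. Bijectivity follows by defining $\beta(A):=\lim_{n\to\infty}\alpha_{\rho_n}^{\ovl\chi,\ovl c}(A)$ and using Lemma~\ref{lem:ab_tool_3} (namely $(F_\rho^{\chi,c})^*=F_\rho^{\ovl\chi,\ovl c}$) to see that $\alpha_{\rho_n}^{\ovl\chi,\ovl c}\circ\alpha_{\rho_n}^{\chi,c}=\id$ for every $n$, so $\beta\circ\alpha=\id=\alpha\circ\beta$ after passing to the limit.

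The only substantive step is the factorisation of $F_{\rho_n}^{\chi,c}$ used in the local case: Lemma~\ref{lem:abdecompose} grants this only because $\chi$ is one-dimensional. For non-abelian $G$ the ribbon decomposition rule~\eqref{eq:ribdecompose} couples the two group labels via a sum over intermediate elements and the factorisation fails, so a substantially different argument would be required there.
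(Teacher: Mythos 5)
Your proposal is correct and follows essentially the same route as the paper: eventual constancy of $\alpha_{\rho_n}^{\chi,c}(A)$ on local observables via the abelian factorisation of Lemma~\ref{lem:abdecompose} together with locality and unitarity, extension to $\alg{A}$ by density and boundedness, and invertibility via the conjugate pair $(\ovl{\chi},\ovl{c})$. The only (cosmetic) difference is that you split $\rho_n$ into three pieces around an arbitrary sub-ribbon $\widehat{\rho}$ to get property~(\ref{it:local}) directly, whereas the paper splits off only the initial segment $\rho_N$; both arguments are the same in substance.
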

\begin{proof}
	Let $A \in \alg{A}$ be a local operator. Then since $\rho$ goes to infinity, there is some $N$ such that $\supp(A) \cap (\rho \setminus \rho_n) = \emptyset$ for all $n > N$. In addition, from Lemma~\ref{lem:abdecompose} it follows that $F^{\chi,c}_{\rho_n} = F^{\chi,c}_{\rho_N} F^{\chi,c}_{\rho_n \setminus \rho_N}$. Because of locality and because $F^{\chi,c}_{\rho_N\setminus\rho_n}$ is unitary, it is clear that the limit converges in the operator norm, since the sequence $\alpha_{\rho_n}^{\chi,c}(A)$ is eventually constant. Note that this is essentially property~(\ref{it:local}). The maps are clearly bounded, hence by continuity they can be extended to a map $\alpha$ of $\alg{A}$. An easy check along the lines above shows that $\alpha(AB) = \alpha(A) \alpha(B)$ and $\alpha(A^*) = \alpha(A)^*$.

	We still have to show that $\alpha$ is an automorphism. The easiest way to do this is by constructing an inverse. Consider $(\overline{\chi}, \inv{c})$ where $\overline{\chi}$ is the complex conjugate of the character $\chi$, which again is a character. A simple calculation shows that $F^{\chi,c}_\rho F^{\overline{\chi},\inv{c}}_\rho = I$, and the same holds with the order reversed. It follows that $\overline{\alpha}$ which is defined in the same way as $\alpha$, but with the pair $(\overline{\chi}, \inv{c})$ satisfies $\overline{\alpha} \circ \alpha(A) = \alpha \circ \overline{\alpha}(A) = A$ for all local $A$, hence it is the inverse of $\alpha$.

Property (\ref{it:trivial}) immediately follows from locality.
\end{proof}

As we will see later the automorphisms constructed above give us representatives of the equivalence classes of representations satisfying the selection criterion. Anticipating this, we will also write $\alpha^{\chi,c}$ for an automorphism defined in such a way, or even $\alpha_\rho^{\chi,c}$ if we want to emphasize the ribbon to infinity. Note that each pair $(\chi,c)$ gives rise to many different automorphisms, since one can choose many different ribbons. If the ribbon is not important, we sometimes refer to any representative of this class of automorphisms by $(\chi,c)$. The following proposition shows that the automorphisms associated to different pairs $(\chi,c)$ belong to different superselection sectors, as expected. The idea behind the proof is that one can always detect the total charge in any finite region by pulling a charge and its conjugate from the vacuum, moving one charge around the region, and fusing again.
\begin{proposition}
	\label{prop:sectors}
If $(\sigma,c) \neq (\chi, d)$ then the corresponding localised automorphisms belong to different superselection sectors.
\end{proposition}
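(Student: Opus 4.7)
The strategy is to reduce the claim to showing that for $(\chi, c) \neq (\mathrm{id}, e)$ the automorphism $\alpha^{\chi, c}$ is not implementable by a unitary on $\mc{H}_0$. The reduction uses that in the abelian setting automorphism composition corresponds to pointwise multiplication of charge labels: using Lemma~\ref{lem:abdecompose} together with $F^{\chi_1, c}_\rho F^{\chi_2, d}_\rho = F^{\chi_1 \chi_2, cd}_\rho$ from the remark after equation~\eqref{eq:abribbon}, one computes $\alpha^{\chi_1, c_1} \circ \alpha^{\chi_2, c_2} = \alpha^{\chi_1 \chi_2, c_1 c_2}$. Combined with the standard fact that $\pi_0 \circ \alpha_1 \cong \pi_0 \circ \alpha_2$ holds if and only if $\alpha_2 \circ \alpha_1^{-1}$ is unitarily implementable on $\mc{H}_0$, this reduces the proposition to showing that $\alpha^{\chi, c}$, with $(\chi, c) := (\chi_2 \bar\chi_1, c_2 \bar c_1) \neq (\mathrm{id}, e)$, admits no implementing unitary on $\mc{H}_0$.

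Suppose for contradiction that $V \in \mc{B}(\mc{H}_0)$ is such a unitary, so $V A V^* = \alpha^{\chi, c}(A)$ for all $A \in \pi_0(\alg{A})$. The key computation is a braiding argument. Let $\gamma$ be a closed ribbon based at some site that encircles $s_0 := \partial_0 \rho$ exactly once and crosses $\rho$ transversally. Once $N$ is large enough that $\rho_N$ extends past $\gamma$, Proposition~\ref{prop:posmap}(ii) and the braiding relation~\eqref{eq:irrepbraid} give
\[
\alpha^{\chi, c}(F^{\psi, b}_\gamma) = \chi(b)\,\psi(c)\, F^{\psi, b}_\gamma
\]
for every character $\psi$ of $G$ and $b \in G$. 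By locality, a closed ribbon $\gamma'$ at the same site that does \emph{not} encircle $s_0$ satisfies $\alpha^{\chi, c}(F^{\psi, b}_{\gamma'}) = F^{\psi, b}_{\gamma'}$. Applying the two intertwining identities to $\Omega$, and using $F^{\psi, b}_\gamma \Omega = F^{\psi, b}_{\gamma'} \Omega$ from Lemma~\ref{lem:indeprib} (the two closed ribbons have identical endpoints), one arrives at the key relation
\[
\bigl(F^{\psi, b}_{\gamma'} - \chi(b)\psi(c)\, F^{\psi, b}_\gamma\bigr) V\Omega = 0 \qquad \text{for all pairs } (\psi, b). \qquad (\ast)
\]
Note that $V \neq 0$ forces $V\Omega \neq 0$ by cyclicity of $\Omega$ for $\pi_0(\alg{A})$.

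The contradiction is extracted from $(\ast)$ by pairing with suitably chosen test vectors. Since $\omega_0(V^* \cdot V) = \omega_0 \circ \alpha^{\bar\chi, \bar c}$ evaluates to $1$ on the charge projector $D^{\chi, c}_{s_0}$ of equation~\eqref{eq:chapro} (by the detection formula $D_s^{\xi, d} F^{\chi, c}_\eta \Omega = \delta_{\xi, \chi} \delta_{d, c} F^{\chi, c}_\eta \Omega$), the vector $V\Omega$ has non-vanishing overlap with $F^{\chi, c}_\eta \Omega$ for a suitable open ribbon $\eta$ starting at $s_0$. Pairing $(\ast)$ with such vectors and evaluating both sides via the commutation relations from Section~\ref{sec:rib_comm} forces $\chi(b) \psi(c) = 1$ for every pair $(\psi, b) \in \widehat{G} \times G$. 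By Pontryagin duality on the finite abelian group $G \times \widehat{G}$, varying $b$ with $\psi = \mathrm{id}$ yields $\chi = \mathrm{id}$, and varying $\psi$ with $b = e$ yields $c = e$, contradicting our assumption $(\chi, c) \neq (\mathrm{id}, e)$. The main obstacle is the last step, namely showing rigorously that $V\Omega$ has the non-vanishing overlaps needed to extract the phase equation, which follows the DHR-style techniques from~\cite{toricendo} adapted to the present abelian quantum double setting.
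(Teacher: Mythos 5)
Your overall strategy---reduce to showing that $\alpha^{\chi,c}$ with $(\chi,c)\neq(\mr{id},e)$ admits no implementing unitary on $\mc H_0$, and then exploit the braiding of $F^{\chi,c}_{\rho_N}$ with closed loop operators---is a legitimate alternative to the paper's argument (which instead compares the states $\omega_0\circ\alpha^{\sigma,c}$ and $\omega_0\circ\alpha^{\chi,d}$ on the charge projectors $K^{\sigma,c}_{\widehat\rho}$ of loops placed outside an arbitrary finite region and invokes Corollary~2.6.11 of Bratteli--Robinson). However, your final step does not work as described, and the gap is essential. For a \emph{fixed} loop $\gamma$ around $s_0$, the relation $(\ast)$ is perfectly consistent and is satisfied by honest vectors in $\mc H_0$: take $\eta=\rho_N$ with $N$ large enough that $\rho_N$ crosses $\gamma$ once. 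Then $F^{\psi,b}_{\gamma'}F^{\chi,c}_{\rho_N}\Omega=\lambda_{\psi,b}F^{\chi,c}_{\rho_N}\Omega$ while $\chi(b)\psi(c)F^{\psi,b}_{\gamma}F^{\chi,c}_{\rho_N}\Omega=\chi(b)\psi(c)\,\overline{\chi(b)\psi(c)}\,\lambda_{\psi,b}F^{\chi,c}_{\rho_N}\Omega$, because commuting $F^{\psi,b}_\gamma$ past $F^{\chi,c}_{\rho_N}$ produces exactly the conjugate of the braiding phase you used to compute $\alpha^{\chi,c}(F^{\psi,b}_\gamma)$. So $F^{\chi,c}_{\rho_N}\Omega$ solves $(\ast)$, and no choice of test vector can force $\chi(b)\psi(c)=1$ from $(\ast)$ alone: whenever the test vector has nonzero overlap with the charge-$(\chi,c)$ subspace at $s_0$ (your candidates $F^{\chi,c}_\eta\Omega$ with $\eta$ crossing $\gamma$), the braiding phases cancel and you obtain a tautology; whenever they do not cancel (e.g.\ $\eta$ entirely inside $\gamma$, so that $F^{\chi,c}_\eta\Omega$ carries total charge $(\mr{id},e)$ inside $\gamma$), the overlap with $V\Omega$ vanishes and you obtain $0=0$.

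The missing idea is that the contradiction lives ``at infinity'', not at any single loop. One must impose $(\ast)$ (equivalently, $K^{\chi,c}_{\gamma_n}V\Omega=V\Omega$) for a sequence of loops $\gamma_n$ of unbounded radius around $s_0$, and combine this with the fact that for any \emph{local} $A$ and $n$ large enough $K^{\chi,c}_{\gamma_n}A\Omega=AK^{\chi,c}_{\gamma_n}\Omega=\delta_{\chi,\mr{id}}\delta_{c,e}A\Omega$; by density of $\pi_0(\alg A)_{\mathrm{loc}}\Omega$ and uniform boundedness, $K^{\chi,c}_{\gamma_n}\to 0$ weakly for $(\chi,c)\neq(\mr{id},e)$, contradicting $\langle V\Omega,K^{\chi,c}_{\gamma_n}V\Omega\rangle=1$. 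This limit-over-growing-loops step is precisely what the paper's appeal to operators localised outside an arbitrary finite region and to Corollary~2.6.11 of~\cite{MR887100} encodes, and it cannot be replaced by a computation at a single fixed $\gamma$. (Two smaller points: your identification $\alpha^{\chi_2,c_2}\circ(\alpha^{\chi_1,c_1})^{-1}=\alpha^{\chi_2\bar\chi_1,c_2\bar c_1}$ requires both automorphisms to be built on the same semi-infinite ribbon, which you should fix explicitly since transportability is only established afterwards; and the equality $F^{\psi,b}_\gamma\Omega=F^{\psi,b}_{\gamma'}\Omega$ for the two closed loops is fine here, but it holds because both are multiples of $\Omega$ in the charge-free ground state, not by a naive application of Lemma~\ref{lem:indeprib} to ribbons of different winding.)
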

\begin{proof}
	Write $(\pi_0, \mc{H}, \Omega)$ for the GNS triple corresponding to the ground state $\omega_0$. Note that since $\omega_0$ is pure it follows that $\pi_0$ is irreducible. Because $\alpha^{\sigma,c}$ is an automorphism, $\pi_0 \circ \alpha^{\sigma,c}$ is also irreducible and $(\pi_0 \circ \alpha^{\sigma,c}, \mc{H}, \Omega)$ is a GNS triple for the state $\omega_0 \circ \alpha^{\sigma,c}$. A similar statement is of course true for the state $\omega_0 \circ \alpha^{\chi, d}$. To prove the claim it therefore suffices to show that the two states can be distinguished by an operator localised outside some arbitrary finite region $\mc{O}$ by Corollary 2.6.11 of~\cite{MR887100}. This is true because quasi-equivalent irreducible representations are unitarily equivalent.

	Now let $\mc{O}$ be any finite set. Then we can find a closed rotationally invariant ribbon $\widehat{\rho}$ encircling the region $\mc{O}$ and such that the endpoint of the ribbon $\rho$ that extends to infinity lies in the bounded area encircled by $\widehat{\rho}$. To this ribbon we associate the projection $K^{\sigma c}$, projecting onto the subspace of charge $(\sigma,c)$ in the region enclosed by $\widehat{\rho}$. It is defined as follows (c.f. equation~(B.75) of \cite{PhysRevB.78.115421}):
\[
K^{\sigma c}_{\widehat{\rho}} = \frac{1}{|G|} \sum_{g \in G} \overline{\sigma}(g) F^{g,c}_{\widehat{\rho}}.
\]
If $(\sigma,c) \neq (\chi, d)$ it follows that (by the discussion in Appendix B.9 of~\cite{PhysRevB.78.115421})
\[
\left| \omega_0 \circ \alpha^{\sigma,c}(K^{\sigma,c}_{\widehat{\rho}}) - \omega_0 \circ \alpha^{\chi,d}(K^{\sigma,c}_{\widehat{\rho}}) \right| = |1-0| \geq \frac{1}{2} \left\| K^{\sigma,c}_{\widehat{\rho}} \right\|.
\]
This completes the proof.
\end{proof}
In the next section we will show that the automorphisms are transportable, which will imply that the automorphisms defined on different ribbons, but with respect to the same pair $(\sigma,c)$ belong to the same sector.

\subsection{Transportability}\label{sec:transport}
Suppose that we have an automorphism $\alpha$ as defined above such that $\alpha$ is localised in a cone $\Lambda$. Then $\alpha$ is said to be transportable if for any cone $\Lambda'$, there is an automorphism $\beta$ localised in $\Lambda'$ such that $\alpha$ is unitarily equivalent to $\beta$. This unitary does not need to be in $\alg{A}$ (and generally also is not), but by Haag duality it follows that if $\widehat{\Lambda}$ is a cone containing both $\Lambda$ and $\Lambda'$, then any unitary $V$ setting up such an equivalence is contained in $\mc{R}_{\widehat{\Lambda}}$. Such a unitary will also be called a \emph{charge transporter}. We first show that the automorphisms are indeed transportable, and then give an explicit sequence that converges in the weak operator topology to a charge transporter. The proof largely follows the proof in the toric code case (up to some subtleties)~\cite{toricendo}, but since we need the construction to calculate the statistics, we recall the main line of argument.

Fix a pair $(\chi,c)$ and two semi-finite ribbons $\rho_i$, $i=1,2$, with corresponding automorphisms $\alpha_i$. First consider the case that both ribbons start at the same site. With Lemma~\ref{lem:indeprib} one can show that the states $\omega \circ \alpha_i$ are equal, by first showing equality on the dense subset of local observables. On the other hand, as was remarked in the proof of Proposition~\ref{prop:sectors}, both representations $\pi \circ \alpha_i$ are GNS representations for this state. Hence by the uniqueness of the GNS representation, the two are unitarily equivalent. Note that in addition we may assume that such a unitary intertwiner $V$ satisfies $V\Omega = \Omega$. Requiring this will fix an irrelevant phase.

Suppose now that the two ribbons do not start at the same site and that we consider a charge $(\chi,c)$. Then we get corresponding automorphisms $\alpha_1$ and $\alpha_2$. We can then extend the ribbon $\rho_1$ by a ribbon $\rho$, such that $\rho$ and $\rho_2$ start at the same site. This gives us an automorphism $\alpha_{\rho \rho_1}$, defined in terms of the extended ribbon, that is unitarily equivalent to $\alpha_2$, by the argument in the previous paragraph. The claim follows by noting that $\alpha_1$ and $\alpha_{\rho\rho_1}$ are unitarily equivalent. This can be seen because $F_{\rho}^{\overline{\chi},\inv{c}}$ is a unitary operator, and
\[
F_{\rho}^{\overline{\chi}, \inv{c}} \alpha_{\rho\rho_1}(A) (F_{\rho}^{\overline{\chi}, \inv{c}})^* = \alpha_1(A).
\]
This can be seen by noting that if a ribbon $\rho$ coincides with the first part of a ribbon $\widehat{\rho}$, then $F^{\chi,c}_\rho F^{\overline{\chi}, \inv{c}}_{\widehat{\rho}} = F^{\overline{\chi},\inv{c}}_{\widehat{\rho} \setminus \rho}$. This equality can be easily verified using equation~\eqref{eq:ribdecompose}.

For the calculation of the braiding rules of the anyons, which we will outline below, it is useful to have a more explicit description of the intertwiners setting up the equivalence. To this end we construct a sequence $V_n$ of unitaries converging to $V$ in the weak operator topology. For simplicity we again assume that the two semi-infinite ribbons $\rho_1$ and $\rho_2$ start at the same site. With $\rho^n_i$ we mean the finite ribbon consisting of the first $n$ triangles of the ribbon $\rho_i$. For each $n$, choose a ribbon $\widehat{\rho}_n$ from the site at the end of $\rho_1^n$ to the site at the end of $\rho_2^n$, in such a way that $\rho_1^n \widehat{\rho}_n$ is a ribbon and the distance of $\widehat{\rho}_n$ to the (fixed) starting point of $\rho_i$ goes to infinity as $n \to \infty$. This ensures that the ribbons $\widehat{\rho}_n$ will avoid any finite subset of the system when $n$ is large enough. By Lemma~\ref{prop:posmap} it also follows that for $n$ large enough, $\alpha_1^n(A) = F_{\rho_n}^{\chi, c} A F_{\rho_n}^{\overline{\chi}, \inv{c}}$ for $A$ strictly local.

Define $V_n = F_{\rho_2^n}^{\chi, c} F_{\rho_1^n \widehat{\rho}_n}^{\overline{\chi}, \inv{c}}$. The claim is that the sequence $V_n$ converges to $V$. Using the remark above about strictly local observables, a straightforward calculation shows that $V_n \alpha_1(A) = \alpha_2(A) V_n$ if $A$ is local and $n$ is big enough. Another remark is that using the techniques that we employed in the proof of Haag duality, it follows that $V_n \Omega = \Omega$. To see this, note that $F_{\rho_n'}^{\chi,c}$ and $F_{\rho_n}^{\overline{\chi},\inv{c}}$ create opposite charges at the endpoints of the ribbons. Since all charges are abelian, these opposite charges fuse to the vacuum. This can be seen explicitly by using that $F^{h,g}_{\rho} \Omega$ only depends on the endpoints of $\rho$, hence we can use this to change the path $\rho_n'$ to $\rho_n$ when acting on the ground state. Since $F^{\chi,c}_{\rho_n} F^{\overline{\chi},\inv{c}}_{\rho_n} = F^{\operatorname{id},e}_{\rho_n}$, the claim follows.

With these observations, we find for $A$ and $B$ strictly local operators and $n$ large enough, that
\[
\begin{split}
	\langle \alpha_1(A) &\Omega, V \alpha_1(B) \Omega \rangle = \langle \alpha_1(A) \Omega, \alpha_2(B) V \Omega \rangle = \\
		&\langle \alpha_1(A) \Omega, \alpha_1(B) V_n \Omega \rangle = \langle \alpha_1(A) \Omega, V_n \alpha_1(B) \Omega \rangle.
\end{split}
\]
Since $\alpha_1$ is an automorphism, it follows that the set $\alpha_1(A)$ for local operators $A$ is dense in the Hilbert space. Because the sequence $V_n$ is uniformly bounded, it follows that $V_n$ indeed converges to $V$. Note that if $\Lambda$ is a cone containing both ribbons $\rho_1$ and $\rho_2$ we can choose $V_n \in \alg{A}(\Lambda)$ and consequently $V \in \alg{A}(\Lambda)'' = \mc{R}_{\Lambda}$, as also follows from Haag duality.

The discussion so far can be summarised as the following theorem.
\begin{theorem}
	Let $G$ be a finite abelian group and let $\pi_0$ be the ground state representation of the quantum double model for $G$. Then for each pair $(\chi, c)$ where $\chi$ is a character of $G$ and $c \in G$, there is an equivalence class of representations satisfying the selection criterion~\eqref{eq:coneselect}. The representation $\pi_0 \circ \alpha$, where $\alpha$ is localised in some cone $\Lambda$ and constructed as above, is a representative of such an equivalence class. The equivalence classes corresponding to distinct pairs $(\chi,c)$ are disjoint.
\end{theorem}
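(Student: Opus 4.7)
The plan is to recognise that this theorem is a packaging of results already obtained in the preceding subsections. It comprises three assertions: (a) $\pi_0 \circ \alpha$ satisfies the selection criterion~\eqref{eq:coneselect} for every cone; (b) different choices of semi-infinite ribbon for the same pair $(\chi,c)$ yield unitarily equivalent representations, so the equivalence class is unambiguously associated with the pair; and (c) distinct pairs give rise to disjoint equivalence classes.

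For (a), I would fix an arbitrary cone $\Lambda'$. Using the connectedness properties built into the definition of a cone, choose a semi-infinite ribbon $\rho' \subset \Lambda'$ and set $\alpha' := \alpha_{\rho'}^{\chi,c}$. The transportability discussion of Section~\ref{sec:transport}, by constructing a weak-operator limit of the explicit finite-ribbon charge transporters $V_n$, produces a unitary $V$ for which $V \pi_0(\alpha(A)) V^* = \pi_0(\alpha'(A))$ for every $A \in \alg{A}$. Property~\ref{it:trivial} of Proposition~\ref{prop:posmap}, together with norm continuity of the automorphism $\alpha'$, gives $\alpha'(A) = A$ whenever $A \in \alg{A}(\Lambda'^c)$. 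Hence $V \pi_0(\alpha(A)) V^* = \pi_0(A)$ on $\alg{A}(\Lambda'^c)$, which is exactly~\eqref{eq:coneselect}. Assertion (b) is the same transportability argument applied to two semi-infinite ribbons carrying the same label $(\chi,c)$ but otherwise chosen independently; the case in which the ribbons do not share a starting site is handled in Section~\ref{sec:transport} by first extending one ribbon so that both emanate from a common site. Assertion (c) is Proposition~\ref{prop:sectors}.

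The substantive technical work has already been carried out in Sections~\ref{sec:irred_secs} and~\ref{sec:transport}. The existence of the charge transporters used above is really the heart of the matter; it rests on the facts $V_n \Omega = \Omega$ (because the pair of opposite charges created by $F_{\rho_2^n}^{\chi,c}$ and $F_{\rho_1^n \widehat{\rho}_n}^{\bar\chi, \bar c}$ fuses to the vacuum via Lemma~\ref{lem:abdecompose} and Lemma~\ref{lem:indeprib}) and on the density of $\{\alpha_1(A)\Omega : A \text{ local}\}$. Given these, the present theorem poses no real obstacle; assembling the ingredients is the whole proof.
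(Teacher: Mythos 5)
Your proposal is correct and matches the paper's treatment: the theorem is stated there as a summary of the preceding discussion, with the selection criterion following from transportability plus Proposition~\ref{prop:posmap}(\ref{it:trivial}), ribbon-independence from Section~\ref{sec:transport}, and disjointness from Proposition~\ref{prop:sectors}. Your assembly of these ingredients, including the role of $V_n\Omega=\Omega$ and the density of $\{\alpha_1(A)\Omega\}$, is exactly the argument the paper intends.
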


\subsection{Fusion and statistics}
Fusion rules tell us what happens if we combine (``fuse'') to charges. More precisely, they give a decomposition of the tensor product $\alpha \otimes \beta$ of irreducible endomorphisms as a direct sum of such endomorphisms. The fusion rules are independent of the chosen representatives. Hence it suffices to fix a cone $\Lambda$ and a path $\rho$ to infinity inside this cone. We can then consider automorphisms $\alpha^{\chi,c}$ defined as above, acting along the ribbon $\rho$. Note that by Proposition~\ref{prop:posmap}(\ref{it:local}), for local observables it is enough to consider only finite parts $\rho_n$ of the path $\rho$. Note that for any finite ribbon $\xi$ we have $F^{\chi,c}_\rho F^{\sigma,d}_\rho = F^{\chi \sigma, cd}_\rho$ as was remarked after equation~\eqref{eq:abribbon}, where $\chi \sigma$ is the character obtained by pointwise multiplication. Hence we find the fusion rules
\[
\alpha^{\chi,c} \otimes \alpha^{\sigma, d} \cong \alpha^{\chi\sigma,cd}.
\]
Note that in particular we see that the conjugate charge of $(\chi,c)$ is $(\overline{\chi}, \overline{c})$.

To study the statistics we have to relate $\alpha \otimes \beta$ to $\beta \otimes \alpha$. For the construction we need to be able to talk about the relative position of two charges localised in cones. That is, we want to say that one cone is to the left of the other one. This can be done unambiguously by fixing an auxiliary cone: for convenience one can take the cone $\Lambda_a$ briefly mentioned above. Then we can define a relation $\Lambda_1 < \Lambda_2$ for two disjoint cones (see~\cite{toricendo} for details). This singling out of a particular direction is analogous to the technique of puncturing the circle in, for example, conformal field theory. Alternatively one can cover the lattice by different ``charts'' as in~\cite{MR1104414}.

\begin{figure}
    \centering
    \includegraphics[width=.5\linewidth]{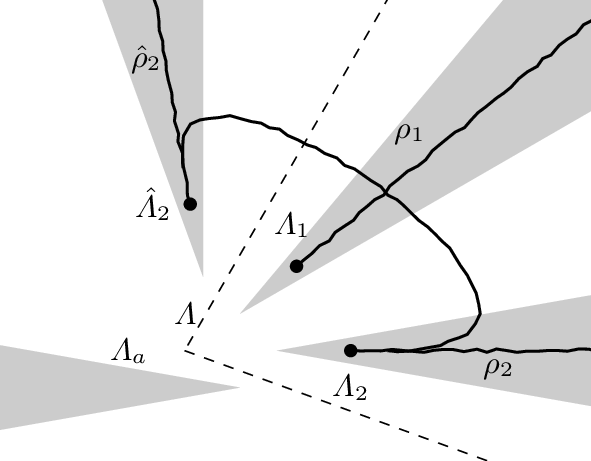}
	\caption{The choice of auxiliary cone $\Lambda_a$, as well as ribbons $\rho_1$ and $\rho_2$ that we use in the calculation of the braiding operators. The idea is to move the charge at the end of the ribbon $\rho_2$ to the end of the ribbon $\widehat{\rho}_2$, while the other charge stays in place.}
    \label{fig:braiding}
\end{figure}

Now suppose that we have two charges $\alpha$ and $\beta$ localised in cones $\Lambda_1$ and $\Lambda_2$. To construct a unitary $\varepsilon_{\alpha,\beta}$ intertwining  $\alpha \otimes \beta$ and $\beta \otimes \alpha$, first choose a cone $\widehat{\Lambda}_2$ to the \emph{left} of $\Lambda_1$ (and disjoint of it), see Fig.~\ref{fig:braiding}. Then there is an intertwiner  $V$ transporting the charge $\beta$ in $\Lambda_2$ to a charge $\widehat{\beta}$ in the cone $\widehat{\Lambda}_2$. Since $\widehat{\Lambda}_2$ and $\Lambda_1$ are disjoint, it follows by the localisation properties of the automorphisms that $\alpha \otimes \widehat{\beta} = \widehat{\beta} \otimes \alpha$. Finally, the charge $\widehat{\beta}$ can be transported back to the cone $\Lambda_2$. Note that the physical picture is precisely what one would think of as a braiding operation. This procedure leads to the following expression, which one can show depends only on the position of the cone $\widehat{\Lambda}_2$ relative to $\Lambda_1$, not on the specific choice of $V$ (c.f.~\cite{MR1016869}):
\[
	\varepsilon_{\alpha,\beta} = V^* \alpha(V).
\]
This unitary intertwines $\alpha \otimes \beta$ and $\beta \otimes \alpha$. One can show that it has all the properties a ``braiding'' should have (compare for example with~\cite{halvapp}).

Note that in the previous section we have constructed a sequence $V_n$ converging to $V$ in the weak operator topology. Since $\alpha$ can be extended to a weakly continuous map on the auxiliary algebra $\alg{A}^{\Lambda_a}$, we can calculate $\alpha(V) = \wlim_{n \to \infty} \alpha(V_n)$. We are interested here in the calculation of the modular matrix $S$, whose entries are in the present case given by $S_{\alpha,\beta} = \varepsilon_{\alpha,\beta} \circ \varepsilon_{\beta,\alpha}$. Note that because of irreducibility of $\alpha \otimes \beta$ this is an element of $\mathbb{C} I$ and hence can be identified with a scalar. It only depends on the equivalence classes of $\alpha$ and $\beta$, so that we can choose representatives in a convenient way. We do this as in Fig.~\ref{fig:braiding}: we choose two non-intersecting ribbons $\rho_i$ that can be localised in the same cone $\Lambda$. For the transported automorphisms we choose $\widehat{\rho}_1 = \rho_1$ and for $\widehat{\rho}_2$ a ribbon to the \emph{left} of the cone $\Lambda$, such that it is inside a cone $\widehat{\Lambda}_2$ that is disjoint from $\Lambda$. A sequence $V_n$ of charge transporters can then be constructed as in Section~\ref{sec:transport}, and it remains to calculate $\alpha(V_n)$.

This amounts to a straightforward application of the definitions. For convenience we can choose the ribbons connecting the $n$-th triangle of $\rho_2$ and $\widehat{\rho}_2$ in such a way that they cross the ribbon $\rho_1$ exactly once. Now note that for each $n > 0$, there is an integer $N(n)$ such that
\[
\alpha(V_n) = F_{(\rho_1)_{N(n)}}^{\chi_1,c} V_n (F_{(\rho_1)_{N(n)}}^{\chi_1,c})^*,
\]
by Proposition~\ref{prop:posmap}. Note that by construction $V_n$ is a product of two ribbon operators. Since the ribbon on which $V_n$ is defined and the ribbon $\rho_1$ cross exactly once, we can commute $V_n$ with the ribbon operator on the left of it in the expression above, at the expense of a phase according to equation~\eqref{eq:irrepbraid}. It follows that $\alpha(V) = \overline{\chi}_1(d) \overline{\chi}_2(c) V$ and hence $\varepsilon_{\alpha,\beta} = \overline{\chi}_1(d) \overline{\chi}_2(c) I$.

The operator $\varepsilon_{\beta,\alpha}$ can be found in the same way: we move the charge $\alpha$ to the left cone (and back). Since in this case the ribbons used in the construction of the appropriate intertwiner $W$ do not cross, it follows that $\varepsilon_{\beta,\alpha} = W^* \beta(W) = W^*W = I$. This gives us Verlinde's matrix $S$~\cite{MR954762}, whose entries are $S_{\alpha,\beta} = \varepsilon_{\alpha,\beta} \circ \varepsilon_{\beta,\alpha}$ in the special case that each sector is abelian (as is the case here). A more thorough discussion of $S$ in the context of the theory of superselection sectors can be found in~\cite{MR1147467,MR1128146}. In the end we obtain
\[
S_{(\chi_1, c),(\chi_2,d)} =  \overline{\chi_1}(d) \overline{\chi_2}(c).
\]
This is (up to a factor due to a different choice of normalization), precisely the matrix obtained in~\cite[Thm. 3.2.1]{MR1797619} for the representation category of the quantum double $\qdg$.

This is of course no coincidence. There is a correspondence between the superselection sectors constructed here and the finite dimensional representations of $\qdg$, seen as a Hopf algebra. It is well known that the irreducible representations of $\qdg$ are in one-one correspondence with pairs consisting of an equivalence class of $G$ and irreducible representations of the centraliser of a representative of this equivalence class (see for example~\cite{MR1797619,BeShoWha:2011}). In the present case of abelian groups this reduces to the pairs $(\chi,c)$. The fusion rules established above are precisely those obtained from the representation theory. With a little bit of work one can in fact show that the sector theory is completely determined by the representation theory of $\qdg$, where the $\qdg$-linear maps between finite dimensional representations correspond to intertwiners between the sectors constructed here. In the language of category theory, this can be phrased as stating that the category of localised endomorphisms and the category of finite dimensional representations of $\qdg$ are equivalent as braided fusion categories. With the help of the results above, the arguments are very similar to the toric code case~\cite{toricendo}, and hence we will not repeat them here. In any case, the upshot is that understanding the sector theory is the same as understanding the representation theory of $\qdg$ (a well studied subject), and that all physical properties of the excitations can be obtained by representation theory.

There is still a point that has not been answered, however. In principle, it may be the case that we have not constructed all sectors. That is, there may be irreducible representations that satisfy the selection criterion~\eqref{eq:coneselect}, but are not unitarily equivalent to one of the charged representations constructed above. The question if such additional charges exist can be answered by computing the Jones-Kosaki-Longo index for pairs of cones~\cite{klindex}. In essence one has to consider two disjoint cones $\Lambda_1 \cup \Lambda_2$ and the von Neumann algebra generated by the local observables in these cones, and in addition the algebra generated by the \emph{commutant} of everything in the \emph{complement} of the two cones. This algebra contains the von Neumann algebra generated by the observables inside the cones, but also charge transporters that move charges from one cone to the other. The Jones-Kosaki-Longo attaches a number to the relative size of these two algebras, and one can show that it is related to the quantum dimension of the charges. We expect that the proof given in the toric code case~\cite{klindex} can be extended to quantum double models for abelian $G$, by using the methods developed in this paper. The techniques used are very close to the proof of Haag duality given in Section~\ref{sec:haagd}, and therefore we do not attempt to give a full proof here. In any case, we expect that this index is equal to $|G|^2$, and that we have therefore found all charges in the model.

\section{The case of non-abelian groups}
Many of the results so far are only proven for abelian groups $G$. A natural question is if the same results hold for \emph{non-abelian} groups. Such models are particularly interesting because they have non-abelian anyons, for which the braiding operators are not just a phase, but in general give rise to higher dimensional representations of the braid group. In Kitaev's model this can be exploited to implement unitary operations (\emph{gates}) from which quantum circuits can be build. Such quantum circuits perform quantum computation tasks. Under suitable conditions on the group $G$ Kitaev's model is in fact \emph{universal}, meaning that in principle any quantum computation algorithm could be implemented on top of the model~\cite{PhysRevA.67.022315,PhysRevA.69.032306}.

We believe that these non-abelian models can be studied along the same lines as the abelian ones. From a technical point of view, however, the analysis is much more involved. The difficulties mainly stem from the fact that for non-abelian $G$ the quantum double $\qdg$ has higher dimensional irreducible representations. This has a few consequences. First of all, rather than a \emph{single} ribbon operator being associated with a certain irreducible representation (such as the $F^{\chi,c}_\rho$ we used above), one has to deal with ``multiplets'' of ribbon operators, see for example equation~(B.66) of~\cite{PhysRevB.78.115421}. When acting with the operators in such a multiplet on the ground state $\Omega$, one can span a finite dimensional vector space. The star and plaquette operators at one of the endpoints of the ribbon give a natural action of the quantum double on this finite dimensional vector space, which then transforms as an irreducible representation under this action.

The second point is related to the tensor product of two irreducible representations. In the abelian case such a tensor product was again irreducible and of the same form. This is no longer true in general in the non-abelian case. Indeed, there are tensor products of representations that are the direct sum of more than one irreducible representation. On the level of the ribbon operators this has, for example, the consequence that when we multiply two ribbon operators acting along the same ribbon, in general it is not of the form of a single ribbon operator any more. This naturally makes the analysis more complicated. In addition the interchange of two ribbon operators is more complicated than just introducing a phase. Nevertheless, the representation theory of the quantum double is well understood, so we expect that the main ideas in our proof can be transferred to the non-abelian case. In particular, one should be able to use this knowledge of the representation theory to study the commutation properties of the ribbon operators, which are essential in the proof of Haag duality.

The non-abelianness also makes it more difficult to explicitly construct representatives of the charged sectors: instead of automorphisms one has to deal with endomorphisms and we cannot just conjugate with the ribbon operators to define them. Instead, one way would be to use \emph{amplimorphisms}, which are nothing but morphisms $\rho : \alg{A} \to M_n(\alg{A})$, the $n$-by-$n$ matrices with entries in $\alg{A}$. Such methods have been employed to describe localised (in intervals) charges in quantum spin systems on the line~\cite{MR1463825,MR1234107}. Unlike in the case of finitely localised excitations, in the case of conelike localisation we expect to be able to obtain proper endomorphisms again. One way to do this is to note that the cone algebras are infinite factors. This allows us to find isometries $V_i$ ($i=1,\dots, n$) in the cone algebra whose ranges sum up to the identity projection. In this way we can identify $\mc{H}$ with $\oplus_{i=1}^n \mc{H}$, and obtain an identification of an amplimorphism $\rho$ as above with an endomorphism of the cone algebra. This should make it possible to carry over the well-known structure of the amplimorphisms to endomorphisms, and build up representatives of each sector and find the braiding operators. We hope to return to this issue in the future.

\vspace{\baselineskip}
\textbf{Acknowledgements:} LF is supported by the European Research Council (ERC) through the Discrete Quantum Simulator (DQSIM) project. PN is supported by the Dutch Organisation for Scientific Research (NWO) through a Rubicon grant and partly through the EU project QFTCMPS and the cluster of excellence EXC 201 Quantum Engineering and Space-Time Research.

\bibliographystyle{abbrv}
\bibliography{refs}

\end{document}